\newif\ifdraft \draftfalse
\newif\iffull \fulltrue
\makeatletter \@input{tex.flags} \makeatother
\definecolor{DarkGreen}{rgb}{0.1,0.5,0.1}
\definecolor{DarkRed}{rgb}{0.5,0.1,0.1}
\definecolor{DarkBlue}{rgb}{0.1,0.1,0.5}
\newcommand{\sw}[1]{\ifdraft \textcolor{blue}{[Steven: #1]}\fi}
\newcommand{\ar}[1]{\ifdraft \textcolor{red}{[Aaron: #1]}\fi}
\newcommand{\jr}[1]{\ifdraft \textcolor{brown}{[Jaikumar: #1]}\fi}
\newcommand\NN{\mathbb{N}}
\newcommand\RR{\mathbb{R}}
\newcommand\cA{\mathcal{A}}
\newcommand\cF{\mathcal{F}}
\newcommand\cM{\mathcal{M}}
\newcommand\cR{\mathcal{R}}
\newcommand\cL{\mathcal{L}}
\newcommand\cX{\mathcal{X}}
\newcommand\cS{\mathcal{S}}
\newcommand{\KL}{D_{\mathrm{KL}}}
\newcommand{\ui}{^{(i)}}
\newcommand{\ranG}{{\sf RanG}}
\newcommand{\score}{{\sf score}}
\newcommand{\stab}{{\sf Stab}}
\newcommand{\adm}{{\sf admit}}
\newcommand{\temp}{{\sf temp}}
\newcommand{\br}{{\sf BR\text{-}Sim}}
\newcommand{\counter}{{\sf Counter}}
\newcommand{\exflow}{{\sf ExtractPath}}
\newcommand{\reg}{{\sf ReC}}
\newcommand{\approxC}{{\sf ApproxCount}}
\newcommand{\exstream}{{\sf ExtractCount}}
\newcommand{\pritocol}{{\sf PriCoor}}
\renewcommand{\tilde}{\widetilde}
\newcommand{\ub}{^\bullet}
\DeclareMathOperator*{\Expectation}{\mathbb{E}}
\newcommand{\Ex}[2]{\Expectation_{#1}\left[#2\right]}
\newcommand{\eps}{\varepsilon}
\def\epsilon{\varepsilon}
\DeclareMathOperator{\OPT}{OPT}
\renewcommand{\hat}{\widehat}
\DeclareMathOperator*{\argmax}{\mathrm{argmax}}
\newcommand{\INDSTATE}[1][1]{\STATE\hspace{#1\algorithmicindent}}
\newtheorem*{theorem*}{Theorem}
\declaretheorem[
  name=Theorem,
  refname={theorem, theorems},
  Refname={Theorem, Theorems}]{theorem}
\declaretheorem[
  name=Lemma,
  refname={lemma, lemmas},
  Refname={Lemma, Lemmas}]{lemma}
\declaretheorem[
  name=Fact,
  refname={fact, facts},
  Refname={Fact, Facts}]{fact}
\declaretheorem[
  name=Claim,
  refname={claim, claims},
  Refname={Claim, Claims}]{claim}
\declaretheorem[
  name=Remark,
  refname={remark, remarks},
  Refname={Remark, Remarks}]{remark}
\declaretheorem[
  name=Corollary,
  refname={corollary, corollaries},
  Refname={Corollary, Corollaries}]{corollary}
\declaretheorem[
  name=Definition,
  refname={definition, definitions},
  Refname={Definition, Definitions}]{definition}
\title{Coordination Complexity:\\
Small Information Coordinating Large Populations}
\author{
Rachel Cummings\footnotemark[1]
\and
Katrina Ligett\footnotemark[1]
\and
Jaikumar Radhakrishnan\footnotemark[2]
\and
Aaron Roth\footnotemark[3]
\and
Zhiwei Steven Wu\footnotemark[3]
}
\begin{document}

\maketitle
\renewcommand{\thefootnote}{\fnsymbol{footnote}}
\footnotetext[1]{Computing and Mathematical Sciences, California Institute of Technology; Email: \texttt{\{rachelc, katrina\}@caltech.edu}; Supported in part by
NSF grant CNS-1254169, US-Israel Binational Science Foundation grant
2012348, the Charles Lee Powell Foundation, a Google Faculty Research
Award, an Okawa Foundation Research Grant, a Microsoft Faculty
Fellowship, and a Simons Award for Graduate Students in Theoretical
Computer Science.}
\footnotetext[2]{School of Technology and Computer Science, Tata Institute of Fundamental Research; A portion of this work was done while the author was visiting the Simons Institute for Theory of Computing in Berkeley, CA. Email: \texttt{jaikumar@tifr.res.in}}
\footnotetext[3]{Computer and Information Science, University of Pennsylvania;
Email:\texttt{\{aaroth,wuzhiwei\}@cis.upenn.edu}; Supported in part by NSF Grant CCF-1101389, an NSF CAREER award, and an Alfred P. Sloan Foundation Fellowship.
}
\renewcommand{\thefootnote}{\arabic{footnote}}

\thispagestyle{empty}
\begin{abstract}
We study a quantity that we call \emph{coordination
complexity}. In a distributed optimization problem, the information
defining a problem instance is distributed among $n$ parties, who need
to each choose an action, which jointly will form a solution to the
optimization problem. The coordination complexity represents the
minimal amount of information that a centralized coordinator, who has
full knowledge of the problem instance, needs to broadcast in order to
coordinate the $n$ parties to play a nearly optimal solution.

We show that upper bounds on the coordination complexity of a problem
imply the existence of good jointly differentially private algorithms
for solving that problem, which in turn are known to upper bound the
price of anarchy in certain games with dynamically changing
populations.

We show several results. We fully characterize the coordination
complexity for the problem of computing a many-to-one matching in a
bipartite graph by giving almost matching lower and upper
bounds.\sw{changed} Our upper bound in fact extends much more
generally, to the problem of solving a linearly separable convex
program. We also give a different upper bound technique, which we use
to bound the coordination complexity of coordinating a Nash
equilibrium in a routing game, and of computing a stable matching.
\end{abstract}

\newpage
\clearpage
\setcounter{page}{1}

\section{Introduction}
In this paper, we study a quantity which we call
\emph{coordination complexity}. This quantity measures the amount of
information that a centralized coordinator needs to broadcast in order
to coordinate $n$ parties, each with only local information about a
problem instance, to jointly implement a globally optimal
solution. Unlike in \emph{communication complexity}, there is no need
for the communication protocols in our setting to derive the optimal
solution starting with nothing but local information, nor even
\emph{verify} that a proposed solution is optimal (as is the goal in
non-deterministic communication complexity). Instead, in our setting,
there is a central coordinator who already has complete knowledge of
the problem instance --- and hence also of the optimal solution. His
goal is simply to publish a concise message to guide the $n$ parties
making up the problem instance to coordinate on the desired solution
--- ideally using fewer bits than would be (trivially) needed to simply
publish the optimal solution itself.\footnote{Within our framework of
  coordination complexity, we assume that the players are not
  strategic --- they will faithfully follow the
  coordination protocol upon observing the message broadcast by the
  coordinator. We do study the interface between the coordination
  complexity and the strategic variants of some problems
  in~\Cref{s.privacy}.\sw{added} }

Aside from its intrinsic interest, our motivation for studying this quantity is two-fold. First, as we show, problems with low coordination complexity also have good protocols for implementing nearly optimal solutions under the constraint of \emph{joint differential privacy} \cite{DMNS06,KPRU14} --- i.e. protocols that allow the joint implementation of a nearly optimal solution in a manner such that no coalition of parties can learn much about the portion of the instance known by any party not in the coalition. The existence of jointly differentially private protocols in turn have recently been shown to imply a low ``price of anarchy'' for no-regret players in the strategic variant of the optimization problem when the game in question is smooth --- even when the population is dynamically changing \cite{dynamicPOA}. Hence, as a result of the connection we develop in this paper, in order to show dynamic price of anarchy bounds of the sort given in \cite{dynamicPOA}, it is sufficient to show that the game in question has low coordination complexity, without needing to directly develop and analyze differentially private algorithms. Using this connection we also derive new results for what can be implemented under the constraint of pure joint differential privacy --- results that were previously only known subject to approximate joint differential privacy.

Second, coordination complexity is a stylized measure of the power of
concise broadcasts (e.g. prices in the setting of allocation problems,
or congestion information in the setting of routing problems) to
coordinate populations in the absence of any interaction.\footnote{Of
  course, the connection here is in a stylized model --- in a market,
  there is not in fact any party with complete information of the
  problem instance --- but the market is nevertheless encoding good
  ``distributional information'' about the population of buyers likely
  to arrive.} Here we note that prices seem to coordinate markets,
despite the fact that individuals do not actually participate in any
kind of interactive ``Walrasian mechanism'' of the sort that would be
needed to compute the allocation itself, in addition to the prices
(see e.g. \cite{KC82,DNO14}). Indeed, prices alone are generally not
sufficient to coordinate high welfare allocations because prices on
their own can induce a large number of indifferences that might need
to be resolved in a coordinated way --- and hence Walrasian equilibria
are defined not just as vectors of equilibrium prices, but as vectors
of prices paired with optimal allocations. Publishing a Walrasian
equilibrium would be a trivial solution in our setting, because it
involves communicating the entire solution that we wish to coordinate
--- the optimal allocation. Nevertheless, we show that the
coordination complexity of the allocation problem is --- up to log
factors --- equal to the number of types of goods in a commodity
market. This is the same as what would be needed to communicate prices
(indeed, our solution can be viewed as communicating prices in a
slightly different, ``regularized'' market), and can be substantially
smaller than what would be needed to communicate the optimal
allocation itself.

\subsection{Our Results and Techniques}
In our model (which we formally define in Section \ref{sec:prelims}), a \emph{problem instance} $D$ is defined by an $n$-tuple from some abstract domain $\cX$: $D \in \cX^n$. We write $D = (D^{(1)},\ldots,D^{(n)})$ to denote the fact that the information defining the problem instance is partitioned among $n$ agents, and each agent $i$ knows only his own part $D^{(i)}$. The solution space is also a product space: $\cA^n$, and each agent $i$ can choose a single \emph{action} $a_i \in \cA$ -- the choices of all of the agents jointly form a solution $a = (a_1,\ldots,a_n)$. The \emph{coordinator} knows the entire problem instance $D$, and publishes a signal $\sigma(D) \in \{0,1\}^\ell$. Each agent then chooses an action $a_i := \pi(D^{(i)}, \sigma(D))$ based only on the coordinator's signal and her own part of the problem instance. The jointly induced solution $a = (a_1,\ldots,a_n)$ is the output of the interaction. The pair of functions $\sigma, \pi$ jointly form a protocol, and $\ell$, the length of the coordinator's signal is the coordination complexity of the protocol. The coordination complexity of a problem is the minimal coordination complexity of any protocol solving the problem.

A canonical example to keep in mind is many-to-one matchings: Here, a problem instance is defined by a bipartite graph between $n$ agents and $k$ types of goods. Each good $j$ has a supply $s_j$, and the goal is to find a maximum cardinality matching such that no agent is matched to more than one good, and no good $j$ is matched to more than $s_j$ agents. Here, the portion of the instance known to agent $i$ is the set of goods adjacent to agent $i$-- but nothing about the goods adjacent to other agents. Note that describing a matching requires $\Omega(n \log k)$ bits, which is the trivial upper bound on the coordination complexity for this problem. For this problem, we show nearly matching upper and lower bounds: no protocol with coordination complexity $o(k)$ can guarantee a constant approximation to the optimal solution, whereas there is a protocol with coordination complexity $O(k\log n)$ that can obtain a $(1 + o(1))$-approximation to the optimal solution. Our upper bound in fact extends much more generally, to any problem that can be written down as a convex program whose objective and constraints are linearly separable between agents' data.

 The idea of the upper bound is to broadcast a portion of the optimal dual solution to the convex program -- one dual variable for every constraint that is defined by the data of multiple agents (there is no need to publish the dual variables corresponding to constraints that depend only on the data of a single agent). For the many-to-one matching problem, these dual variables correspond to ``prices'' -- one for each of the $k$ types of goods. This idea on its own does not work, however, because a dual optimal solution to a convex program is not generally sufficient to specify the primal optimal solution. When specialized to the case of matchings, this is because optimal ``market clearing prices'' can induce a large number of indifferences among goods for each of the $n$ agents, and these indifferences might need to be broken in a coordinated way to induce an optimal matching. To solve this problem, we instead release the dual variables corresponding to a slightly different convex program, in which a \emph{strongly convex regularizer} has been added to the objective. The effect of the strongly convex regularizer is that the optimal dual solution now uniquely specifies the optimal primal solution -- although now the optimal primal solution to a modified problem. The rest of our approach deals with trading off the weight of the regularizer with the number of bits needed to approximately specify each of the dual variables, and the error of the regularized optimal solution relative to the optimal solution to the original problem.

We also give several other positive results, based on a different
technique: broadcasting the truncated transcript of a process known to
converge to a solution of interest. Using this technique, we give low
coordination complexity protocols for the problem of coordinating on
an equilibrium in a routing game, and for the problem of
coordinating on a \emph{stable} many-to-one matching.

Finally, we show that problems that have both low sensitivity objectives (as all of the problems we study do) and low coordination complexity also have good \emph{jointly differentially private} protocols. Using the results of \cite{dynamicPOA}, this also shows a bound on the price of anarchy of the strategic variant of these problems, whenever they are \emph{smooth games}, which holds even under a dynamically changing population.

\subsection{Related Work}
%\ar{Paragraph about communication complexity, nondeterministic communication complexity}

Our model of coordination complexity is related to, but distinct from,
the well-studied notion of communication complexity --- see
\cite{KN97} for a textbook introduction.  While both complexity
notions measure the number of bits that must be transmitted among
decentralized parties to reach a particular outcome, they differ in
the initial endowment of information, as well as in the requirements
of each player to know the final outcome.  In communication
complexity, the information describing the problem instance is fully
distributed, and communication is necessary for all parties to know
the outcome. Coordination complexity in contrast assumes the existence
of a coordinator who knows the entire problem instance, and must
broadcast information to the players which will allow them to each
compute their part of the output -- there is no need for any of the
parties to know the entire output. More similar to our setting is
\emph{non-deterministic communication complexity}, in which we may
imagine that there is an oracle who knows the inputs of all players
and broadcasts a message (perhaps partially) describing a solution
together with a certificate that allows the parties to verify the
optimality of the solution. In contrast, in our model of coordination
complexity, the coordinator does not need to provide any certificate
allowing parties to verify that the coordinated solution is optimal
(indeed, each party need not have any information about the portion of
the solution proposed to other parties).
%After receiving the message, players must then
%communicate to compute the output and verify its optimality.  In the
%example of many-to-one matching, an oracle would broadcast prices,
%then each player would in turn announce her optimal allocation given
%those prices and the allocations of players who spoke before her.  At
%the end of this protocol, all players would know the entire
%allocation. To contrast with coordination complexity, a coordinator
%would broadcast prices for all goods, from which each player computes
%only her own utility-maximizing allocation.  Even measuring only the
%length of the broadcast message, the coordination complexity and the
%nondeterministic communication complexity can be different due to
%player indifferences among goods. The additional communication round
%after receiving the oracle's message allows players to coordinate
%their tie-breaking; coordination complexity requires this to be
%orchestrated only through the coordinator's message.

%Nondeterministic communication complexity can be viewed as the minimal information required to jointly verify that a proposed matching is optimal, whereas coordination complexity is the minimal information required for distributed agents to coordinate on a matching, where each player only needs to know the item to which she is matched.

%\ar{Paragraph about interactive communication complexity of matchings papers}

The informational requirements of coordinating matchings has a long
history of study in economics, and has recently gained attention in
theoretical computer science.  Hayek's classic paper \cite{Hay45}
conjectured that Walrasian price mechanisms, which coordinate
matchings via a t\^atonnement process that updates market-clearing
prices based on demand, are ``informationally efficient,'' in that
they verify optimal allocations with the least amount of information.
This was later formalized by \cite{Hur60} and \cite{MR74} in specific
settings of interest, using an informational metric that measured
smooth real-valued communication. Nisan and Segal study the communication complexity of matchings using the tools of communication complexity as developed in computer science, and show that any communication protocol that determines an optimal allocation must also determine supporting prices \cite{NS06}.  Recently, \cite{DNO14} and
\cite{ANRW15} studied the problem of computing an optimal matching through the lens of interactive
communication complexity, showing that interactive protocols can have significantly
lower communication complexity than non-interactive ones. Note that the communication complexity bounds given in these papers are always larger than the description length of the matching itself -- in contrast, here when we study coordination complexity, nontrivial bounds must not just be smaller than the input, but must also be smaller than the size of the optimal matching.

Finally, there are two papers that study a very similar setting to ours, although they obtain rather different results.\footnote{We thank Ilya Segal for pointing out \cite{Cal84} to us, and thank Sepehr Assadi for pointing out Section 4 of \cite{DPS02}.} Calsamiglia \cite{Cal84} studies a real-valued communication model in which a central coordinator with full knowledge of the instance needs to broadcast a concise signal to coordinate an allocation in an exchange market---see  \cite{SegSurvey} for context on how this result fits into the economic literature on communication complexity. Deng, Papadimitriou, and Safra also study a similar model in Section 4 of \cite{DPS02}, which they call ``Market Communication''. Despite the similarity in models, the results of both \cite{Cal84} and \cite{DPS02} stand in sharp contrast to ours---they both give \emph{lower bounds}, showing that the amount of communication necessary needs to grow linearly with the number of buyers $n$, while we give upper bounds showing that it is necessary to grow only with the number of different types of goods $k$. Calsamiglia does not allow approximation in his model, which is necessary for our results. Deng, Papadimitriou, and Safra allow for approximation, but study an instance of a problem that cannot be expressed as a linearly separable convex program, which shows that structure of the sort that we use is necessary.
%\ar{Paragraph about differential privacy, dynamic price of anarchy papers}

A line of work \cite{KPRU14, RR14, CKRW14, HHRRW14, HHRW14, RRUW15}
has studied protocols for implementing outcomes in various settings
under the constraint of joint differential privacy \cite{DMNS06,
  KPRU14}, which allows $n$ parties to jointly implement some solution
while ensuring that no coalition of parties can learn much about the
input of any party outside the coalition.  Most (but not all) of these
algorithms are actually private coordination protocols of the sort we
study here, in which the algorithm can be viewed as a coordinator who
is constrained to broadcast a private signal. These jointly private
algorithms are not constrained to transmit a short signal -- and
indeed, the private signals can sometimes be verbose. But as we show,
problems with low coordination complexity also have good jointly
differentially private algorithms, which was one of our original
motivations for studying this quantity.

Lykouris, Syrgkanis, and Tardos \cite{dynamicPOA} show that the existence of a jointly
differentially private algorithm for solving an optimization problem
implies that the strategic variant of the problem has a low ``price of
anarchy'' for learning agents, even in dynamic settings, in which player types change over time, as long as the game is smooth.  Because we show
in Section \ref{s.privacy} that any problem with a low sensitivity
objective and low coordination complexity has a good jointly
differentially private algorithm, using the results of \cite{dynamicPOA}, to prove a bound on the price of anarchy in a smooth dynamic game, we show it suffices to bound the coordination complexity of the game.

\section{Preliminaries}
\label{sec:prelims}
A \emph{coordination problem} is defined by a set of $n$ agents, a
data domain $\cX$, an action range $\cA$, and a social objective
function $S\colon \cX^n \times \cA^n \rightarrow \RR$. An
\emph{instance} of a coordination problem consists of a set of $n$
elements from the data domain: $D = (D^{(1)}, \ldots, D^{(n)}) \in
\cX^n$. Each agent $i$ has knowledge only of $D^{(i)}$, his own
portion of the problem instance, and the goal is for a centralized
\emph{coordinator} to broadcast a concise message to the agents to
allow them to arrive at a solution $a = (a_1,\ldots,a_n) \in \cA^n$
that approximately maximizes the objective function $S(D, a)$.

A \emph{coordination protocol} consists of two functions, an
\emph{encoding function} $\sigma \colon \cX^n \rightarrow \{0, 1\}^*$
and a \emph{decoding function} $\pi \colon \cX \times \{0,1\}^*
\rightarrow \cA$. A coordination protocol $(\sigma, \pi)$ proceeds in
two stages:

\begin{itemize}
\item First the coordinator broadcasts the message $\sigma(D)$ to all
  agents using the encoding function.
\item Then each agent selects an action $a_i$ on the basis of her own
  portion of the problem instance and the broadcast message, using the
  decoding function: $a_i := \pi(D^{(i)}, \sigma(D))$.
\end{itemize}

Both functions $\sigma$ and $\pi$ may be randomized. The approximation
ratio of a protocol is the ratio of the optimal objective value to the
expected objective value of the solution induced by the protocol, in
the worst case over problem instances.

\begin{definition}[Approximation Ratio]
A coordination protocol $(\sigma,\pi)$ obtains a $\rho$ approximation
to a problem if:
$$\max_{D \in \cX^n} \frac{\OPT(D)}{\Ex{a_1,\ldots,a_n}{S(D, a)}} \leq
\rho$$ where each $a_i = \pi(D^{(i)} , \sigma(D))$, and the expectation is
taken over the randomness of $\sigma$ and $\pi$.
\end{definition}

The coordination complexity of a protocol is the maximum number of
bits the encoding function broadcasts, in the worst case over problem
instances.
\begin{definition}[Coordination Complexity]
  A coordination protocol $(\sigma, \pi)$ has \emph{coordination
    complexity} $\ell$ if:
    $$\max_{D \in \cX^n} |\sigma(D)| = \ell.$$
\end{definition}

The coordination complexity of obtaining a $\rho$ approximation to a
problem is the minimum value of the coordination complexity of all
protocols $(\sigma, \pi)$ that obtain a $\rho$ approximation to the
problem.

We conclude by making several observations about coordination
protocols. First, as we have defined them, they are
\emph{non-interactive} -- the coordinator first broadcasts a signal,
and then the agents respond. This is without loss of generality, since
the coordinator has full knowledge of the problem instance. Any
interactive protocol could be reduced at no additional communication
cost to a non-interactive protocol, simply by having the coordinator
publish the transcript that would have arisen from the interactive
protocol. This is in contrast to the setting of communication
complexity, in which interactive protocols can be more powerful than
non-interactive protocols (and makes it easier to prove lower bounds
for coordination complexity).

Second, the coordination complexity of a problem is trivially upper
bounded both by the description length of the problem instance (as is
communication complexity), \emph{and} by the description length of the
problem's optimal solution (unlike in non-deterministic communication
complexity, there is no need to pair the optimal solution with a
certificate allowing individual agents to verify it). Hence,
non-trivial bounds will be asymptotically smaller than both of these
quantities.

\paragraph{Bipartite Matching}{
The primary coordination problem we study in this paper is the
\emph{bipartite matching problem}. In this problem, there is a
bipartite graph $G = (V, W, E)$, in which every node in $V$ is
associated with a player and every node in $W$ represents a good.
Each player $i$'s private data is the set of edges incident to her
node -- i.e. $D^{(i)} = \{j : (i,j) \in E\}$. We study two variants of
this problem. In the \emph{one-to-one} matching problem, $W$
represents a set of distinct goods, and the goal is to coordinate a
maximum cardinality matching $E' \subseteq E$ such that for every $i
\in V$, $|\{j \in W: (i,j) \in E'\}| \leq 1$ and for every $j \in W$,
$|\{i \in V : (i,j) \in E'\}| \leq 1$. In the \emph{many-to-one}
matching problem, $W$ represents a set of $k$ commodities $j$, each
with a supply $b_j$.  The goal is to coordinate a maximum cardinality
many-to-one matching $E' \subseteq E$ such that for every $i \in V$,
$|\{j \in W: (i,j) \in E'\}| \leq 1$ and for every $j \in W$, $|\{i
\in V : (i,j) \in E'\}| \leq b_j$. The social objective in this
setting is the welfare or the cardinality of the matching, and we will
use $\OPT(G)$ to denote the optimal welfare objective.

Note that the resulting solution might not be feasible since the
players' demands are not always satisfied. We need to make sure that
we are not over-counting when measuring the welfare. In one-to-one
matchings, if more than one players select a good, only the first
player is matched to it. In many-to-one matchings, if more than $b_j$
players select a good of type $j$, only the first $b_j$ players are
matched the good $j$.}\sw{added}

%The central planner broadcasts a message based on the graph he sees,
%and the players will select a good based on her local neighborhood and
%the message. Then the players will claim their selected goods in order
%1 through $n$. If the good that a player claims runs out, she will
%fail to match with this good. The objective function is the
%\emph{welfare} --- the total number of matched pairs between players
%and goods.

%% \sw{the names might need to converge at some point: central planner,
%%   coordinator, coordination mechanism} \ar{How about coordinator?}

\paragraph{Notation}{We use $\|\cdot \|$ to denote the $\ell_2$ norm,
 and more generally use $\|\cdot \|_p$ to denote the $\ell_p$ norm. }

\section{Lower Bounds for \iffull Bipartite \fi Matchings}
\label{sec:lower}
In this section, we present lower bounds on the coordination
complexity of bipartite matching problems. As a building block, we
prove a lower bound for the one-to-one matching problem on a bipartite
graph with $n$ vertices on each side, showing an $\Omega(n)$ lower
bound -- i.e. that no substantial improvement on the trivial solution
is possible. We then extend this lower bound to the problem of
many-to-one matchings, in which there are $n$ agents who must be
matched to $k$ goods (each good can be matched to many agents, up to
its supply). Here, we show an $\Omega(k)$ lower bound. In the next
section, we give a nearly matching upper bound, which substantially
improves over the trivial solution.

\subsection{A Variant of the Index Function Problem}
Before we present our lower bound, we introduce a variant of
the~\emph{random index function} problem~\cite{KNR99}, which will be
useful for our proof.

\paragraph{MULTIPLE-INDEX}{
There are two players Alice and Bob. Alice receives as input a
sequence of $t$ pairs, $I=\langle (S_i, u_i): i = 1,2, \ldots, t
\rangle$, where the $S_i$ are disjoint sets each with $k$ elements,
and $u_i$ is uniformly distributed in $S_i$. Based on her input Alice
sends Bob a message $M(I)$.  Bob then receives $(S_j,j)$, where $j$ is
chosen from $[t]$. Bob must determine $u_j$; let his output be
$B(S_j,j, M(I)) \in S_j$. We say that the protocol succeeds if $B(S_j,j,
M(I))=u_j$.  Let $\ell(t,k,p)$ be the minimum number of bits (for the
worst input) that Alice must send in order for Bob to succeed with
probability at least $p$.}

Note that if Bob guesses randomly, then the protocol already succeeds
with probability $p = 1/k$. The following result shows that any
significant improvement over this trivial probability of success will
require Alice to send Bob a long message. See \iffull appendix \else
the full version \fi for a full proof.

\begin{restatable}{lemma}{ranindex}\label{index}
For $p \geq 1/k$, we have $\ell(t,k,p) \geq (8\log e) t(p - 1/k)^2$.
\end{restatable}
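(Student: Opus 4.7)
My plan is a mutual-information compression argument in the style of standard one-way communication lower bounds. Fix any worst-case choice of disjoint $k$-element sets $S_1,\dots,S_t$ (e.g., $S_i = \{(i{-}1)k{+}1,\dots,ik\}$), so that the inputs $u_1,\dots,u_t$ are mutually independent and each uniform on $S_i$. Let $M = M(I)$ denote Alice's (possibly randomized) message, and for each $j \in [t]$ let $p_j := \Pr[B(S_j, j, M) = u_j]$ be Bob's success probability conditioned on being queried on index $j$. Interpreting the success condition in the definition as $j$ being drawn uniformly from $[t]$, the hypothesis gives $\bar p := \frac{1}{t}\sum_j p_j \ge p$, and WLOG $p_j \ge 1/k$ since Bob can always ignore the message and sample uniformly from $S_j$.

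The information-theoretic skeleton is
\[
\ell \;\ge\; H(M) \;\ge\; I(M; u_1,\dots,u_t) \;=\; \sum_{j=1}^{t} I(M; u_j \mid u_{<j}) \;\ge\; \sum_{j=1}^{t} I(M; u_j),
\]
where the final inequality uses the mutual independence of the $u_j$'s: by the chain rule $I(M, u_{<j}; u_j) = I(u_{<j}; u_j) + I(M; u_j \mid u_{<j})$ together with $I(u_{<j}; u_j) = 0$ and $I(M, u_{<j}; u_j) \ge I(M; u_j)$ (data processing).

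The heart of the proof is the per-coordinate step. Since $u_j$ is marginally uniform, $I(M; u_j) = \Expectation_M\bigl[D_{\mathrm{KL}}(P_{u_j \mid M} \,\|\, \mathrm{Unif}_{S_j})\bigr]$. For each fixed $m$, let $q_j(m) := \max_{v \in S_j} \Pr[u_j = v \mid M = m]$; this is Bob's best possible conditional success and satisfies $q_j(m) \ge 1/k$ and $\Expectation_M[q_j(M)] \ge p_j$. Applying a Pinsker-style inequality (via data-processing onto the binary partition $\{v^\ast\}$ vs.\ $S_j \setminus \{v^\ast\}$ where $v^\ast = \arg\max_v \Pr[u_j = v \mid M=m]$, and then analyzing the binary KL $d(q \,\|\, 1/k)$ near $q = 1/k$), I would derive
\[
D_{\mathrm{KL}}\!\bigl(P_{u_j \mid M=m} \,\big\|\, \mathrm{Unif}_{S_j}\bigr) \;\ge\; (8 \log e)\,(q_j(m) - 1/k)^2.
\]
Taking $\Expectation_M$ and invoking Jensen's inequality on the convex function $x \mapsto (x - 1/k)^2$ then yields $I(M; u_j) \ge (8 \log e)(p_j - 1/k)^2$.

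Finally, summing over $j$ and applying Cauchy--Schwarz (equivalently, convexity of $x \mapsto x^2$), together with $\bar p \ge p$, gives
\[
\ell \;\ge\; (8 \log e)\sum_{j=1}^{t}(p_j - 1/k)^2 \;\ge\; (8 \log e)\,t\,(\bar p - 1/k)^2 \;\ge\; (8 \log e)\,t\,(p - 1/k)^2,
\]
which is the claim. The one step that needs real care is the sharp per-coordinate Pinsker bound: the textbook inequality $D_{\mathrm{KL}} \ge 2 d_{\mathrm{TV}}^2$ (in nats) on its own gives only a weaker constant, and I expect the paper's proof to squeeze out the stated constant $8 \log e$ by leveraging the fact that the reference distribution is uniform on a large alphabet and that $q_j(m) \ge 1/k$ (so that $d(q_j(m) \,\|\, 1/k)$ is bounded below by its second-order expansion with an explicitly tracked curvature).
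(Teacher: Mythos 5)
Your proposal follows the same route as the paper's proof: bound $\ell \ge H(M) \ge I(M;u_1,\dots,u_t) \ge \sum_j I(M;u_j)$ using independence of the $u_j$, write each $I(M;u_j)$ as an expected KL divergence of $P_{u_j\mid M}$ from uniform, pass to the success probability via a Pinsker-type inequality, and aggregate with Jensen/convexity. The only structural difference is that you reduce to a binary partition before applying Pinsker, whereas the paper applies Pinsker directly to $P_{u_j\mid M=z}$ versus $\mathrm{Unif}_{S_j}$ and then uses $\|P_{u_j\mid M=z}-\mathrm{Unif}_{S_j}\|_1 \ge 2(p_j(z)-1/k)$; these are interchangeable.

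The one step you flag as needing ``real care''---the per-coordinate inequality $D_{\mathrm{KL}}(P_{u_j\mid M=m}\,\|\,\mathrm{Unif}_{S_j}) \ge (8\log e)(q_j(m)-1/k)^2$---is in fact false in general: taking $P_{u_j\mid M=m}$ to be a point mass gives left side $\log k$ and right side $(8\log e)(1-1/k)^2\approx 11.5$, which exceeds $\log k$ for all $k \lesssim 2^{11}$. Your instinct that the textbook constant only yields something weaker is correct, and no use of $q_j(m)\ge 1/k$ rescues $8\log e$. What your argument (and the paper's) actually delivers is $\ell \ge (2\log e)\,t\,(p-1/k)^2$: Pinsker in bits reads $D_{\mathrm{KL}}(P\|Q)\ge \tfrac{\log e}{2}\|P-Q\|_1^2$, and combined with $\|P-Q\|_1\ge 2(q-1/k)$ this gives the per-coordinate constant $\tfrac{\log e}{2}\cdot 4 = 2\log e$. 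The paper's $8\log e$ comes from applying Pinsker with the constant appropriate to total variation distance, $D_{\mathrm{KL}}\ge (2\log e)\,d_{\mathrm{TV}}^2$, to the quantity $\delta$ that it has defined as the full $\ell_1$ distance---a factor-of-$4$ overcount. Since the lemma is only ever invoked to produce $\Omega(\cdot)$ bounds, the constant is immaterial; with $8\log e$ replaced by $2\log e$ (and the sharp per-coordinate claim replaced by the honest one) your proof is complete and matches the paper's.
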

%% \ar{Needs to be formalized. For any protocol with success probability at least ..., the length of the message must be... Actually, can this just be stated as a communication complexity lower bound? I guess then we might need to define communication complexity..}
%% \jr{I have tried to make the definitions more formal.}

\subsection{Lower Bound for One-to-One Matchings}\label{sec:1to1}
We will first focus on the lower bound on one-to-one matching and show
the following.
\begin{theorem}\label{1to1}
Suppose the coordination protocol $\Pi$ for one-to-one matching
guarantees an approximation ratio of $\rho$ in expectation. Then, the
coordinator of $\Pi$ must broadcast $\Omega(n/\rho^4)$ bits on problem
instances of size $n$ (in the worst case).
\end{theorem}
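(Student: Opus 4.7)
The plan is to reduce from MULTIPLE-INDEX. Given an input $\langle(S_1,u_1),\ldots,(S_t,u_t)\rangle$ with disjoint $S_i$ of size $k$, I build a bipartite matching instance consisting of $t$ \emph{gadgets}: gadget $i$ contains one \emph{chooser} $P_i$ whose edges are exactly $S_i$, together with $k-1$ single-edge \emph{fillers} $F_i^j$ (one per $j \in S_i \setminus \{u_i\}$) whose only edge is to $g_i^j$. The resulting instance has $n = tk$ players, and $\OPT = tk$, achieved by matching each $P_i$ to $g_i^{u_i}$ and each filler to its unique good.

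Any matching protocol $(\sigma,\pi)$ for this instance then yields a MULTIPLE-INDEX protocol in the natural way: Alice broadcasts $\sigma(D)$ as her message, and Bob, on input $(S_j,j)$, simulates $\pi$ for the chooser $P_j$ (whose local view is exactly $S_j$) and outputs the good $P_j$ selects. Bob succeeds precisely when $P_j$ picks $g_j^{u_j}$, so his success probability on a uniformly random $j$ equals $\bar p := \tfrac{1}{t}\sum_i \Pr[P_i \text{ picks } g_i^{u_i}]$.

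To relate $\bar p$ to $\rho$, note that each gadget contributes $k$ matches when $P_i$ picks $u_i$ (every filler matches through its only edge) and at most $k-1$ matches otherwise (the filler tied to $P_i$'s pick is blocked, and any other configuration can be shown suboptimal). Hence $\ex{\text{ALG}} \le tk - t(1-\bar p)$, and combining with the approximation guarantee $\ex{\text{ALG}} \ge \OPT/\rho = tk/\rho$ gives $\bar p \ge 1 - k(\rho-1)/\rho$. Applying \Cref{index} with this value of $p$ yields $\ell \ge (8\log e)\,t\,(\bar p - 1/k)^2$ whenever $\bar p > 1/k$.

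The final step is to choose $k$ (possibly as a function of $\rho$) so as to extract the claimed $\Omega(n/\rho^4)$ bound from $t(\bar p - 1/k)^2 = (n/k)(\bar p - 1/k)^2$. This is where the main difficulty lies: for the simple gadget above, $\bar p - 1/k$ is positive only when $\rho < k^2/(k^2-k+1)$, so covering the full range of $\rho$ while recovering the quadratic $(\bar p - 1/k)^2 = \Theta(1/\rho^4)$ scaling likely requires strengthening the gadget---for example by replacing each filler by a short chain of dependent players, so that a single wrong pick by the chooser cascades to block more than one match, thereby enlarging both the admissible range of $\rho$ and the attainable gap $\bar p - 1/k$ in a way that matches the desired bound.
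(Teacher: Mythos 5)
Your reduction template (hand the coordinator's broadcast to Bob and let him simulate the decoder of a single player to answer a MULTIPLE-INDEX query) is indeed the paper's strategy, but your gadget does not yield the theorem, and you have correctly located the reason: with one private filler per wrong good, each gadget guarantees welfare $k-1$ no matter what the chooser does, so the approximation constraint only forces $\bar p > 1/k$ when $\rho < k^2/(k^2-k+1) = 1 + O(1/k)$. This is not a loose end to be patched in the last step; for any $\rho \ge 4/3$ your instance gives no lower bound at all, and no within-gadget modification (chains of fillers included) fixes it, because as long as the padding needed to define the gadget occupies most of the gadget, a chooser who always errs still costs only an $O(1/k)$ fraction of $\OPT$, whereas you need wrong choices to cost a $\Theta(1-1/\rho)$ fraction.

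The paper's construction resolves this with a different mechanism: contention \emph{across} gadgets rather than padding \emph{within} them. All $n$ players share a single small pool $W_1$ of only $n/(8\rho)$ ``decoy'' goods (each player is adjacent to $2\rho$ decoys plus one private correct good in $W_2$), so at most $n/(8\rho)$ players in total can be matched through wrong choices; achieving welfare $7n/(8\rho)$ therefore forces at least $3n/(4\rho)$ players to pick their correct good, i.e.\ an average success probability of $3/(4\rho)$ against a random-guessing baseline of $1/(2\rho+1)$ --- a gap of $\Theta(1/\rho)$ valid for all $\rho$. Two further ingredients you would need in any such construction: the neighborhoods handed to \Cref{index} must be \emph{disjoint} (the paper arranges this by partitioning $V$ into blocks of size $n/(16\rho^2)$ with disjoint decoy sets inside each block, then averaging to find a good block), and the identity of the correct good must be uniform over each player's neighborhood from the decoder's point of view (the paper randomizes the vertex labelling). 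Plugging $t = n/(16\rho^2)$ and gap $\Theta(1/\rho)$ into \Cref{index} then gives $\Omega(n/\rho^4)$. So the gap in your write-up is precisely the construction itself: the ``strengthened gadget'' you gesture at has to be the shared-scarce-decoys instance, not a deeper per-gadget chain.
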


Fix the protocol $\Pi$. We will extract a two-party communication
protocol for the {\bf MULTIPLE-INDEX} problem from $\Pi$, and use the above
lemma. As a first step for our lower bound proof, we will consider the
following random graph construction process $\ranG$.

\paragraph{Random Graph Construction $\ranG(\rho, n)$:}{
Let $\kappa = \frac{n}{8\rho}$ and $A = \frac{n}{16\rho^2}$. %%  and $k = %% 2\rho$
Consider the following random bipartite graph $G$ with vertex
set $(V, W)$ such that $|V| = |W| = n$.
\begin{itemize}
\item Randomly generate an ordering $w_1,w_2, \ldots, w_n$ of $W$ (all
  $n!$ orderings being equally likely), and partition $W$ as
  $W_1 \cup W_2$ such that $W_1=\{w_1,w_2,\ldots,
  w_{\kappa}\}$, and $W_2=\{w_{\kappa+1}, w_{\kappa+2}, \ldots, w_n\}$.

\item Similarly, randomly generate an ordering $v_1, v_2, \ldots, v_n$
  of $V$, and parition $V$ into $n/A$ bocks, $B_1, B_2, \ldots, B_{n/A}$
  (each with $A$ vertices), where
\[ B_j := \{v_i: (j-1)A + 1  \leq i \leq j \, A\}.\]

\item Connect $B_j$ and $W$ as follows.  First, we describe the
  connections between $V$ and $W_1$.  The neighbourhoods of the
  vertices in each $B_j$ will be disjoint: we partition
  $W_1$ into equal-sized disjoint sets $(T_v: v \in B_j)$, and let the
  neighbours of $v \in B_j$ be exactly the $2\rho$ vertices in $T_v$.

\item In addition, assign each vertex in $v$ one neighbor in $W_2$, by
  connecting $V$ with $W_2$ in round-robin fashion --- connect vertex
  $v_i$ to vertex $w_j$, where $j= \left(\kappa+ i \mod (n-\kappa)\right)$.
  \jr{I have changed the expression; I hope I have not messed it up.}\sw{looks fine}
\end{itemize}
}

Before we prove~\Cref{1to1}, let us first observe that a graph
generated by $\ranG$ always has a matching with high welfare.

\begin{lemma}\label{goodgraph}
 Each graph $G$ generated by the above process $\ranG(\cdot, n)$ has
 optimal welfare $\OPT(G) \geq \frac{7n}{8}$.
 \ar{Have we defined ``The welfare objective''?}\sw{yes}
\end{lemma}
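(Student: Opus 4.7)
The plan is to exhibit a matching of size at least $7n/8$ using only the round-robin edges between $V$ and $W_2$; this immediately implies $\OPT(G) \geq 7n/8$ since the welfare objective equals the cardinality of the matching.

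First I would note that every vertex in $W_2$ is covered by at least one round-robin edge. By construction, $v_i$ is connected to $w_{\kappa + (i \bmod (n-\kappa))}$, and as $i$ ranges over $\{1,\ldots,n\}$, the index $i \bmod (n-\kappa)$ hits every value in $\{0,1,\ldots,n-\kappa-1\}$ (since $n \geq n-\kappa$). Consequently each $w \in W_2$ has at least one neighbor among the round-robin edges.

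Next I would build the matching: for each $w \in W_2$, pick an arbitrary $V$-neighbor of $w$ from the round-robin edges, and then greedily delete conflicting edges. Because the round-robin map sends distinct $v_i$'s in an initial segment to distinct $w_j$'s (collisions occur only between $v_i$ and $v_{i+(n-\kappa)}$), we can select, for example, the neighbor $v_{j-\kappa}$ for each $w_j$, which yields a proper matching of size exactly $|W_2| = n - \kappa$.

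Finally I would use $\rho \geq 1$ (the only regime in which the lemma is non-vacuous, since the trivial protocol already gives $\rho = n$) to conclude $\kappa = n/(8\rho) \leq n/8$, and therefore
\[
\OPT(G) \;\geq\; n - \kappa \;\geq\; n - \frac{n}{8} \;=\; \frac{7n}{8}.
\]
The argument is essentially immediate from the construction; I do not expect any genuine obstacle. The only minor subtlety is checking that the round-robin edges really do form a bipartite graph admitting a matching of size $n-\kappa$, which follows from the structure above. The edges going into $W_1$ play no role in this lower bound on $\OPT$; they are needed only for the subsequent reduction from {\bf MULTIPLE-INDEX}.
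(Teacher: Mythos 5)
Your proof is correct and follows essentially the same route as the paper's: both restrict attention to the round-robin edges into $W_2$, observe that the first $n-\kappa$ vertices of $V$ are matched to distinct vertices of $W_2$, and conclude via $\rho \geq 1$ that $n - \kappa \geq 7n/8$. The paper's version is just a two-line statement of this same matching.
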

\iffull
\begin{proof}
\iffalse
  The proof follows from a simple application of the Hall's
  theorem. Fix any subset $S\subseteq W$. Let $N\colon 2^W \rightarrow
  2^V$ denotes the neighborhood function that maps any subset of $W$
  to its neighbors in $V$. It suffices to show that $|S| \leq |N(S)|$.

  Let $S_1 = S\cap W_1$ and $S_2 = S\cap W_2$. Note by the
  construction specified in $\ranG$, each vertex in $S_1$ is connected
  to $(n/a)$ distinct vertices in $V$ (one from each block $B_j$), and
  each vertex in $S_2$ is connected to at least 1 distinct vertex in
  $V$ \fi
Given the fixed ordering over the vertices in $V$, match each of its
first $(1 - \frac{1}{8\rho})n$ vertices to its unique neighbor in
$W_2$. Since $\rho\geq 1$, this gives a matching with welfare at least
$\left(1 - \frac{1}{8\rho}\right) n \geq \frac{7n}{8}$.
\end{proof}
\fi

\begin{proof}[Proof of~\Cref{1to1}]
Let $\Pi$ be a coordination protocol with coordination complexity
$\ell$ and approximation ratio $\rho$. This means that on a graph
instance generated by $\ranG$, the parties can coordinate on \jr{I
  don't understand $\ranG(\cdot,n)$.}\sw{fixed} a matching with
expected weight at least $\frac{7n}{8\rho}$.  Since $|W_1| =
\frac{n}{8\rho}$, we know in expectation at least $\frac{7n}{8\rho} -
\frac{n}{8\rho} = \frac{3n}{4\rho}$ of the vertices in $W_2$ are
matched. Let $\alpha_v$ be the probability that in $\Pi$, agent $v$
picks her neighbor in $W_2$. Then, by linearity of expectation,
$\sum_{v \in V} \alpha_v \geq \frac{3n}{4}$, that is, $\Expectation_{v
  \in V}[\alpha_v] \geq \frac{3}{4}$.  Hence, there must be some block
$B_j$ such that $\Expectation_{v \in B_j}[\alpha_v] \geq
\frac{3}{4}$. We will now restrict attention to the block $B_j$ and
consider the following instance of {\bf MULTIPLE-INDEX}: for each
$v\in B_j$, set $S_v = N(v)$---the neighborhood of vertex $v$, and let
$u_v$ be the vertex unique vertex in $S_v\cap W_2$. Since the message
broadcast by the coordination protocol allows the players to identify
the special element with average success probability of
$\frac{3}{4\rho}$, by~\Cref{index} the length of message
\begin{align*}
\ell  &\geq (8\log e) |B_j| \left(\frac{3}{4\rho} - \frac{1}{2\rho + 1}\right)^2\\
%     & \geq (8 \log e) a \left(\frac{3}{4\rho} - \frac{1}{2\rho}\right)^2\\
&\geq (8 \log e) \left(\frac{n}{16\rho^2}\right) \left( \frac{3}{4\rho} -
\frac{1}{2\rho + 1} \right)^2 
 \geq \Omega\left( \frac{n}{\rho^4}\right),
\end{align*}
which completes the proof.
\end{proof}

\subsection{Lower Bound for Many-to-One Matchings}
Finally, we give the following lower bound on coordination complexity
for many-to-one matchings. The lower bound relies on the result
from~\Cref{sec:1to1}---we show that any coordination protocol for
many-to-one matchings can also be reduced to a protocol for one-to-one
matchings, and so the lower bound in~\Cref{sec:1to1} can be extended
to give a lower bound for the many-to-one setting.

For our lower bound instance, we consider bipartite graphs $G = (V, E)$
such that the vertices in $V$ represent $n$ different players and $W$
represent a set of $k$ goods $j$, each with a supply $b$.

\jr{Should we say that in our lower bound instance, all $s_j=b$?}

\begin{theorem}\label{thm:many-to-one}
Suppose that there exists a coordination protocol for many-to-one
matchings that guarantees an approximation ratio of $\rho$ in
expectation. Then such a protocol has coordination complexity of
$\Omega(k / \rho^{4})$.
\end{theorem}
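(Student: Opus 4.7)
The plan is to prove~\Cref{thm:many-to-one} by a direct reduction to the one-to-one matching lower bound of~\Cref{1to1}. The key observation is that a one-to-one matching instance on a bipartite graph with $k$ vertices per side is exactly a many-to-one matching instance in which every good has supply $b_j = 1$; hence any many-to-one coordination protocol can be run verbatim on such an instance, yielding a one-to-one matching protocol with the same approximation ratio and message length. Invoking~\Cref{1to1} with ``$n$''$=k$ will then supply the lower bound of $\Omega(k/\rho^4)$.

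Concretely, I would fix a many-to-one coordination protocol $(\sigma, \pi)$ of approximation ratio $\rho$ and coordination complexity $\ell$. Given an arbitrary one-to-one bipartite graph $G = (V, W, E)$ with $|V| = |W| = k$, I would reinterpret $W$ as a set of $k$ good types, each with supply $b_j = 1$. Under this identification the feasibility constraints coincide---each player is matched to at most one good, and each good is matched to at most one player---so the feasible many-to-one matchings on this instance are exactly the feasible one-to-one matchings of $G$, with the same cardinality. Moreover, each agent's local data $D^{(i)}$, which is just the set of edges incident to their node in either formulation, is unchanged by the reinterpretation. Therefore, running $(\sigma, \pi)$ on this instance constitutes a coordination protocol for one-to-one matching on graphs of size $k$ with approximation ratio $\rho$ and coordination complexity $\ell$, so~\Cref{1to1} applied with $n = k$ forces $\ell = \Omega(k/\rho^4)$.

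The only thing to verify carefully is the bookkeeping that the two problem formulations align once all supplies equal $1$, which is immediate from the definitions in~\Cref{sec:prelims}. I therefore do not expect a substantive technical obstacle; the theorem follows essentially by inspection once the reduction is written down.
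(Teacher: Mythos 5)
Your reduction is sound for the theorem as literally stated: unit-supply instances (padded with isolated players if one insists on a fixed $n>k$) are legitimate many-to-one instances, so any protocol with worst-case approximation ratio $\rho$ restricts to a one-to-one protocol on $k\times k$ graphs, and \Cref{1to1} with $n=k$ gives $\Omega(k/\rho^4)$. However, this is genuinely not the route the paper takes, and the difference matters. The paper builds its hard instances by taking the $k\times k$ instance from \Cref{sec:1to1}, making $b$ copies of each player, and giving every good supply $b$; it then shows (via the sampling argument of \Cref{goodwelfare}) that a $\rho$-approximate many-to-one protocol on the blown-up instance yields an $O(\rho)$-approximate one-to-one protocol on the original. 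The point of this extra work is that the resulting hard instances live in the regime $n=bk\gg k$ and $\OPT=\Theta(bk)\gg k$ --- exactly the regime in which the upper bound of \Cref{mainconvex} is nontrivial (that theorem only gives a $(1+o(1))$-approximation when $\OPT\gg k$). Your hard instances have $\OPT\le k$, where the upper bound is vacuous, so your argument leaves open the possibility that protocols restricted to large-supply instances could beat $\Omega(k/\rho^4)$; it therefore would not support the paper's claim that the upper and lower bounds nearly match. In short: your proof is a valid but degenerate witness for the stated inequality, while the paper's copying-plus-sampling reduction establishes hardness where it is actually needed.
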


We will start by considering a one-to-one matching instance generated
by $\ranG$ with $k$ vertices on each side of the graph $G' = (V', W',
E')$.  By~\Cref{goodgraph}, the optimal matching of $G'$ has size
$\OPT' \geq \frac{7k}{8}$. \ar{Shouldn't we just assume that the
  $k\times k$ one-to-one matching instance is generated by the process
  that generated the lower bound instance in the section
  above?}\sw{fixed}

\jr{I agree. Let us just say that we start with the $k \times k$ instance
from the section above.}

Now we will turn this into an instance of a many-to-one matching
problem: make $b$ copies of each vertex in $V'$ to obtain vertex set
$V$, and set $W:= W'$ such that the supply of each good $j$ is $b$;
then for an edge $(v', w')$ in the original graph, connect all copies
of $v'$ to $w'$ in the new graph. This gives a bipartite graph $G =
(V, W, E)$. The following claim is straightforward.
\begin{claim}
The new graph $G$ has a matching of size at least $(b\OPT')$.
\end{claim}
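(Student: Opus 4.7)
The plan is a direct constructive argument: exhibit a many-to-one matching in $G$ of size $b\cdot \OPT'$ by lifting an optimal one-to-one matching of $G'$.

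First, I would fix an optimal matching $M' \subseteq E'$ in $G'$ with $|M'| = \OPT'$. For each edge $(v',w') \in M'$, I would add to the candidate many-to-one matching $M \subseteq E$ every edge between a copy of $v'$ and $w'$. Since there are $b$ copies of $v'$ in $V$ and all $b$ are adjacent to $w'$ by construction, this adds exactly $b$ edges per edge of $M'$, giving $|M| = b\cdot |M'| = b\cdot \OPT'$.

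Next, I would verify that $M$ is a feasible many-to-one matching. On the agent side: for any vertex $u \in V$, if $u$ is a copy of $v'$, then $u$ appears in $M$ only through edges derived from edges of $M'$ incident to $v'$. Because $M'$ is a one-to-one matching, at most one edge of $M'$ is incident to $v'$, so $u$ is matched in $M$ to at most one good. On the good side: for any $w' \in W$, the edges of $M$ incident to $w'$ come from the unique edge of $M'$ (if any) incident to $w'$, and contribute exactly $b$ copies; hence $w'$ is matched to at most $b$ agents, matching its supply constraint.

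This is essentially the only step required, and there is no real obstacle — the construction of $G$ from $G'$ was designed precisely so that one-to-one matchings lift to many-to-one matchings by replicating each matched edge $b$ times. Combining $|M| = b\cdot \OPT'$ with feasibility gives the claim.
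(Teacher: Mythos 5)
Your proof is correct and is exactly the natural lifting argument the paper has in mind (the paper omits the proof entirely, calling the claim straightforward): replicate each matched edge of an optimal one-to-one matching $M'$ across all $b$ copies of the matched agent, and check both the per-agent and per-good (supply $b$) constraints. Nothing further is needed.
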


Now suppose that we could coordinate the players in $V$ to obtain a
matching $M^*$ of size $\frac{b\OPT'}{\rho}$ in $G$. Then with a simple
sampling procedure, we can extract a high cardinality matching for
the original graph: for each vertex in $v' \in V'$, sample one of the
$b$ copies of $v'$ in $G$ uniformly at random along with its incident
matched edge. If two vertices in $V'$ are connected to the same type
of good in $W'$, break ties arbitrarily and keep only one of the
edges.

\begin{restatable}{lemma}{sampledmatching}\label{goodwelfare}
  The sampled matching in $G'$ has expected size at least
  $\frac{\OPT'}{3\rho}$.
\end{restatable}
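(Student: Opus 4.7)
The plan is to analyze the random matching produced by the sampling procedure, counted by how many sampled edges land on each good $w' \in W'$, and then apply a Poisson-type lower bound to handle the tie-breaking.

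First I would set up some notation. For $v' \in V'$ and $w' \in W'$, let $c_{v', w'}$ denote the number of copies of $v'$ in $V$ that are matched to $w'$ in $M^*$, and let $m_{v'} = \sum_{w'} c_{v', w'}$ denote the total number of matched copies of $v'$. Since every copy is incident to at most one matched edge and each good $w'$ has supply $b$, we have $\sum_{v'} c_{v', w'} \leq b$, while $\sum_{v'} m_{v'} = |M^*| = b \OPT'/\rho$. For fixed $w'$, let $Y_{v', w'} \in \{0, 1\}$ indicate that the copy of $v'$ sampled by the procedure is matched (in $M^*$) to $w'$. Then $Y_{v', w'}$ is a Bernoulli with $\Pr[Y_{v', w'} = 1] = c_{v', w'}/b$, and because each $v'$ samples its copy independently, the variables $\{Y_{v', w'}\}_{v' \in V'}$ are mutually independent for fixed $w'$. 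Set $N_{w'} = \sum_{v'} Y_{v', w'}$, the number of sampled edges pointing to $w'$.

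Next I would observe the two key numerical facts. The expectation $\mathbb{E}[N_{w'}] = \sum_{v'} c_{v', w'}/b \leq 1$, and summing over $w'$ gives $\sum_{w'} \mathbb{E}[N_{w'}] = \sum_{v'} m_{v'}/b = \OPT'/\rho$. After tie-breaking, the size of the sampled one-to-one matching in $G'$ equals $\sum_{w'} \min(N_{w'}, 1)$, so its expected size equals $\sum_{w'} \Pr[N_{w'} \geq 1]$.

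The main work is to relate $\Pr[N_{w'} \geq 1]$ to $\mathbb{E}[N_{w'}]$. Writing $p_{v'} = c_{v', w'}/b$ and using $1 - x \leq e^{-x}$,
\[
\Pr[N_{w'} \geq 1] = 1 - \prod_{v'}(1 - p_{v'}) \geq 1 - e^{-\mathbb{E}[N_{w'}]}.
\]
The function $\mu \mapsto (1 - e^{-\mu})/\mu$ is decreasing on $(0, 1]$ with minimum value $1 - 1/e > 1/3$, so $\Pr[N_{w'} \geq 1] \geq (1 - 1/e)\, \mathbb{E}[N_{w'}]$, exploiting the bound $\mathbb{E}[N_{w'}] \leq 1$ established above.

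Putting these together yields
\[
\mathbb{E}[|\text{sampled matching}|] = \sum_{w'} \Pr[N_{w'} \geq 1] \geq (1 - 1/e) \sum_{w'} \mathbb{E}[N_{w'}] = (1 - 1/e)\, \frac{\OPT'}{\rho} \geq \frac{\OPT'}{3\rho},
\]
completing the proof. The main obstacle is conceptual rather than technical: one must recognize that the relevant loss is the tie-breaking loss at each good, that this is controlled by the supply constraint which forces $\mathbb{E}[N_{w'}] \leq 1$, and that in precisely this regime the Poissonization inequality $1 - e^{-\mu} \geq (1 - 1/e)\mu$ yields the required constant-factor bound.
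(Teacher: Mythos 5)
Your proof is correct and follows essentially the same route as the paper's: decompose by goods, lower-bound the probability that each good receives at least one sampled edge by a constant times its expected load (using the supply bound to keep that load at most $1$), and sum by linearity of expectation. Your grouping of matched copies by their original vertex $v'$ makes the product formula for $\Pr[N_{w'}\geq 1]$ exact and yields the slightly better constant $1-1/e$ in place of the paper's $1/e$, but both arguments and constants deliver the stated $\OPT'/(3\rho)$ bound.
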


\iffalse
\begin{proof}[The old proof.]
Let $\mu \subseteq E$ be the matching the players coordinate on in
$G$. To simplify our analysis, we will first discard all of
the goods that contribute less than $\frac{b}{2\rho}$ to the total welfare. We
will first bound the loss of welfare due to this process. Suppose that
there are a $\gamma$ fraction of the $k$ types of goods that are
discarded.  Since on average each type of good is incident to
$\frac{b\OPT'}{k\rho}$ edges, we have
\[
\gamma \frac{b}{2\rho} + (1 - \gamma) b \geq \frac{b\OPT'}{m\rho}
\]
\ar{What is $m$?} which means
\[
 \gamma \leq \frac{2\left(\rho - \OPT'/k \right)}{2\rho - 1}.
\]
Then the remaining graph $G$ should still have a matching of size at
least
\[
R = \left(\frac{\frac{2\OPT'}{k} - 1}{2\rho - 1}\right)\left( \frac{b
  \OPT'}{\rho}\right).
\]
Now apply the sampling step to $G$ and sample a copy of each vertex
$v' \in V'$ uniformly at random. Note that each of the remaining goods
has at least $q = \frac{b}{2\rho}$ incident edges in $G$, and contributes 
welfare 1 in the sampled matching whenever at least 1 sampled agent is matched to it.

For each of the goods, the chance that it has exactly 1 person matched
to it is:
\[
q \left(\frac{1}{b}\right) \left(\frac{b-1}{b} \right)^{(q-1)} \approx
\frac{q}{b} \cdot (1/e)^{(q-1)/b} \geq \Omega\left(\frac{1}{\rho}\right)
\]
\ar{Approximately $\approx$ ???} Therefore, the sample matching has a welfare of at least
\[
\Omega\left( \frac{R}{\rho b}\right) = \Omega\left( \frac{\OPT'}{\rho^3}\right)
\]
which recovers the stated bound.
\end{proof}
\fi We will defer the proof to the \iffull appendix\else full version\fi. We now have all the
pieces to prove~\Cref{thm:many-to-one}.
\begin{proof}[Proof of~\Cref{thm:many-to-one}]
Suppose that there exists a coordination protocol $(\sigma, \pi)$ for
many-to-one matchings with a guaranteed approximation ratio of
$\rho$. By the result of~\Cref{goodwelfare}, we know that this
coordination protocol for one-to-one matchings has an approximation
ratio most $O(\rho)$. By the lower bound in~\Cref{1to1}, we know that
the length of $\sigma(G)$ is at least $\Omega(k/\rho^{4})$.\iffalse We
can turn this into coordination protocol for one-to-one matchings:
given any bipartite graph $G'$, make $b$ copies of each node in $G'$
to obtain a new graph $G$; the coordinator broadcasts message
$\sigma(G)$; each player $v'$ in the original graph $G'$ will then be
able to compute the sub-matching $\mu_{v'}$ restricted the $b$ copies
of $v'$ based on the decode function $\pi$ and the message
$\sigma(G)$; lastly, each player $v'$ sample an edge from the
$\mu_{v'}$ as its edge.

By the result of~\Cref{goodwelfare}, we know that this coordination
protocol for one-to-one matchings has an approximation ratio of at
most $\rho^3$. By~\Cref{1to1}, we know that the
length of $\sigma(G)$ is at least $\Omega(k/\rho^{12})$.
\fi
\end{proof}

\iffalse
\subsection{Lower Bound for Matchings}
We will show the following lower bound on the coordination complexity.
Let $\OPT(G)$ denote the size of the maximum matching for the
underlying graph $G$, and given any signaling scheme $\sigma$ and
matching protocol $\Pi$, let the performance ratio be $\rho = \OPT(G)
/ value(G, \sigma, \Pi)$.

\begin{theorem}
Suppose that the performance ratio at most $\rho$. Then the message
published by the coordinator has length
\[
m \geq \Omega(n/\rho^4).
\]
\end{theorem}
\fi

\section{Coordination Protocol for Linearly Separable Convex Programs}\label{sec:convex}
In this section, we give a coordination protocol for problems which
can be expressed as linearly separable convex programs, with
coordination complexity scaling only with the number of constraints
that bind between agents (so called \emph{coupling constraints},
defined below). In the next section, we show how to specialize this
protocol to the special case of many-to-one matchings, which gives
coordination complexity nearly matching our lower bound.

\begin{definition}
\label{def:convexprogram}
A linearly separable convex optimization problem consists of $n$
players and for each player $i$,
\begin{itemize}
\item a compact and bounded convex feasible set $\cF\ui\subseteq \{x\ui \in \RR^l \mid  \|x\ui\|\leq 1\}$,
\item a concave objective and $1$-Lipschitz function $v\ui \colon \cF\ui\rightarrow \RR$ such that $v\ui(\mathbf{0}) = 0$,
\item and $k$ convex constraint function $c_j\ui \colon \cF\ui \rightarrow [0,1]$ (indexed by $j = 1, \ldots , k$).
\end{itemize}
The convex optimization problem is:
\begin{align*}
\max_{x} &\sum_{i=1}^n v\ui(x\ui)\\
\mbox{subject to }& \sum_{i=1}^n c_j\ui(x\ui) \leq b_j \quad \mbox{for } j= 1, \ldots , k \iffull\quad \mbox{ (Coupling constraints)}\fi\\
& x\ui \in \cF\ui \quad \mbox{for } i = 1, \ldots, n \iffull\quad \mbox{ (Personal constraints)}\fi
\end{align*}
where each player $i$ controls the block of decision variable $x\ui$.
\end{definition}

Viewed as a coordination problem, the data held by each agent $i$ is
$D^{(i)} = \{\cF\ui\, v\ui, c_1\ui,\ldots,c_k\ui\}$, his action range
is $\cA_i = \cF\ui$, and the social objective function is $S$ the
objective of the convex program. \iffull\else We will call the first
set of constraints the~\emph{coupling constraints}, and the second set
of constraints the~\emph{personal constraints}.\fi

We will denote the product of the personal constraints by $\cF = \cF^{(1)}\times\ldots\times \cF^{(n)}$,
 the objective function by $v(x)$, and the optimal value by
 $\OPT$. In this notation we can write the problem as
\[
\max_{x\in\cF\ \textrm{and}\ \sum_{i=1}^n c_j\ui(x\ui) \leq b_j\ \textrm{for all } j} v(x).
\]
Note that here the problem is constrained both by the personal constraints $\cF$ and by the coupling constraints.
We will assume the problem above is feasible and our goal is
coordinate the players to play an aggregate solution $x =
(x\ui)_{i \in [n]}$ that is approximately feasible and optimal. \ar{Hm... What is the objective function, as we define the problem? I'm not sure how approximately feasible \emph{and} optimal fits into our framework.} Our
solution consists of two steps:
\begin{enumerate}
\item We will first introduce a regularization term $\eta \|x\|^2$ to our
 objective function, and coordinate the players to maximize the
 regularized objective. The purpose of adding this regularization term is to make the objective function \emph{strongly concave}, which will cause it to have the property that an optimal dual solution will uniquely specify the optimal primal solution.
\item Then we will show that the resulting optimal solution to the regularized problem is close to being optimal for the original (unregularized) problem. The weight of the regularization has to be traded off against the bit precision to which we need to communicate the optimal dual variables.
\end{enumerate}

\subsection{Coordination through Regularization}
In the first step, we add a small regularization term to our original
objective function. Consider the following convex optimization
problem:
\begin{align*}
\max_{x\in\cF\ \textrm{and}\ \sum_{i=1}^n c_j\ui(x\ui) \leq b_j\ \textrm{for all}\ j} &\; v'(x) = \sum_{i=1}^n v\ui(x\ui) - \frac{\eta}{2} \|x\|^2
\end{align*}
\begin{claim}
The objective function $v'$ is $\eta$-strongly concave.
\end{claim}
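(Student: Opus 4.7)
The plan is to appeal directly to the standard characterization of strong concavity: a function $f \colon \cF \to \RR$ is $\eta$-strongly concave if and only if $f(x) + \frac{\eta}{2}\|x\|^2$ is concave on $\cF$. Applying this characterization to $v'$ reduces the claim to showing that
\[
v'(x) + \tfrac{\eta}{2}\|x\|^2 \;=\; \sum_{i=1}^n v\ui(x\ui)
\]
is concave in $x = (x\ui)_{i \in [n]}$ on the product domain $\cF = \cF^{(1)} \times \cdots \times \cF^{(n)}$.

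This is immediate from the hypotheses in \Cref{def:convexprogram}: each $v\ui$ is concave on $\cF\ui$, and when lifted to the product space it depends only on the $i$-th block $x\ui$, so it remains concave as a function of the full vector $x$. The sum of finitely many concave functions is concave, so the displayed expression is concave on $\cF$, and the claim follows.

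I expect there to be no real obstacle here; the only thing to be careful about is picking one definition of $\eta$-strong concavity and using it consistently. If we prefer the midpoint inequality as the definition (i.e., for all $x, y \in \cF$ and $\lambda \in [0,1]$, $f(\lambda x + (1-\lambda) y) \geq \lambda f(x) + (1-\lambda) f(y) + \tfrac{\eta}{2}\lambda(1-\lambda)\|x-y\|^2$), the proof is the same two-line computation: concavity of $\sum_i v\ui(x\ui)$ gives the base inequality without the quadratic term, and the identity $-\tfrac{\eta}{2}\|\lambda x + (1-\lambda) y\|^2 = -\lambda \tfrac{\eta}{2}\|x\|^2 - (1-\lambda)\tfrac{\eta}{2}\|y\|^2 + \tfrac{\eta}{2}\lambda(1-\lambda)\|x-y\|^2$ supplies the strong-concavity modulus $\eta$ contributed by $-\tfrac{\eta}{2}\|x\|^2$.
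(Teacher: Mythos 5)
Your proof is correct and is exactly the standard argument the paper implicitly relies on (the paper states this claim without proof): by the characterization of $\eta$-strong concavity, it suffices that $v'(x)+\tfrac{\eta}{2}\|x\|^2=\sum_i v\ui(x\ui)$ is concave, which holds since each $v\ui$ is concave in its own block by \Cref{def:convexprogram} and a sum of concave functions of disjoint blocks is concave on the product domain. Your alternative derivation via the midpoint identity is also correct and consistent with the squared $\ell_2$ norm on the concatenated vector used throughout the paper.
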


To solve the convex program, we will work with the partial
\emph{Lagrangian} $\cL(x, \lambda)$, which results from bringing only
the coupling constraints into the objective via Lagrangian dual
variables, but leaving the personal constraints to continue to
constrain the primal feasible region: \iffull
\begin{align*}
\cL(x, \lambda) &= \sum_{i=1}^n \left(v\ui\left( x\ui\right) - \frac{\eta}{2} \|x\ui\|^2 -
\sum_{j=1}^k \lambda_j\left( \sum_{i=1} c_j\ui\left( x\ui \right) - b_j \right)\right)\\
&= \sum_{i=1}^n \left(\left[ v\ui\left(x\ui\right) - \frac{\eta}{2} \|x_i\|^2 \right]-
\sum_{j=1}^k\lambda_j\left( \sum_{i=1} c_j\ui\left( x\ui \right) - b_j \right)\right)
\end{align*}
\else
\begin{align*}
  \sum_{i=1}^n \left(v\ui\left( x\ui\right) - \frac{\eta}{2} \|x\ui\|^2 -
\sum_{j=1}^k \lambda_j\left( \sum_{i=1} c_j\ui\left( x\ui \right) - b_j \right)\right)
\end{align*}
\fi

Let $\OPT'$ denote the optimum of the convex program, and by \emph{strong
duality} we have
\[
\max_{x\in \cF} \min_{\lambda\in \RR_+^k} \cL(x, \lambda) =
\min_{\lambda\in \RR_+^k} \max_{x\in \cF} \cL(x, \lambda) = \OPT'
\]
Fixing the optimal dual variables, $\lambda$, the optimal primal solution $y$ satisfies
\[
y = \argmax_{x\in \cF} \cL(x, \lambda)
\]
Note that the result of moving the coupling constraints into the Lagrangian is that we can now write the primal optimization problem over a feasible region defined only by the personal constraints. Because of this fact, and because the Lagrangian objective is linearly separable across players, given $\lambda$, each player's portion of the solution $y\ui$ is
\begin{align}\label{br}
\argmax_{x\in \cF_i} \left[v\ui\left( x\ui\right) -
  \frac{\eta}{2} \|x\ui\|^2 - \sum_{j=1}^k \lambda_j c_j\ui\left(x\ui\right)\right].
\end{align}

Thus, if the $\argmax$ were unique, this means that the optimal dual variables $\lambda$ would be sufficient to coordinate each of the parties to find their portion of the optimal solution, without the need for further communication (the problem, in general, is that the $\argmax$ need not be unique, and ties may need to be broken in a coordinated fashion). However, because we have added a strongly concave regularizer, the $\argmax$ is unique in our setting:

\begin{claim}
The solution to $$\argmax_{x\in \cF_i}\;v\ui\left( x\ui\right)
- \frac{\eta}{2} \|x\ui\|^2 - \sum_{j=1}^k \lambda_j\left( \sum_{i=1} c_j\ui\left( x\ui \right) \right)$$ is unique.
\end{claim}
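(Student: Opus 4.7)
The plan is to argue that the objective being maximized is a strongly concave function over a convex set, and then invoke the standard fact that a strongly concave function attains its maximum at a unique point on any convex set.

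First I would verify strong concavity of the objective $f(x\ui) := v\ui(x\ui) - \frac{\eta}{2}\|x\ui\|^2 - \sum_{j=1}^k \lambda_j c_j\ui(x\ui)$. The term $v\ui$ is concave by assumption; the quadratic $-\frac{\eta}{2}\|x\ui\|^2$ is $\eta$-strongly concave; and each $-\lambda_j c_j\ui$ is concave, since $c_j\ui$ is convex and $\lambda_j\geq 0$ (as $\lambda$ lies in $\RR_+^k$). A sum of a concave function and an $\eta$-strongly concave function is $\eta$-strongly concave, so $f$ is $\eta$-strongly concave on $\cF\ui$.

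Next I would recall that $\cF\ui$ is a convex set (part of the definition of a linearly separable convex optimization problem). I would then apply the standard fact: if $g$ is a strongly concave function on a convex set $C$ and attains its maximum on $C$, that maximum is attained at a unique point. The short argument is that if $x,y\in C$ were two distinct maximizers with $g(x)=g(y)=M$, then by $\eta$-strong concavity
\[
g\!\left(\tfrac{x+y}{2}\right) \;\geq\; \tfrac{1}{2}g(x) + \tfrac{1}{2}g(y) + \tfrac{\eta}{8}\|x-y\|^2 \;=\; M + \tfrac{\eta}{8}\|x-y\|^2 \;>\; M,
\]
contradicting optimality of $x,y$; the midpoint lies in $C$ by convexity.

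The only mild subtlety is ensuring a maximizer exists at all, which follows because $\cF\ui$ is compact (by Definition~\ref{def:convexprogram}) and $f$ is continuous, so Weierstrass guarantees attainment. After that, uniqueness is immediate from the midpoint argument above. I do not expect any real obstacle here; the claim is essentially a direct consequence of the regularizer making the per-player objective $\eta$-strongly concave on a compact convex domain.
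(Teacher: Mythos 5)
Your proof is correct and follows essentially the same route as the paper, which simply observes that $v\ui(x\ui) - \frac{\eta}{2}\|x\ui\|^2$ is strongly concave (so the whole objective is, since the $-\lambda_j c_j\ui$ terms are concave for $\lambda \in \RR_+^k$). You fill in the details the paper leaves implicit — existence via compactness and uniqueness via the midpoint argument — but there is no substantive difference in approach.
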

\iffull
\begin{proof}
This follows from the fact that the function $v\ui\left( x\ui\right)
- \frac{\eta}{2} \|x\ui\|^2$ is strongly concave.
\end{proof}
\fi

This gives rise to our simple coordination mechanism $\reg$. The
mechanism first computes the optimal dual variables in our regularized partial Lagrangian problem, rounds them to finite precision, and then publishes these variables.  Then
each individual player finds her part of the near optimal solution by performing the optimization in \Cref{br}. The details are in
Algorithm~\ref{regc}.

\begin{algorithm}[H]
\label{regc}
\caption{Coordination Protocol for Linearly Separable Convex Programs $\reg(\eta, \eps)$}
  \begin{algorithmic}
    \STATE{\textbf{Input:} a linearly separable convex program
      instance $I$, regularization parameter $\eta$ and target
      accuracy $\eps$}
    \INDSTATE{Initialize: $\alpha = \frac{\eta\eps^2}{4\sqrt{nk}}$}
    \INDSTATE{Modify the objective of $I$ into
\[
\max_{x\in\cF} v(x) - \frac{\eta \|x\|^2}{2}
\]
}
    \INDSTATE{Compute the optimal dual solution $\lambda\ub$ for the modified convex program}
    \INDSTATE{Round each coordinate of $\lambda\ub$ into a multiple of $\alpha/\sqrt{k}$ and obtain $\hat\lambda$}
\iffull
    \INDSTATE{Broadcast the rounded dual solution $\hat\lambda$. To decode, each player $i$ computes:
      \[
\hat x\ui(\lambda) = \argmax_{x\in \cF_i} \left[v\ui\left( x\ui\right) -
  \frac{\eta}{2} \|x\ui\|^2 - \sum_{j=1}^k \hat\lambda_j c_j\ui\left(x\ui\right)\right]
      \]
\else
    \INDSTATE{Broadcast the rounded dual solution $\hat\lambda$. To decode, each player $i$ computes $\hat x\ui(\lambda)$:
      \[
 \argmax_{x\in \cF_i} \left[v\ui\left( x\ui\right) -
  \frac{\eta}{2} \|x\ui\|^2 - \sum_{j=1}^k \hat\lambda_j c_j\ui\left(x\ui\right)\right]
      \]
\fi
}
    \end{algorithmic}
\end{algorithm}

Next we show that the resulting solution is close to the optimal solution of the regularized convex
program (i.e. that we do not lose much by truncating the dual variables to have finite bit precision). Let $(x\ub, \lambda\ub)$ be an optimal primal-dual pair for
the regularized convex program. Note that since the objective of the
program is strongly concave, $x\ub$ is unique. First, we will
show that if the broadcast dual vector $\hat \lambda$ is close to an
optimal dual solution $\lambda\ub$, the resulting solution $\hat x$
will also be close to the optimal primal solution $x\ub$.

\begin{restatable}{lemma}{dualprimal}
\label{dual-primal}
Suppose we have a dual vector
$\hat \lambda$ such that $\|\lambda\ub - \hat\lambda\| \leq \alpha$.
Let $\hat x = \argmax_{x\in \cF} \cL(x, \hat \lambda)$, then
\[
 \|\hat x - x\ub\| \leq \frac{2 \sqrt{\alpha} (nk)^{1/4}}{\sqrt \eta}
\]
\end{restatable}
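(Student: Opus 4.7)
The plan is to apply $\eta$-strong concavity of the partial Lagrangian $\cL(\cdot, \lambda)$ at its two maximizers $x\ub$ and $\hat x$, and then leverage the affine dependence of $\cL$ on $\lambda$ to produce a ``cross-term'' that can be controlled by Cauchy--Schwarz. First I would verify that for any fixed $\lambda \geq 0$ the map $x \mapsto \cL(x,\lambda)$ is $\eta$-strongly concave on $\cF$: each $v\ui$ is concave, each term $-\lambda_j c_j\ui(x\ui)$ is concave (since $\lambda_j \geq 0$ and $c_j\ui$ is convex), and the regularizer $-\tfrac{\eta}{2}\|x\|^2$ supplies the strong concavity.

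Since $x\ub = \argmax_{x\in \cF} \cL(x, \lambda\ub)$ and $\hat x = \argmax_{x\in \cF} \cL(x, \hat\lambda)$, strong concavity yields the two ``optimality gap'' inequalities
\[
\cL(x\ub, \lambda\ub) - \cL(\hat x, \lambda\ub) \geq \tfrac{\eta}{2}\|\hat x - x\ub\|^2, \qquad \cL(\hat x, \hat\lambda) - \cL(x\ub, \hat\lambda) \geq \tfrac{\eta}{2}\|\hat x - x\ub\|^2.
\]
Adding them and using the identity $\cL(x,\hat\lambda) - \cL(x,\lambda\ub) = -\langle \hat\lambda - \lambda\ub,\, C(x) - b\rangle$ (with $C(x)_j := \sum_i c_j\ui(x\ui)$), which holds because $\cL(x,\cdot)$ is affine, the left-hand side telescopes to $\langle \hat\lambda - \lambda\ub,\, C(x\ub) - C(\hat x)\rangle$. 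Thus
\[
\eta \|\hat x - x\ub\|^2 \;\leq\; \langle \hat\lambda - \lambda\ub,\, C(x\ub) - C(\hat x)\rangle \;\leq\; \|\hat\lambda - \lambda\ub\|\, \|C(x\ub) - C(\hat x)\|
\]
by Cauchy--Schwarz. Substituting $\|\hat\lambda - \lambda\ub\| \leq \alpha$ and taking square roots produces a bound of the desired form.

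The main obstacle is sharpening the final norm bound on $\|C(x\ub) - C(\hat x)\|$ enough to recover the stated $(nk)^{1/4}$ scaling. The crude estimate $|C_j(x\ub) - C_j(\hat x)| \leq n$ (which follows from $c_j\ui \in [0,1]$) only gives $\|C(x\ub) - C(\hat x)\|_2 \leq n\sqrt{k}$, i.e.\ a factor $\sqrt{n}$ larger than the $O(\sqrt{nk})$ needed. Closing this gap will require exploiting coupling-constraint feasibility $C_j(x\ub) \leq b_j$ (so that $C(x\ub)$ and $C(\hat x)$ cannot both be maximal on every one of the $k$ coordinates) together with the personal-constraint boundedness $\|x\ui\|\leq 1$, essentially using that at the regularized optimum not too much ``budget'' $b_j$ can be spread across the $k$ coupling constraints.
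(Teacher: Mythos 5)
Your argument is correct in structure but takes a genuinely different route from the paper. You apply $\eta$-strong concavity twice (once at each maximizer), add the two optimality gaps, and use the affine dependence of $\cL$ on $\lambda$ to telescope the left-hand side into $\langle \hat\lambda - \lambda\ub,\, C(x\ub) - C(\hat x)\rangle$, which you then control by Cauchy--Schwarz. The paper instead observes that for fixed $x$ the map $\lambda \mapsto \cL(x,\lambda)$ is Lipschitz, deduces that the dual function $g(\lambda) = \max_{x\in\cF}\cL(x,\lambda)$ is Lipschitz with the same constant, bounds $|\cL(x\ub,\hat\lambda) - \cL(\hat x,\hat\lambda)|$ by a triangle inequality, and then applies strong concavity once. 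Both arguments bottom out at the same quantity (a Lipschitz constant of $\cL$ in $\lambda$, i.e.\ a bound on $\|C(x)\|_2$ or $\|C(x\ub)-C(\hat x)\|_2$); your version is marginally tighter, picking up a factor $\eta$ rather than $\eta/2$ from the strong-concavity step.

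The ``obstacle'' you flag at the end is real, but you should know it is not something the paper resolves: the paper simply asserts $\bigl\|\bigl(\sum_{i} c_j\ui(x\ui)\bigr)_{j=1}^k\bigr\| \le \sqrt{nk}$, whereas the hypotheses $c_j\ui \in [0,1]$ only give $n\sqrt{k}$, exactly as you observe; with the honest constant both proofs yield $\|\hat x - x\ub\| \le O(\sqrt{\alpha}\, n^{1/2} k^{1/4} / \sqrt{\eta})$ rather than the stated $(nk)^{1/4}$ scaling (this only perturbs logarithmic factors downstream, since $\alpha$ enters the coordination complexity inside a logarithm). One caution about your proposed repair: you cannot invoke coupling-constraint feasibility for $\hat x$, since $\hat x$ maximizes $\cL(\cdot,\hat\lambda)$ over $\cF$ alone and need not satisfy $C_j(\hat x) \le b_j$; any sharpening would have to come from an added normalization such as $\sum_j c_j\ui(x\ui) \le 1$ per player (which does hold in the matching application).
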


The proof relies on some basic properties of the Lagrangian and strong
concavity and is deferred to the \iffull appendix\else full version\fi.

\begin{lemma}\label{noconstants}
The coordination mechanism $\reg$ instantiated with regularization
parameter $\eta$ and target accuracy parameter $\eps$ will coordinate
the players to play a solution $\hat x$ that satisfies $\|\hat x -
x\ub\|\leq \eps$, and has a coordination complexity of
$O(k\log(nk/\eta\eps))$.
\end{lemma}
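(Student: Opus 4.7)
My plan is to verify the two claims in the statement separately: the primal approximation bound $\|\hat x - x\ub\|\leq \eps$, and the coordination complexity bound $O(k\log(nk/\eta\eps))$.

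For the approximation bound, I would first analyze the rounding step in $\reg$. Each coordinate of $\lambda\ub$ is snapped to the nearest multiple of $\alpha/\sqrt{k}$, so the per-coordinate error is at most $\alpha/(2\sqrt{k})$. Summing squared errors across the $k$ coordinates yields $\|\hat\lambda - \lambda\ub\| \leq \sqrt{k\cdot \alpha^2/(4k)} = \alpha/2 \leq \alpha$. This puts me in the hypothesis of Lemma~\ref{dual-primal}, which gives
\[
\|\hat x - x\ub\| \;\leq\; \frac{2\sqrt{\alpha}\,(nk)^{1/4}}{\sqrt{\eta}}.
\]
Plugging in the choice $\alpha = \eta\eps^2/(4\sqrt{nk})$, the factor $\sqrt{\alpha}$ becomes $\sqrt{\eta}\,\eps/(2(nk)^{1/4})$, at which point $(nk)^{1/4}$ and $\sqrt{\eta}$ cancel and the bound collapses to exactly $\eps$, as desired.

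For the coordination complexity, I would bound the number of bits needed to transmit each of the $k$ coordinates of $\hat\lambda$. The grid spacing is $\alpha/\sqrt{k} = \eta\eps^2/(4k\sqrt{n})$, so the per-coordinate bit length is $O\bigl(\log(L\sqrt{k}/\alpha)\bigr)$ once I establish a range bound $\|\lambda\ub\|_\infty \leq L$. Such a bound follows from standard sensitivity/Slater arguments applied to the regularized program: the objective $v'$ lies in $[-n-\eta/2,\,n]$ on the feasible set (since $\|x\ui\|\leq 1$, $v\ui(0)=0$, and each $v\ui$ is $1$-Lipschitz), and each constraint function takes values in $[0,1]$, so assuming any strictly feasible primal point exists the dual optimum is bounded by a polynomial in $n,k$ (and possibly $1/\eta$). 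Any such polynomial bound $L$ contributes only an additive $O(\log L)$ term inside the logarithm, so the per-coordinate cost is $O(\log(nk/(\eta\eps)))$ and the total cost across $k$ coordinates is $O(k\log(nk/(\eta\eps)))$.

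The main obstacle is the $L$-bound on $\|\lambda\ub\|_\infty$, since the lemma statement is written as though this is automatic but technically requires a Slater-type assumption to control the shadow prices. If strict feasibility is not assumed explicitly, I would either invoke it as a standing hypothesis for the convex program, or argue that one may shrink $b_j$ by a negligible amount (absorbed into $\eps$) to force a Slater gap of $\poly(1/(nk))$, which still yields an $L = \poly(n,k)$ bound that vanishes inside the outer logarithm. Everything else reduces to routine substitution.
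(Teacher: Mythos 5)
Your proposal is correct and follows essentially the same route as the paper: bound the rounding error per coordinate to get $\|\hat\lambda - \lambda\ub\|\leq\alpha$, invoke \Cref{dual-primal}, and substitute $\alpha = \eta\eps^2/(4\sqrt{nk})$ so the bound collapses to $\eps$, with the bit count being $k$ times the logarithm of the inverse grid spacing. The one place you go beyond the paper is the range bound on $\|\lambda\ub\|_\infty$: the paper's proof simply asserts that $\hat\lambda$ can be written with $O(k\log(\sqrt{k}/\alpha))$ bits, silently assuming the dual optimum lives in a range of size $\poly(n,k)$, so your observation that this requires a Slater-type argument (or an explicit feasibility assumption) identifies a step the paper elides rather than a flaw in your own reasoning.
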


\begin{proof}
Note that $\alpha = \frac{\eta \eps^2}{4 \sqrt{nk}}$, and the
mechanism rounds each coordinate of the optimal dual solution
$\lambda\ub$ to a multiple of $\alpha/\sqrt{k}$, so the approximate
dual vector $\hat \lambda$ can be specified with $O(k \log(\sqrt{k}
/ \alpha))$ bits.

Since for each coordinate $j$, $|\lambda_j\ub - \hat \lambda_j| \leq
\alpha/\sqrt k$, we also have that that $\|\lambda\ub -
\hat\lambda\| \leq \alpha$. By~\Cref{dual-primal}, we know that
$\|\hat x - x\ub\| \leq \eps$.
%% Let .  It suffices for the
%% mechanism to find the optimal dual vector $\lambda\ub$, round each
%% coordinate to be a multiple of $\alpha/\sqrt{k}$, and broadcast the
%% rounded dual solution using at most $O(k\log(nk/\eta \eps))$ bits.
\end{proof}

\subsection{Approximate Feasibility and Optimality}
Now we carry out the second step to show that if we choose the
regularization parameter $\eta$ carefully, the solution resulting from
the coordination mechanism above is both approximately feasible and
optimal. Let $x^*$ denote the optimal solution of the original convex
program, $x\ub$ denote the optimal solution of the regularized convex
program, and $\hat x(\eta)$ denote the solution resulting from the
coordination mechanism when we use parameter $\eta$.

As an intermediate step, we will first bound the objective difference
between $x\ub$ and $x^*$.

\begin{lemma}
For any choice of $\eta$, $v(x^*) - v(x\ub) \leq \frac{\eta n}{2}$.
\end{lemma}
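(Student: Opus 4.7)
The plan is to use the optimality of $x\ub$ for the regularized program and then bound the regularization term using the feasibility assumption $\|x\ui\|\le 1$ on each block. Since both $x^*$ and $x\ub$ lie in the same feasible set (they satisfy the same personal and coupling constraints --- the regularization only alters the objective, not the constraints), optimality of $x\ub$ for the regularized objective $v'(x)=v(x)-\tfrac{\eta}{2}\|x\|^2$ gives
\[
v(x\ub)-\tfrac{\eta}{2}\|x\ub\|^2 \;\ge\; v(x^*)-\tfrac{\eta}{2}\|x^*\|^2,
\]
and rearranging yields $v(x^*)-v(x\ub) \le \tfrac{\eta}{2}\bigl(\|x^*\|^2-\|x\ub\|^2\bigr) \le \tfrac{\eta}{2}\|x^*\|^2$.

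To finish, I would invoke the standing assumption from \Cref{def:convexprogram} that each personal feasible set $\cF\ui$ satisfies $\|x\ui\|\le 1$. Since $x^*\in \cF = \cF^{(1)}\times\cdots\times\cF^{(n)}$, we have $\|x^*\|^2 = \sum_{i=1}^n \|x^{*,(i)}\|^2 \le n$, which immediately gives $v(x^*)-v(x\ub) \le \tfrac{\eta n}{2}$.

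There is essentially no obstacle here; this is a one-line consequence of the definition of $x\ub$ as the maximizer of the regularized objective combined with the block-wise norm bound baked into the problem definition. The only thing to be careful about is making explicit that $x^*$ is feasible for the regularized program (so that the inequality $v'(x\ub)\ge v'(x^*)$ applies) --- but this holds because regularization does not change the feasible region. No properties of the Lagrangian or strong concavity are needed; this lemma is purely about comparing objective values over a fixed feasible set.
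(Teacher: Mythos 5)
Your proof is correct and follows exactly the paper's argument: compare the regularized objective values of $x^*$ and $x\ub$ over the common feasible region, rearrange, and bound $\|x^*\|^2 \le n$ using the block-wise constraint $\|x\ui\|\le 1$ from \Cref{def:convexprogram}. Nothing is missing.
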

\iffull
\begin{proof}
Since both $x^*$ and $x\ub$ are in the feasible region of the
regularized convex program, we know that
\[
v(x\ub) - \frac{\eta}{2} \|x\ub\|^2 \geq v(x^*) - \frac{\eta}{2}\|x^*\|^2.
\]
Since for each $i$, $(x\ub)\ui, (x^*)\ui$ has $\ell_2$ norm bounded by
1, then $\|x\ub\|^2 - \|x^*\|^2 \leq n$. Therefore, we must have
$v(x\ub) \geq v(x^*) - \frac{\eta n}{2}$
\end{proof}
\fi

Next we bound the objective difference between $\hat x$ and $x\ub$
using Lipschitzness.

\begin{lemma}
  Suppose that $\|\hat x - x\ub \| \leq \eps$, then $$v(x\ub) - v(\hat
  x)\leq n\eps.$$
\end{lemma}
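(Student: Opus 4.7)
The plan is to exploit the linear separability of the objective $v(x)=\sum_{i=1}^n v\ui(x\ui)$ together with the assumed $1$-Lipschitz property of each block function $v\ui$. First I would write
\[
v(x\ub)-v(\hat x)=\sum_{i=1}^n \bigl(v\ui((x\ub)\ui)-v\ui(\hat x\ui)\bigr),
\]
so that the problem reduces to controlling each block-wise difference separately.

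Next I would invoke the $1$-Lipschitz assumption on each $v\ui$ (stated in Definition~\ref{def:convexprogram}) to bound
\[
v\ui((x\ub)\ui)-v\ui(\hat x\ui)\le \|(x\ub)\ui-\hat x\ui\|.
\]
For each $i$, the block norm is in turn bounded by the full $\ell_2$ norm of the difference, since removing coordinates can only decrease the Euclidean norm: $\|(x\ub)\ui-\hat x\ui\|\le \|x\ub-\hat x\|\le \eps$ by hypothesis.

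Summing this per-block estimate over $i=1,\ldots,n$ then gives
\[
v(x\ub)-v(\hat x)\le \sum_{i=1}^n \|(x\ub)\ui-\hat x\ui\|\le n\eps,
\]
which is the desired inequality. There is no real obstacle here; the argument is a straightforward application of Lipschitzness, and the bound $n\eps$ is the one obtained by bounding each of the $n$ block differences uniformly by $\eps$. (One could even sharpen this to $\sqrt{n}\,\eps$ via Cauchy--Schwarz on $\sum_i \|(x\ub)\ui-\hat x\ui\|$, but the weaker $n\eps$ form stated in the lemma is all that is needed in the subsequent tuning of $\eta$ and $\eps$.)
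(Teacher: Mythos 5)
Your argument is correct and matches the paper's (very brief) proof, which likewise just invokes the $1$-Lipschitzness of each $v\ui$ to conclude that $v$ is $n$-Lipschitz in the aggregate vector. Your added observation that Cauchy--Schwarz would sharpen the bound to $\sqrt{n}\,\eps$ is also valid, though, as you note, unnecessary for the paper's purposes.
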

\iffull
\begin{proof}
The proof follows easily from the fact that each $v\ui$ is 1-Lipschitz
and the function $v$ is $n$-Lipschitz in the aggregate vector $x$.
\end{proof}
\fi

\begin{theorem}
The coordination mechanism $\reg(\eta, \eps)$ coordinates the players
to play a joint solution $\hat x$ that satisfies
\[
v(\hat x) \geq \OPT - n(\eps + \eta) \qquad \mbox{and} \qquad \min_{x\in \cF} \|x- \hat x\| \leq \eps
\]
and has coordination complexity of $O(k \log(nk / \eta\eps))$.
%% such that the players could
%% coordinate a solution $\hat x$ that is
%% $O(\log(nk/\eta\eps))$-feasible, that is
%% \[
%% \min_{x\in \cF} \|x - \hat x\| \leq \frac{\gamma\OPT}{2n}
%% \]
%% and has objective value
%% \[
%% v(\hat x) \geq (1 - \gamma)\OPT,
%% \]
%% as long as $\OPT$ is lower bounded by a constant.
\end{theorem}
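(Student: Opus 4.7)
The plan is to combine the three preparatory results that precede the theorem: \Cref{noconstants}, which controls $\|\hat x - x\ub\|$ and delivers the broadcast length, together with the two intermediate lemmas bounding $v(x^*) - v(x\ub) \leq \eta n/2$ and $v(x\ub) - v(\hat x) \leq n\eps$. Since each ingredient has already been established, the remaining work is essentially bookkeeping.

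First I would invoke \Cref{noconstants} with the theorem's parameters $\eta$ and $\eps$ to obtain a joint solution $\hat x$ satisfying $\|\hat x - x\ub\| \leq \eps$ while broadcasting only $O(k\log(nk/\eta\eps))$ bits. This immediately settles the coordination-complexity half of the theorem. For the approximate-feasibility claim, I would observe that the regularized convex program has exactly the same feasible region as the original one (only the objective differs by the $\eta\|x\|^2/2$ term), so $x\ub$ satisfies both the personal and coupling constraints. Consequently the distance from $\hat x$ to the feasible set is bounded by $\|\hat x - x\ub\| \leq \eps$, yielding the stated $\min_{x\in \cF}\|x - \hat x\| \leq \eps$.

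For the welfare guarantee I would simply chain the two intermediate lemmas. Since $x^*$ is feasible for the regularized program, the first lemma gives $v(x\ub) \geq v(x^*) - \eta n/2 = \OPT - \eta n/2$. The second lemma, together with $\|\hat x - x\ub\|\leq \eps$, gives $v(\hat x) \geq v(x\ub) - n\eps$. Adding these inequalities yields $v(\hat x) \geq \OPT - n\eps - \eta n/2 \geq \OPT - n(\eps + \eta)$, as required.

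The only potentially subtle point I expect to have to address is the implicit assumption inside \Cref{noconstants} that the optimal dual coordinates $\lambda\ub_j$ lie in a bounded range, so that rounding each of the $k$ coordinates to the grid $\alpha/\sqrt k$ with $\alpha = \eta\eps^2/(4\sqrt{nk})$ actually costs $O(\log(\sqrt k/\alpha)) = O(\log(nk/\eta\eps))$ bits apiece. This boundedness follows from the standing hypotheses (bounded feasible sets $\cF\ui$, $1$-Lipschitz $v\ui$, and coupling functions $c_j\ui \in [0,1]$) via the usual Slater-style argument for the regularized Lagrangian; modulo that routine check, the rest of the proof is just substitution of the established inequalities.
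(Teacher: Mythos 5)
Your proposal is correct and matches the paper's intended argument: the paper's own proof is literally ``Follows easily from the previous lemmas,'' and you carry out exactly that chaining of \Cref{noconstants} with the two objective-gap lemmas (plus the observation that $x\ub$ is feasible, so $\min_{x\in\cF}\|x-\hat x\|\le\|\hat x - x\ub\|\le\eps$). Your side remark about bounding the optimal dual variables is a reasonable point that the paper also leaves implicit, but it does not change the structure of the argument.
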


\begin{proof}
Follows easily from the previous lemmas.
%% Suppose that we set $\eps =\eta = \gamma\OPT/2n$ and we could
%% coordinate the players to play a solution $\hat x$ such that $\|\hat x
%% - x\ub\|\leq \eps$. Note that this requires $O(k \log(nk/\gamma))$
%% bits of coordination bits by~\Cref{noconstants}.  From the previous
%% lemmas, we know
%% \[
%% v(\hat x) \geq \OPT - n(\eps + \eta) = (1 - \gamma)\OPT.
%% \]
%% Note that $x\ub\in \cF$, therefore
%% \[
%% \min_{x\in \cF} \|x - \hat x\| \leq \|x\ub - \hat x\|\leq \eps = \gamma \OPT / 2n
%% \]
%% which completes our proof.
\end{proof}

\subsection{Application to Many-to-One Matchings}
Next we show a simple instantiation of our coordination mechanism for
linearly separable convex programs to give a coordination complexity
upper bound for many-to-one matchings.  First, let's consider the
following linear program formulation of the matching problem.\sw{I
  decided to write unweighted matching, which is more consistent with
  lower bound and gives multiplicative approx}\ar{Did you switch back?
  Doesn't look unweighted to me.}
\begin{align}
\max_{x} &\sum_{i=1}^n \sum_{j = 1}^k  \; v_{i,j} \, x_{i,j} \label{eq:obj}\\
\mbox{subject to }& \sum_{i=1}^n x_{i,j} \leq b_j \quad \mbox{for } j= 1, \ldots , k \label{eq:supply}\\
& \sum_{j=1}^k x_{i,j} \leq 1 \quad \mbox{for } j= 1, \ldots , k\label{eq:1dude}\\
& x_{i,j} \geq 0 \quad \mbox{for } i = 1, \ldots, n \mbox{ and } j = 1, \ldots , k\label{eq:nnegative}
\end{align}

Observe that the matching linear program is an example of
a linearly separable convex program
as defined in~\Cref{def:convexprogram}. Each player $i$ has valuation $v_{i,j}\in
\{0,1\}$ for each type of good $j$ and controls the decision variables
$\{x_{i,j}\}_{j=1}^k$. Each supply constraint in~\Cref{eq:supply}
corresponds to a coupling constraint, and constraints in
both~\Cref{eq:1dude} and~\Cref{eq:nnegative} are personal constraints.

A nice property about the matching linear program is that any extreme
point is integral.\sw{cite!} However, this structure no longer holds
if we add a regularization term to the welfare objective, so the
resulting solution $\hat x$ resulting from the coordination mechanism
will be fractional.  To obtain an integral solution, we can simply use
independent rounding, which does not require any further
coordination. In order to obtain an integral solution, each player $i$
will take their portion of the fractional solution $(\hat
x_{i,j})_{j=1}^k$ and will independently sample a good by selecting
each good $j$ with probability $\hat x_{i,j}$. We will continue to use
similar notation: let $v(\cdot)$ denote the welfare objective in the
linear program, let $x^*$ be the optimal solution for the matching
linear program with welfare $\OPT$, $\hat x$ be the optimal solution
for the regularized program with welfare $\hat V$, $x'$ be the rounded
solution of $\hat x$, and let $\cF$ denote the feasible region defined
by all the constraints of~\Cref{eq:supply} in the linear program. The
following lemma bounds the loss of welfare due to rounding. \iffull For
details of the proof, see the full version.\fi

\begin{lemma}
Let $\beta\in (0,1)$. Then with probability at least $1-\beta$, the
rounded solution $x'$ satisfies
\[
v(x') \geq \left(1 - \frac{\log(2/\beta)}{\sqrt{\hat V}} \right) \hat V
\]
\end{lemma}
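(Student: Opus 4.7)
The plan is to express $v(x')$ as a sum of independent $[0,1]$-valued random variables, one per player, and then apply a lower-tail multiplicative Chernoff bound. First I would observe that because the coordination mechanism produces $\hat x$ satisfying every personal constraint --- in particular $\sum_j \hat x_{i,j} \leq 1$ --- the independent rounding step assigns each player to at most one good, so $\sum_j x'_{i,j} \leq 1$ almost surely. Combined with $v_{i,j}\in\{0,1\}$ from the matching setup, this means the per-player contribution $Z_i := \sum_j v_{i,j}\, x'_{i,j}$ always lies in $[0,1]$. The $Z_i$'s are mutually independent because each player draws her good independently of all others, and by linearity $\Expectation\bigl[\sum_i Z_i\bigr]=\sum_{i,j} v_{i,j}\hat x_{i,j}=\hat V$. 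Thus $v(x')=\sum_i Z_i$ is exactly a sum of independent $[0,1]$-valued random variables with mean $\hat V$, which is the setting for Chernoff.

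Next I would invoke the standard lower-tail multiplicative Chernoff bound: for any $\delta\in(0,1]$,
\[
\Pr\!\bigl[v(x')\leq(1-\delta)\hat V\bigr] \;\leq\; \exp\!\bigl(-\delta^2\hat V/2\bigr).
\]
I would set $\delta=\log(2/\beta)/\sqrt{\hat V}$, which is exactly the deviation quantified in the lemma. If this $\delta\geq 1$, then the target lower bound $(1-\delta)\hat V$ is already non-positive, so the conclusion holds trivially from $v(x')\geq 0$; so I may assume $\delta<1$. Plugging in, the failure probability becomes $\exp\!\bigl(-\log^2(2/\beta)/2\bigr)$, and the remaining task is to check this is at most $\beta$.

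The only real bookkeeping step, and therefore the main (mild) obstacle, is confirming that $\exp\!\bigl(-\log^2(2/\beta)/2\bigr)\leq\beta$. Setting $t=\log(2/\beta)\geq \log 2$, this is equivalent to $t^2/2 \geq t-\log 2$, i.e., to the quadratic $t^2/2 - t+\log 2$ being non-negative for all such $t$. Its discriminant is $1-2\log 2<0$, so the quadratic is strictly positive on all of $\RR$, and the inequality follows. This closes the Chernoff estimate and yields $v(x')\geq(1-\log(2/\beta)/\sqrt{\hat V})\hat V$ with probability at least $1-\beta$, as claimed.
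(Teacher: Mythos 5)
Your proof is correct and follows essentially the same route as the paper's: both apply a multiplicative lower-tail Chernoff bound to the independently rounded solution with mean $\hat V$ and then choose $\delta$ to match the stated deviation. You are in fact slightly more careful than the paper --- you identify the independent per-player $[0,1]$ summands explicitly, dispose of the degenerate case $\delta\geq 1$, and verify the bookkeeping inequality $\exp\bigl(-\log^2(2/\beta)/2\bigr)\leq\beta$, which the paper leaves implicit.
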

\iffull
\begin{proof}
Since
$\Expectation\left[\sum_{i=1}^n\sum_{j=1}^k x'_{i,j} \right] = \hat V$, by
Chernoff-Hoeffding bound, we know that for any $\delta \in (0,1)$,
\[
\Pr\left[ \sum_{i=1}^n\sum_{j=1}^k  x'_{i,j} < (1 - \delta) \hat V\right] < \exp\left( -\delta^2 \hat V /2\right)
\]
If we set $\beta = \exp(-\delta^2\hat V / 2)$, then we get $\delta
= \sqrt{2\log(1/\beta) / \hat V}$, which recovers the stated bound.
\end{proof}
\fi
Now we look at approximate feasibility of $x'$.

\iffalse
\begin{lemma}
In the regularized program, $\sum_{j=1}^k\sum_{i=1}^n v_{i,j}\hat
x_{i,j} = \|\hat x\|_1$.
\end{lemma}
\begin{proof}
For any $(i,j)$ such that $v_{i,j} = 0$, we also must have $\hat
x_{i,j} = 0$, otherwise the player $i$ could increase the regularized
objective value by having $\hat x_{i,j} = 0$.
\end{proof}
\fi

\begin{restatable}{lemma}{matchingviolation}
  Suppose that $\min_{x\in \cF}\| x - \hat x \| \leq \eps$. Then with
  probability $1 - \beta$, $x'$ satisfies
\[
\sum_{j=1}^k \left(\sum_{i=1}^n x'_{i,j} - b_j \right)_+ \leq \sqrt{3k\log(k/\beta)\hat V} + \sqrt{nk}\eps
\]
\end{restatable}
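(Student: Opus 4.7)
The approach is to split the violation at each good $j$ into a deterministic piece arising from the infeasibility of $\hat x$ and a random piece arising from independent rounding. Let $\tilde x \in \cF$ achieve $\min_{x\in \cF}\|x-\hat x\|$, so $\sum_i \tilde x_{i,j} \leq b_j$ for every $j$ and $\|\hat x - \tilde x\| \leq \eps$ by hypothesis. Write $Y_j = \sum_i x'_{i,j}$ and $\mu_j = \Expectation[Y_j] = \sum_i \hat x_{i,j}$. Then the elementary inequality $(Y_j - b_j)_+ \leq (Y_j - \mu_j)_+ + (\mu_j - b_j)_+$ reduces the task to bounding the two sums $\sum_j (\mu_j - b_j)_+$ and $\sum_j (Y_j - \mu_j)_+$ separately.

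For the deterministic piece, feasibility of $\tilde x$ gives $\mu_j - b_j \leq \sum_i(\hat x_{i,j} - \tilde x_{i,j})$, hence $(\mu_j - b_j)_+ \leq \sum_i |\hat x_{i,j} - \tilde x_{i,j}|$. Summing across $j$ and converting from $\ell_1$ to $\ell_2$ on $\RR^{nk}$ via Cauchy--Schwarz yields
\[
\sum_{j=1}^k (\mu_j - b_j)_+ \leq \|\hat x - \tilde x\|_1 \leq \sqrt{nk}\,\|\hat x - \tilde x\| \leq \sqrt{nk}\,\eps,
\]
which produces the second term of the stated bound.

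For the random piece, observe that for fixed $j$ the indicators $\{x'_{i,j}\}_{i=1}^n$ are independent Bernoullis because players round independently. The multiplicative Chernoff bound gives $Y_j - \mu_j \leq \sqrt{3\mu_j \log(k/\beta)}$ with probability at least $1 - \beta/k$, and a union bound over the $k$ goods makes this hold simultaneously with probability at least $1-\beta$. Summing and applying Cauchy--Schwarz produces
\[
\sum_{j=1}^k (Y_j - \mu_j)_+ \leq \sqrt{3\log(k/\beta)}\sum_{j=1}^k \sqrt{\mu_j} \leq \sqrt{3k\log(k/\beta)\,\textstyle\sum_j \mu_j}.
\]
The key structural identification is that $\sum_j \mu_j = \hat V$: because the regularizer $-\tfrac{\eta}{2}\|x\|^2$ strictly penalizes any mass placed on coordinates with $v_{i,j}=0$, the regularized optimum must satisfy $\hat x_{i,j}=0$ whenever $v_{i,j}=0$, so $\sum_{i,j}\hat x_{i,j} = \sum_{i,j} v_{i,j}\hat x_{i,j} = \hat V$. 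Combining this with the deterministic bound completes the argument. The main obstacle is conceptual rather than computational: one must recognize that the $v_{i,j}\in\{0,1\}$ structure together with strict concavity of the regularized objective forces the identity $\sum_j \mu_j = \hat V$, which is exactly what makes the Chernoff contribution scale with $\hat V$ rather than the naive $n$; the remaining ingredients (Chernoff, Cauchy--Schwarz, projection onto a convex set) are standard.
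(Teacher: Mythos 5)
Your proof is correct and follows essentially the same route as the paper: bound the deterministic infeasibility of $\hat x$ by $\|\hat x - \tilde x\|_1 \leq \sqrt{nk}\,\eps$ via the $\ell_1$--$\ell_2$ comparison, control the rounding fluctuation with a per-good multiplicative Chernoff bound plus a union bound and Cauchy--Schwarz, and close with the identity $\|\hat x\|_1 = \hat V$ forced by the regularizer when $v_{i,j}=0$. The only cosmetic difference is that you make the decomposition $(Y_j-b_j)_+ \leq (Y_j-\mu_j)_+ + (\mu_j-b_j)_+$ explicit, which the paper leaves implicit.
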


Observe that since this is a packing linear program, if desired, it is easy to obtain exact feasibility by simply scaling down the supply constraints: this transfers the approximation factor in the feasibility bound to the become an approximation factor in the objective.

%% \sw{Need to insert the assumption that whenever the agents violate the
%%   supply constraint, the extra demand will not be satisfiedn}

Lastly, we are ready to establish the welfare guarantee for the
rounded solution. Since the solution we obtain might slightly violate the feasibility constraints, we want to make sure we are not over-counting. If more than $b_j$ parties select a particular good of type $j$, we only count the first $b_j$ parties to select it when measuring our welfare guarantee.

\begin{restatable}{theorem}{mainconvex}\label{mainconvex}
There exists a coordination protocol with coordination complexity
$O(k\log(nk))$ such that the parties coordinate on a matching $x'$
with total weight:
  \[
  \sum_{j=1}^k \min\left\{\sum_{i=1}^n v_{i,j} x'_{i,j}, b_j\right\}
  \geq \left(1 - O\left(\frac{\sqrt{k}\log(k/\beta)}{\sqrt{\OPT} }
  \right) \right)\OPT
  \]
as long as $\OPT \geq 1$.
\end{restatable}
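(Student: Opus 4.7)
The plan is to compose three tools already developed in this section: the welfare/feasibility guarantee of $\reg(\eta,\eps)$, the Chernoff-based rounding lemma for $x'$, and the feasibility-violation lemma for $x'$. First I would instantiate $\reg(\eta,\eps)$ with both $\eta$ and $\eps$ set to a sufficiently small inverse polynomial in $nk$, say $\eta = \eps = 1/(nk)^c$ for a suitable constant $c$. This keeps the coordination complexity at $O(k\log(nk/\eta\eps)) = O(k\log(nk))$ and, by the main theorem on $\reg$, ensures that the fractional output $\hat x$ satisfies $\hat V := v(\hat x) \geq \OPT - n(\eps+\eta)$ and $\min_{x\in\cF}\|x - \hat x\| \leq \eps$, with $n(\eps+\eta)$ and $\sqrt{nk}\,\eps$ both $o(1)$.

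Next I would apply the rounding lemma to bound the uncapped welfare after independent rounding: with probability $1-\beta$,
\[
v(x') \;\geq\; \hat V - \log(2/\beta)\sqrt{\hat V}.
\]
To handle the capped welfare $W = \sum_j \min\{\sum_i v_{i,j} x'_{i,j},\, b_j\}$, I would use $v_{i,j}\in\{0,1\}$ to observe that the per-good capping loss is at most the overflow $(\sum_i x'_{i,j} - b_j)_+$, so
\[
W \;\geq\; v(x') - \sum_{j=1}^k\Bigl(\sum_{i=1}^n x'_{i,j} - b_j\Bigr)_+.
\]
The feasibility-violation lemma then upper bounds the subtracted sum by $\sqrt{3k\log(k/\beta)\hat V} + \sqrt{nk}\,\eps$, where the second term comes from the $\eps$-infeasibility of $\hat x$.

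Chaining the three bounds and using $\hat V \leq \OPT$ yields
\[
W \;\geq\; \OPT - n(\eps+\eta) - \log(2/\beta)\sqrt{\hat V} - \sqrt{3k\log(k/\beta)\hat V} - \sqrt{nk}\,\eps.
\]
For the chosen $\eta,\eps$, the $n(\eps+\eta)$ and $\sqrt{nk}\,\eps$ terms are $o(1)$, hence negligible under the assumption $\OPT \geq 1$. The remaining additive loss is $O(\sqrt{k\log(k/\beta)\,\OPT})$, which using $\sqrt{\log(k/\beta)} \leq \log(k/\beta)$ is $O(\sqrt{k}\log(k/\beta)\sqrt{\OPT})$, matching the claimed multiplicative factor $(1 - O(\sqrt{k}\log(k/\beta)/\sqrt{\OPT}))$.

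The only real ``obstacle'' is parameter bookkeeping: I need $\eta$ and $\eps$ small enough that the welfare-error terms $n(\eps+\eta)$ and the infeasibility-induced term $\sqrt{nk}\,\eps$ are absorbed into the big-O, yet large enough that $\log(1/(\eta\eps))$ stays $O(\log(nk))$. Any inverse-polynomial scaling (e.g.\ $\eta=\eps=(nk)^{-3}$) suffices for both, so the argument goes through cleanly.
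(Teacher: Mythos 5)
Your proposal is correct and follows essentially the same route as the paper: instantiate $\reg$ with $\eta=\eps$ inverse-polynomial in $nk$, combine the $\OPT - n(\eps+\eta)$ welfare bound, the Chernoff rounding lemma, and the feasibility-violation lemma, and absorb the small terms. The only nitpick is that $\hat V \leq \OPT$ need not hold exactly since $\hat x$ may be $\eps$-infeasible (the paper writes $\sqrt{\OPT + n(\eps+\eta)}$ instead), but this is absorbed by the same bookkeeping you already invoke.
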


Observe that in the setting of many-to-one matchings, when the supply of each good is $s_j \gg 1$, we expect that $\OPT \gg k$, and hence in this setting, the above theorem guarantees a solution with weight $(1-o(1))\OPT$.

\iffull
\begin{remark}
We remark that many other combinatorial optimization tasks have fractional relaxations that can be written as linearly separable convex programs, and the same rounding technique can be applied to get low-coordination-complexity protocols for them. This class includes among others multi-commodity flow (where the coordination complexity scales with the number of edges in the underlying graph, but not with the number of parties who wish to route flow) and multi-dimensional knapsack problems (where the coordination complexity scales with the number of different types of knapsack constraints, but not with the number of parties who need to decide on their inclusion in or out of the knapsack).
\end{remark}
\fi

\section{Interface with Privacy and Efficiency in Games}\label{s.privacy}

In this section, we explain a simple implication of our results:
Problems that have low sensitivity objectives (i.e. problems such that
one party's data and action do not substantially affect the objective
value) and low coordination complexity also have good algorithms for
solving them subject to \emph{joint differential privacy}. When the
strategic variant of the optimization problem is a smooth game, they
also have good welfare properties for no-regret players, even when
agent types are dynamically changing.

\subsection{Privacy Background}
A \emph{database} $D\in \cX^n$ is an $n$-tuple of private records,
each from one of $n$ agents. Two databases $D, D'$ are
\emph{$i$-neighbors} if they differ only in their $i$-th index: that
is, if $D_j = D_j'$ for all $j\neq i$. If two databases $D$ and $D'$
are $i$-neighbors for some $i$, we say that they are \emph{neighboring
  databases}. We write $D\sim D'$ to denote that $D$ and $D'$ are
neighboring. We will be interested in randomized algorithms that take
a database as input, and output an element from some abstract range
$\cR$.

\begin{definition}[\cite{DMNS06}]
A mechanism $\cM\colon \cX^n \rightarrow \cR$ is $(\eps,
\delta)$-\emph{differentially private} if for every pair of
neighboring databases $D, D'\in \cX^n$ and for every subset of outputs
$\cS\subseteq \cR$,
\[
\Pr[\cM(D) \in \cS] \leq \exp(\eps) \Pr[\cM(D') \in \cS] + \delta.
\]
\end{definition}

For the class of problems we consider, elements in both the domain and the
range of the mechanism are partitioned into $n$ components, one for each
player. In this setting, \emph{joint differential privacy}~\citep{KPRU14} is a
more natural constraint: For all $i$, the \emph{joint} distribution on
outputs given to players $j\neq i$ is differentially private in the input of
player $i$. Given a vector $x = (x_1, \ldots, x_n)$, we write $x_{-i} = (x_1,
\ldots, x_{i-1}, x_{i+1} , \ldots, x_n)$ to denote the vector of length
$(n-1)$ which contains all coordinates of $x$ except the $i$-th coordinate.

\begin{definition}[\cite{KPRU14}]
A mechanism $\cM \colon \cX^n \rightarrow \cR^n$ is $(\eps,
\delta)$-\emph{jointly differentially private} if for every $i$, for
every pair of $i$-neighbors $D, D'\in \cX^n$, and for every subset of
outputs $\cS\subseteq \cR^{n-1}$,
\[
\Pr[\cM(D)_{-i} \in \cS] \leq \exp(\eps) \Pr[\cM(D')_{-i} \in \cS] + \delta.
\]
If $\delta = 0$, we say that $\cM$ is $\epsilon$-differentially private. The case of $\delta > 0$ is sometimes referred to as \emph{approximate} differential privacy.
\end{definition}
Note that this is still a very strong privacy guarantee; the mechanism preserves
the privacy of any player $i$ against arbitrary coalitions of other players. It
only weakens the constraint of differential privacy by allowing player $i$'s
output to depend arbitrarily on her \emph{own} input.

An important class of jointly differentially private algorithms -- particularly amenable to our purposes -- are those that work in the so-called  \emph{billboard model}. Algorithms in the billboard
model compute a differentially private signal as a function of the input database; then each player $i$'s portion of the output is computed as a
function only of this private signal and the private data of player
$i$. The following lemma shows that algorithms operating in the
billboard model satisfy joint differential privacy.

\begin{lemma}[\cite{HHRRW14}] \label{billboard}
  Suppose $\cM : \cX^n \rightarrow \cR$ is $(\eps,
  \delta)$-differentially private. Consider any set of functions $f_i
  : \cX \times \cR \rightarrow \cR'$. Then the mechanism $\cM'$ that
  outputs to each player $i$: $f_i(D_i, \cM(D))$ is $(\eps,
  \delta)$-jointly differentially private.
\end{lemma}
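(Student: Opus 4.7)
The plan is to reduce the joint-differential-privacy guarantee to the ordinary differential-privacy guarantee of $\cM$ via the post-processing property. Fix an index $i$ and two $i$-neighbors $D, D' \in \cX^n$; I need to bound $\Pr[\cM'(D)_{-i} \in \cS]$ for an arbitrary measurable set $\cS \subseteq (\cR')^{n-1}$.

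The key observation is that the vector of outputs delivered to the players other than $i$ depends on the input only through (a) the records $D_j$ for $j \neq i$, and (b) the private signal $\cM(D)$. Since $D$ and $D'$ are $i$-neighbors, $D_j = D'_j$ for every $j \neq i$, so I can define a single (randomness-free, data-independent-of-$i$) map $g \colon \cR \to (\cR')^{n-1}$ by $g(r) = (f_j(D_j, r))_{j \neq i} = (f_j(D'_j, r))_{j \neq i}$. Then by construction, $\cM'(D)_{-i} = g(\cM(D))$ and $\cM'(D')_{-i} = g(\cM(D'))$ as random variables.

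I would then invoke the post-processing property of $(\eps,\delta)$-differential privacy: if $\cM$ is $(\eps,\delta)$-differentially private and $g$ is any (possibly randomized) map whose randomness is independent of $\cM$, then $g \circ \cM$ is $(\eps,\delta)$-differentially private as well. Applied to the neighboring pair $D, D'$ and the set $\cS$, this yields
\[
\Pr[\cM'(D)_{-i} \in \cS] = \Pr[g(\cM(D)) \in \cS] \leq e^\eps \Pr[g(\cM(D')) \in \cS] + \delta = e^\eps \Pr[\cM'(D')_{-i} \in \cS] + \delta,
\]
which is exactly the definition of $(\eps,\delta)$-joint differential privacy. Since $i$, $D$, $D'$, and $\cS$ were arbitrary, the proof is complete.

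There is no real obstacle here; the entire content is the observation that restricting the output to coordinates $j \neq i$ eliminates the dependence on $D_i$ except through $\cM(D)$, and hence post-processing does all the work. If a reader has not seen the post-processing lemma, one can prove it in a line by noting $\{r : g(r) \in \cS\}$ is a measurable event in $\cR$ and applying the $(\eps,\delta)$ guarantee of $\cM$ to that event directly.
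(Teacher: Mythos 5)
Your proof is correct and is exactly the standard argument: the paper itself gives no proof of this lemma (it is imported from \cite{HHRRW14}), and the proof there is precisely your observation that for $i$-neighbors the map $r \mapsto (f_j(D_j,r))_{j\neq i}$ is identical on both inputs, so the claim reduces to post-processing of the $(\eps,\delta)$-differentially private signal $\cM(D)$. Nothing is missing.
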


Note the similarity between algorithms operating in the billboard model and coordination complexity protocols: a signal is computed by a central party, and then the action of each agent is a function only of this signal and of their own portion of the problem instance. Thus, the following lemma is immediate:

\begin{lemma}\label{privatepro}
  A coordination protocol $(\sigma, \pi)$ satisfies $(\eps,
  \delta)$-joint differential privacy if the coordinator's encoding
  function $\sigma$ satisfies $(\eps, \delta)$-differential privacy.
\end{lemma}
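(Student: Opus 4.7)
The plan is to recognize this lemma as a direct corollary of the billboard lemma (Lemma \ref{billboard}) stated immediately before it. The structural similarity is essentially a one-to-one match: in a coordination protocol, the coordinator broadcasts $\sigma(D)$, and each agent $i$'s output is $\pi(D^{(i)}, \sigma(D))$, which depends only on the broadcast signal and on $D^{(i)}$. This is exactly the ``billboard model'' form that the prior lemma handles.

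Concretely, I would set $\cM := \sigma$, viewed as a randomized mechanism $\cX^n \to \{0,1\}^*$ into the abstract range $\cR := \{0,1\}^*$, which is $(\eps,\delta)$-differentially private by hypothesis. Then for each agent $i$, I would define $f_i : \cX \times \{0,1\}^* \to \cA$ by $f_i(x, s) := \pi(x, s)$. The composed mechanism from Lemma \ref{billboard} that outputs $(f_1(D^{(1)}, \sigma(D)), \ldots, f_n(D^{(n)}, \sigma(D)))$ is then exactly the joint output vector $(a_1, \ldots, a_n)$ of the coordination protocol on input $D$. Applying Lemma \ref{billboard} immediately yields $(\eps,\delta)$-joint differential privacy of the protocol.

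I do not anticipate any real obstacle: the only thing to verify is that the decoding function $\pi$ of the coordination protocol fits the template of the $f_i$'s in the billboard lemma, which it does by definition (and note that we do not even need different $f_i$ for different agents; a single $\pi$ suffices). If randomness in $\pi$ is a concern, I would note that by a standard argument we may assume each agent uses independent private randomness, so that conditional on $\sigma(D)$ the outputs to players $j \neq i$ are measurable functions of $\{D^{(j)}\}_{j\neq i}$ and independent coins, and hence the post-processing inequality combined with the privacy of $\sigma$ suffices. The proof therefore reduces to a one-line invocation of Lemma \ref{billboard}.
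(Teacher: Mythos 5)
Your proposal is correct and is exactly the paper's argument: the paper proves \Cref{privatepro} by a one-line invocation of \Cref{billboard}, and your instantiation with $\cM := \sigma$ and $f_i := \pi$ is precisely the intended reduction. The extra detail you give about the agents' private randomness is a fine (and harmless) elaboration of the same proof.
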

\iffull
\begin{proof}
Follows from~\Cref{billboard}.
\end{proof}
\fi
\subsection{A Generic Private Coordination Protocol}
Next, we give a general way to convert any coordination protocol to a
jointly differentially private algorithm -- and the lower the
coordination complexity of the protocol, the better the utility
guarantee of the private algorithm. The tool we use is the
\emph{exponential mechanism} of~\cite{MT07}, one of the most basic
tools in differential privacy. To formally define this mechanism, we
consider some arbitrary range $\cR$ and some quality score function $q
\colon \cX^n \times \cR \rightarrow \RR$, which maps database-output
pairs to quality scores.

\begin{definition}[The Exponential Mechanism~\cite{MT07}]
  The \emph{exponential mechanism} $\cM_E(D, q, \cR, \eps)$ selects
  and outputs an element $r\in \cR$ with probability proportional to
  $$\exp\left(\frac{\eps q(D, r)}{2\Delta(q)} \right),$$ where
  \[
  \Delta(q) \equiv \max_{D, D'\in \cX^n, D\sim D'} |q(D) - q(D')|.
  \]
\end{definition}

McSherry and Talwar showed that the exponential mechanism is private
and with high probability selects an outcome with high quality.

\begin{theorem}[\cite{MT07}]
  The exponential mechanism \iffull$\cM_E(\cdot, q, \cR, \eps)$\fi satisfies
  $(\eps, 0)$-differential privacy, and for any $D\in \cX^n$ it
  outputs an outcome $r\in \cR$ that satisfies
  \[
  q(D, r) \geq \max_{r'} q(D, r') - \frac{2\Delta(q) (\log(|\cR|/\beta))}{\eps}
  \]
with probability at least $1 - \beta$.
\end{theorem}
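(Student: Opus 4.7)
The plan is to split the argument into the privacy claim and the utility claim, with both parts reducing to manipulations of the unnormalized weight $w(D,r) = \exp(\epsilon q(D,r)/(2\Delta(q)))$ that the mechanism assigns to each outcome.

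For privacy, I would write $\Pr[\cM_E(D)=r] = w(D,r)/Z(D)$ with normalizer $Z(D) = \sum_{r'} w(D,r')$, and then factor the probability ratio for neighboring $D, D'$ as
\[
\frac{\Pr[\cM_E(D)=r]}{\Pr[\cM_E(D')=r]} \;=\; \frac{w(D,r)}{w(D',r)} \cdot \frac{Z(D')}{Z(D)}.
\]
The first factor is at most $\exp(\epsilon/2)$ by the definition of $\Delta(q)$, since $|q(D,r) - q(D',r)| \le \Delta(q)$. The same pointwise inequality applied to every term of $Z(D')$ gives $Z(D') \le \exp(\epsilon/2)\, Z(D)$, so the second factor is likewise at most $\exp(\epsilon/2)$. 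Multiplying establishes the required $\exp(\epsilon)$ bound and hence $(\epsilon,0)$-differential privacy. (The opposite direction of the ratio is symmetric.)

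For utility, fix a maximizer $r^\star \in \argmax_{r'} q(D,r')$ and set $t = 2\Delta(q)\log(|\cR|/\beta)/\epsilon$; it suffices to show $\Pr[q(D,\cM_E(D)) < q(D,r^\star) - t] \le \beta$. I would bound the normalizer from below by the single term $Z(D) \ge w(D,r^\star) = \exp(\epsilon q(D,r^\star)/(2\Delta(q)))$, and upper bound the weight of any ``bad'' outcome $r$ with $q(D,r) < q(D,r^\star) - t$ by
\[
w(D,r) \;<\; \exp\!\left(\tfrac{\epsilon q(D,r^\star)}{2\Delta(q)} - \tfrac{\epsilon t}{2\Delta(q)}\right) \;=\; w(D,r^\star)\cdot\frac{\beta}{|\cR|},
\]
since $\epsilon t/(2\Delta(q)) = \log(|\cR|/\beta)$. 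Summing over the at most $|\cR|$ bad outcomes and dividing by $Z(D)$, the total probability mass placed on the bad set is at most $\beta$, as required.

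There is no real obstacle here: the sensitivity bound transfers cleanly from individual exponentials to the entire normalizer, and the utility half works because a single maximizing outcome $r^\star$ suffices to pin $Z(D)$ from below. The only modest care needed is getting the constant of $2$ in the utility bound to line up with the factor of $2\Delta(q)$ appearing in the exponent of the exponential mechanism's definition.
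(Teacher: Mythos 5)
Your proof is correct and is exactly the standard argument for the exponential mechanism (splitting the probability ratio into the per-outcome weight ratio and the normalizer ratio for privacy, and comparing the bad set's total weight against the single term $w(D,r^\star)$ for utility); the paper itself states this theorem as a cited result from McSherry--Talwar without reproducing a proof, and your argument matches the one in that reference. No gaps.
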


Using the exponential mechanism, we can take any coordination protocol
$(\sigma, \pi)$, and construct a jointly differentially private
coordination protocol $(\sigma', \pi)$ with the same coordination
complexity, and almost the same approximation factor.  The idea is to
construct a differentially private encoding function $\sigma'$ that
selects from the message space of $\sigma$ using the exponential
mechanism.  Without loss of generality, we assume that the social objective
function $S$ has low-sensitivity:
\iffull
\[
\max_{i\in [n]}\max_{a\in \cA^n, D\in \cX^n} \max_{a_i'\in \cA_i, D_i'\in \cX} |S(D, a) - S((D_i',
D_{-i}), (a'_i, a_{-i}))| \leq 1.\footnote{We can always obtain
  this condition by scaling. It is already satisfied in the matching problem.}
\]
\else
\[
\max_{i\in [n], a\in \cA^n, D\sim D', a_i'\in \cA_i} |S(D, a) - S(D', (a'_i, a_{-i}))| \leq 1.\footnote{We can always obtain
  this condition by scaling. It is already satisfied in the matching problem.}
\]
\fi

Now we present the private protocol as follows:

\begin{algorithm}[H]
\label{alg:privatize}
\caption{Jointly private algorithm $\pritocol((\sigma, \pi), q, \eps, D)$}
  \begin{algorithmic}
    \STATE{\textbf{Input:} A coordination protocol $(\sigma, \pi)$, objective function $f$, and input instance $D$}
    \INDSTATE{Let $\cR = \{\sigma(D') \mid D'\in \cX^n\}$ be the space of all possible messages in the range of $\sigma$}
    \INDSTATE{Let quality function $q$ be defined as $q(D, r) = \Expectation_{\pi}[S(D, (\pi(r, D\ui) )_{i\in [n]})]$ $\forall D\in \cX^n, r\in \cR$}

    \INDSTATE{Let $\sigma'(D) = \cM_E(D, q, \cR)$ be the message selected by the exponential mechanism}
    \STATE{\bf Output $a = (\pi(\sigma'(D), D\ui))_{i=1}^n$}
    \end{algorithmic}
\end{algorithm}

\begin{lemma}\label{coorprivate}
Suppose that $(\sigma,\pi)$ has coordination complexity
$\ell$ and approximation ratio $\rho$ for the objective $f$. Then the algorithm $\pritocol((\sigma,
\pi), f, \eps, D)$ satisfies $(\eps, 0)$-joint differential privacy,
and with probability at least $1 - \beta$, the resulting action
profile $a$ satisfies
\[
\Expectation\left[S(D, a)\right]  \geq \frac{\OPT(D)}{\rho} - \frac{2 (\ell + \log(1/\beta))}{\eps},
\]
where the expectation is taken over the internal randomness of the
encoding function $\sigma'$ and decoding function $\pi$.
\end{lemma}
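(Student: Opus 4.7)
The plan is to separate the proof into two independent pieces: (i) the joint differential privacy of $\pritocol$, and (ii) the utility bound. Both follow from quite standard properties of the exponential mechanism, once we observe that the encoding $\sigma'$ is an instance of it applied to a quality score $q$ with sensitivity at most $1$.

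For the privacy claim, I would first bound $\Delta(q)$. Fix $i$ and $i$-neighbors $D, D' \in \cX^n$, and any message $r \in \cR$. Because $\pi$ acts on each coordinate separately, for every $j \neq i$ we have $\pi(r, D^{(j)}) = \pi(r, D'^{(j)})$ (coupling the internal randomness of $\pi$), so the two induced action vectors differ only in coordinate $i$. The low-sensitivity assumption on $S$ then yields $|q(D,r) - q(D',r)| \leq 1$, and hence $\Delta(q) \leq 1$. Applying the exponential mechanism guarantee, $\sigma'$ is $(\eps,0)$-differentially private. Finally, since each player's output is a function only of the common signal $\sigma'(D)$ and her own $D^{(i)}$, the billboard lemma (\Cref{privatepro}) promotes this to $(\eps, 0)$-joint differential privacy for $\pritocol$.

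For the utility claim, I would start from the high-probability guarantee of the exponential mechanism: with probability at least $1-\beta$,
\[
q(D, \sigma'(D)) \geq \max_{r' \in \cR} q(D, r') - \frac{2\Delta(q)\left(\log|\cR| + \log(1/\beta)\right)}{\eps}.
\]
Since $\cR$ is the image of $\sigma$, the coordination-complexity bound gives $|\cR| \leq 2^{\ell}$, so $\log|\cR| \leq \ell$; combined with $\Delta(q) \leq 1$, the slack term is at most $2(\ell + \log(1/\beta))/\eps$. To bound the maximum, note that $\sigma(D) \in \cR$ is one possible choice, so $\max_{r'} q(D,r') \geq q(D, \sigma(D)) = \Expectation_\pi[S(D, (\pi(\sigma(D), D^{(i)}))_i)]$, which by the $\rho$-approximation hypothesis for $(\sigma,\pi)$ is at least $\OPT(D)/\rho$. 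Plugging back in and using the definition of $q$ to identify $q(D, \sigma'(D))$ with $\Expectation[S(D,a)]$ for the output $a$ of $\pritocol$ (expectation over the internal randomness of $\pi$, conditioned on the realization of $\sigma'$) gives the claimed bound.

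The only step requiring any thought is the sensitivity calculation for $q$, and even that is immediate once one is careful to couple the randomness of $\pi$ across neighboring databases; the rest is a direct application of results already stated in the section.
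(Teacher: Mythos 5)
Your proposal is correct and follows essentially the same route as the paper's proof: privacy via the exponential mechanism plus the billboard lemma, and utility via the exponential mechanism's high-probability guarantee combined with the existence of a message $r\ub$ in $\cR$ achieving $q(D,r\ub)\geq \OPT(D)/\rho$ and the bound $\log|\cR|\leq \ell$. Your explicit verification that $\Delta(q)\leq 1$ (by coupling the randomness of $\pi$ and noting the induced action profiles differ only in coordinate $i$) is a detail the paper merely asserts, and the only nit is that when $\sigma$ is randomized one should pass to a realization $r\ub$ of $\sigma(D)$ achieving the average, exactly as the paper does.
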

\iffull
\begin{proof}
Since the encoding function is an instantiation of the exponential
mechanism, we know from~\Cref{privatepro} that the instantiation
$\pritocol((\sigma, \pi), q, \eps, D)$ satisfies $(\eps, 0)$-joint
differential privacy.

  Since the coordination protocol guarantees an approximation ratio of
  $\rho$, there exists some message $r\ub$ in the set $\cR$ such that
  \[
  \Expectation_{\pi}\left[S(D, (\pi(r\ub , D\ui))_{i\in [n]})\right] = q'(D, r\ub) \geq \frac{\OPT(D)}{\rho}.
  \]
Note that $\Delta(q') \leq 1$ by our assumption on $f$. Then the
utility guarantee of the exponential mechanism gives
  \[
  \Expectation_{\pi}\left[S(D, (\pi(\sigma'(r, D) , D\ui)))_{i\in [n]})\right] = q'(D, \pi(\sigma'(r, D))) \geq \max_{r\in \cR}q'(D, r) - \frac{2(\log(|\cR| - \log(1/\beta)))}{\eps}
  \]
Also, observe that $\max_{r\in \cR}q'(D, r) \geq \OPT(D)/\rho$ and
$\log(|\cR|) \leq \ell$, so we also have
\[
\Expectation_{\sigma}\left[S(D, (\pi(\sigma'(r, D) , D\ui)))_{i\in
    [n]})\right] \geq \OPT(D)/\rho - \frac{2(\ell - \log(1/\beta)))}{\eps}
\]
which recovers our stated bound.
\end{proof}
\fi

\subsection{Efficiency in Games with Dynamic Population}
Now we briefly discuss a connection between coordination complexity
and the efficiency in games with dynamic population, which leverages
the connection to joint differential privacy discovered
by~\cite{dynamicPOA}.  We briefly introduce the model
in~\cite{dynamicPOA}, but the discussion will necessarily be lacking
in detail -- see~\cite{dynamicPOA} for a formal treatment.

Let $G$ be an $n$-player normal form \emph{stage game}. We consider
this game played repeatedly with a changing population of players over
$T$ rounds. Each player $i$ has an action set $\cA_i$, type $D\ui$,
and a utility function $u(D\ui, a) = u_i(a)$. For concreteness, we can
think about allocation games defined by auction rules $M$, which take
as input an action profile and output an allocation $X_i(a)$ and a
payment $P_i(a)$ for each player. Players have quasi-linear utility
$u_i(a) = v(D\ui, X_i(a)) - P_i(a) = v_i(X_i(a)) - P_i(a)$, where
$v_i\colon \cA^n\rightarrow[0,1]$ denotes the valuation of player $i$
over the allocation. In these games, a natural objective function is
social welfare: $S(D, a) = \sum_{i=1}^n v_i(X_i(a))$. We write
$\OPT(D) = \max_{a\in \cA^n} S(D, a)$ to denote the optimal welfare
with respect to an instance $D$.

In the model of~\cite{dynamicPOA}, after each
round, every player independently exits with some probability $p$.
Whenever a player leaves the game, she is replaced a new player, whose type is chosen adversarially.  We
will write $D^t$ to denote the game instance, and $a^t$ to denote the
action profile played at round $t$. Lastly, we also assume that each player
in the game is a \emph{no-regret learner} and plays some
\emph{adaptive learning algorithm}.\footnote{For more details of
  adaptive learning algorithms and adaptive regret, see~\cite{HS07}.}

The main result of~\cite{dynamicPOA} is that the existence of
jointly differentially private algorithms that find action profiles approximately
optimizing the welfare in a game implies that when the dynamically changing game is played by no-regret players, their average welfare is high.
\iffull
\begin{theorem}[Corollary 5.2 of~\cite{dynamicPOA}]\label{dPoA}
  Consider a mechanism with dynamic population $(M, T, p)$, such that
  the stage mechanism $M$ is allocation based $(\lambda, \mu)$-smooth
  and $T\geq 1/p$. Assume that there exists an $(\eps, \delta)$-joint
  differentially private allocation algorithm $X\ub\colon \cX^n
  \rightarrow \cA^n$ such that for any input instance $D\in \cX^n$ it
  computes a feasible outcome that is $\rho$-approximately optimal
  \[
  \Expectation[S(D, X\ub(D)] \geq \OPT(D)/\rho.
  \]

If all players use adaptive learning in the repeated mechanism, then
the overall welfare satisfies
  \[
  \sum_t \Expectation[S(D^t, a^t)] \geq \frac{\lambda}{\rho\max\{1, \mu\}}
  \sum_t \Expectation[\OPT(D^t)] -
  \frac{nT}{\max\{1, \mu\}}  \sqrt{2p(1 + n(\eps + \delta)) \ln(NT)},
  \]
where $N = \max_i |\cA_i|$.
\end{theorem}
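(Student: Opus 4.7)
The plan is to combine the standard smoothness framework of Roughgarden with a stability argument that uses joint differential privacy to tame the dynamically changing population. Since the theorem is stated as a result of \cite{dynamicPOA}, I would follow their template: the JDP allocation $X\ub$ serves as the \emph{deviation strategy} in the smoothness inequality, and joint differential privacy ensures that this deviation is sufficiently ``insensitive'' to the changing types of other players so that each player's adaptive no-regret guarantee can be invoked.

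First I would unpack $(\lambda, \mu)$-smoothness of the stage mechanism at each round $t$: for any action profile $a^t$ and any reference profile $a^*$, we have $\sum_i u_i(a^*_i, a^t_{-i}) \geq \lambda \, S(D^t, a^*) - \mu \, S(D^t, a^t)$. I would apply this with the reference profile $a^{*,t}$ being the action profile induced by $X\ub(D^t)$, so that in expectation $S(D^t, a^{*,t}) \geq \OPT(D^t)/\rho$ by the approximation guarantee of the JDP algorithm. Summing over $t$ yields $\sum_t \sum_i \Expectation[u_i(a^{*,t}_i, a^t_{-i})] \geq (\lambda/\rho) \sum_t \Expectation[\OPT(D^t)] - \mu \sum_t \Expectation[S(D^t, a^t)]$.

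Next I would translate the left-hand side into actual welfare via the adaptive no-regret guarantees of each player. For a fixed player $i$ of a given type $D\ui$, the sequence of recommended deviations $a^{*,t}_i$ is a random variable depending on the other players' types $D^t_{-i}$, which change slowly: at each round a player survives with probability $1-p$, and only replaced players induce fresh randomness. By joint differential privacy of $X\ub$, for any fixed $D\ui$, the distribution of $a^{*,t}_i$ changes by at most roughly $(\eps + \delta)$ per replacement; so over the expected $pT$ replacements the total variation drift is controlled. Feeding this drift bound into an adaptive-regret argument (regret against a slowly-changing comparator) gives, in expectation, $\sum_t u_i(a^t) \geq \sum_t u_i(a^{*,t}_i, a^t_{-i}) - O(\sqrt{p T (1 + n(\eps+\delta)) \ln N})$ for each player. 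Summing over the $n$ players and rearranging against the $\mu$ term produces exactly the claimed bound, where the factor $\max\{1,\mu\}$ appears from dividing both sides by $1+\mu$ (or $1$ when $\mu \leq 1$) and the $\sqrt{2 p (1 + n(\eps+\delta)) \ln(NT)}$ factor arises from the adaptive regret penalty per player per round.

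The main obstacle I expect is the precise quantification of the stability/adaptive-regret tradeoff: one must argue that the composition of a JDP recommender with an adaptive learner yields an expected-regret bound whose dependence on $\eps, \delta, p, n$ matches the exact form stated. This requires carefully linking the $(\eps,\delta)$-indistinguishability guarantee of the per-round recommendation to a bound on the total variation distance between the recommended action distributions in adjacent rounds, then coupling this with a drift-aware regret bound so that the penalty scales like $\sqrt{pT}$ rather than $T$. All other components---smoothness, the JDP approximation ratio, and the union-bound over rounds---are routine once this sensitivity-to-regret translation is in hand.
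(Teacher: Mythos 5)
This statement is not proved in the paper at all: it is imported verbatim as Corollary~5.2 of \cite{dynamicPOA}, and the authors use it as a black box (combining it with their Lemma on private coordination protocols to get the subsequent corollary). So there is no internal proof to compare your attempt against; the only meaningful comparison is with the argument in the cited work.

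That said, your sketch does capture the architecture of the Lykouris--Syrgkanis--Tardos proof correctly: the JDP allocation $X\ub(D^t)$ plays the role of the deviation benchmark in the $(\lambda,\mu)$-smoothness inequality, joint differential privacy is used precisely as a stability guarantee ensuring that player $i$'s recommended deviation changes little when other players are resampled, and adaptive (tracking) regret against this slowly drifting comparator supplies the $\sqrt{p(\cdot)\ln(NT)}$ error term, with $\max\{1,\mu\}$ arising from the final rearrangement. The one step you flag as the ``main obstacle'' --- converting the per-replacement $(\eps,\delta)$-indistinguishability into the specific $\sqrt{2p(1+n(\eps+\delta))\ln(NT)}$ penalty via a drift-aware regret bound --- is indeed the technical heart of their proof and is not carried out in your sketch, so as a standalone proof the attempt is incomplete. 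But since the paper under review treats the theorem as a citation rather than something to be reproved, the appropriate standard here is whether you have correctly identified where the result comes from and how it is established, and you have.
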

\else
\begin{theorem}[Corollary 5.2 of~\cite{dynamicPOA}]\label{dPoA}
  Consider a mechanism with dynamic population $(M, T, p)$, such that
  the stage mechanism $M$ is allocation based $(\lambda, \mu)$-smooth
  and $T\geq 1/p$. Assume that there exists an $(\eps, \delta)$-joint
  differentially private allocation algorithm $X\ub\colon \cX^n
  \rightarrow \cA^n$ such that for any input instance $D\in \cX^n$ it
  computes a feasible outcome that is $\rho$-approximately optimal
  \[
  \Expectation[S(D, X\ub(D)] \geq \OPT(D)/\rho.
  \]

If all players use adaptive learning in the repeated mechanism, then
the overall welfare satisfies
  \begin{align*}
  \sum_t \Expectation[S(D^t, a^t)] \geq \frac{\lambda}{\rho\max\{1, \mu\}}
  \sum_t \Expectation[\OPT(D^t)] - \\
  \frac{nT}{\max\{1, \mu\}}  \sqrt{2p(1 + n(\eps + \delta)) \ln(NT)},
  \end{align*}
where $N = \max_i |\cA_i|$.
\end{theorem}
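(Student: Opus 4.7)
The statement is quoted verbatim from \cite{dynamicPOA}, so the plan is to reconstruct the argument sketched there. The overall strategy is the standard smoothness framework of Roughgarden, composed with joint differential privacy to control the drift in hindsight-optimal deviations across a dynamically changing population, and finally combined with an adaptive no-regret bound.

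First I would define, for each round $t$, the JDP-based deviation: the strategy for player $i$ at time $t$ is to play the $i$-th coordinate $X\ub(D^t)_i$ of the allocation produced by the private algorithm $X\ub$ on the current type profile $D^t$. By $(\lambda,\mu)$-smoothness of the stage mechanism $M$, summing the smoothness inequality across players at round $t$ gives
\[
\sum_i u_i\bigl(X\ub(D^t)_i, a^t_{-i}\bigr) \;\geq\; \lambda\, S(D^t, X\ub(D^t)) - \mu \sum_i P_i(a^t),
\]
which, after applying the $\rho$-approximation guarantee of $X\ub$ and rearranging using $S(D^t,a^t)\geq \sum_i v_i(X_i(a^t))$ together with $\sum_i P_i(a^t)\leq S(D^t,a^t)$, yields a per-round inequality of the form $\Expectation[S(D^t,a^t)]\geq \tfrac{\lambda}{\rho\max\{1,\mu\}}\Expectation[\OPT(D^t)] - (\text{regret terms})$.

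Second, I would handle the regret. Because each player is a no-(adaptive-)regret learner, the realized cumulative utility is close to the utility of any fixed sequence of deviations, up to an $O(\sqrt{T\ln N})$ additive loss per player. However, the deviation $X\ub(D^t)_i$ changes over time both because of the $p$ fraction of players that are replaced each round and because $X\ub$ is randomized. The crucial step is to use joint differential privacy exactly here: since only a single player's type changes per swap and JDP controls the effect on the joint distribution over the other players' outputs, the distribution over the recommended deviation sequence is close (in a way that is quantified by $\eps$ and $\delta$) to a sequence that is \emph{independent} of player $i$'s own actions and types. This is the mechanism by which the adaptive regret bound can be invoked against what is effectively a moving target.

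Third, I would aggregate the per-round drift. Over $T$ rounds, the expected number of population swaps is $pnT$, and each swap contributes at most an $O(n(\eps+\delta))$ shift in the distribution of recommendations (by group privacy over the set of active players, plus the billboard-style decomposition underlying JDP). Combining this with the $\sqrt{T\ln(NT)}$ adaptive-regret rate and summing over all $n$ players produces the claimed penalty $\tfrac{nT}{\max\{1,\mu\}}\sqrt{2p(1 + n(\eps+\delta))\ln(NT)}$. Summing the per-round smoothness inequality over $t$ and subtracting this aggregate regret term then gives the theorem.

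The main obstacle, and the step that deserves the most care, is the second one: making precise the sense in which JDP lets a learner's no-regret guarantee be evaluated against the JDP-suggested deviation even though that suggestion depends on the current (changing) type vector. The right way to do this is to couple the true execution with a ghost execution in which player $i$'s type is frozen; JDP bounds the statistical distance between the two, and the $\sqrt{p(1+n(\eps+\delta))}$ factor is exactly what comes out of summing this distance over the expected $pT$ type changes affecting player $i$ and plugging into the adaptive-regret inequality. Everything else is bookkeeping of the smoothness inequality and linearity of expectation.
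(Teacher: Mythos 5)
This statement is not proved in the paper at all: it is quoted verbatim as Corollary~5.2 of the cited work of Lykouris, Syrgkanis, and Tardos and used as a black box, so there is no in-paper argument to compare yours against. Judged as a reconstruction of the cited proof, your outline follows the right strategy --- the Roughgarden smoothness framework with the deviation taken to be the coordinate $X\ub(D^t)_i$ of the private allocation, joint differential privacy used to argue that each player's recommended deviation sequence changes rarely, and an adaptive (shifting-comparator) regret bound charged against the expected number of such changes. That is indeed how the factor $\sqrt{2p(1+n(\eps+\delta))\ln(NT)}$ arises.

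Two cautions if you intend to flesh this out. First, your step three slightly misallocates the privacy cost: a single swap of another player's type shifts the distribution of player $i$'s recommendation by $O(\eps+\delta)$ (this is exactly what JDP gives for a unilateral change; no group privacy is needed per swap). The factor $n(\eps+\delta)$ in the bound comes from the fact that in each round each of the $n$ players independently departs with probability $p$, so the expected number of switches in player $i$'s comparator over $T$ rounds is roughly $pT\bigl(1+n(\eps+\delta)\bigr)$ --- one term for $i$'s own replacement and one for everyone else's --- and plugging this switch count into the adaptive regret rate $\sqrt{(\text{switches})\cdot T\ln(NT)}$ and summing over players yields the stated penalty. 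Second, your steps two and three are where all the actual work lives (the coupling between the true execution and a ghost execution with a frozen comparator, and the passage from closeness in distribution to a bound on expected switches usable inside the regret inequality); as written they are a plan rather than a proof, which is acceptable only because the result is being imported by citation rather than established here.
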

\fi

\iffull Note that if the problem of coordinating a high-welfare
allocation has small coordination complexity this implies the
existence of a jointly differentially private allocation algorithm
with a high welfare guarantee. By combining~\Cref{dPoA}
and~\Cref{coorprivate}, we obtain the following result.  \else We show
the following result connecting low coordination complexity with the
welfare guarantee in games with dynamic population. (See the full
version for details).
\fi

\begin{lemma}
Consider a mechanism with dynamic population $(M, T, p)$ such that the
stage mechanism $M$ is allocation based $(\lambda, \mu)$-smooth and
$T\geq 1/p$.  Assume there is a coordination protocol $(\sigma, \pi)$
with coordination complexity $\ell$ and approximation ratio $\rho$
for the corresponding welfare maximization problem.

Then if all players use adaptive learning in the repeated mechanism,
the average welfare satisfies
\iffull
\begin{align*}
&  \sum_t \Expectation\left[ S(D^t, a^t) \right] \geq 
  \frac{\lambda}{\rho \max\{1, \mu\}} \sum_t \Expectation[\OPT(D^t)]\; -\;\\
 &\inf_{\eps > 0} \left\{\frac{nT}{\max\{1, \mu\}} \sqrt{4p n\eps \ln(NT)} +
   \frac{\lambda T}{\rho\max\{1, \mu\}} \frac{2(\ell + \log(n))}{\eps}\right\}.
\end{align*}
\else
\begin{align*}
&  \sum_t \Expectation\left[ S(D^t, a^t) \right] \geq 
  \frac{\lambda}{\rho \max\{1, \mu\}} \sum_t \Expectation[\OPT(D^t)] \\
 &- \frac{T}{\max\{1, \mu\}}\cdot \inf_{\eps > 0} \left\{{n} \sqrt{4p n\eps \ln(NT)} +
   \frac{\lambda}{\rho} \frac{2(\ell + \log(n))}{\eps}\right\}.
\end{align*}

\fi
\end{lemma}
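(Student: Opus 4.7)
The plan is to combine the privatization in~\Cref{coorprivate} with the dynamic price of anarchy in~\Cref{dPoA}, and then optimize the free privacy parameter. Concretely, for each privacy level $\eps > 0$ we will build an $(\eps, 0)$-jointly differentially private allocation algorithm out of $(\sigma, \pi)$, plug it into~\Cref{dPoA}, and take the infimum of the resulting bound over $\eps$.

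First I would apply~\Cref{coorprivate} to the given coordination protocol with failure probability $\beta$; this produces, for any $\eps > 0$, an $(\eps, 0)$-jointly differentially private algorithm $X\ub(D) := \pritocol((\sigma, \pi), S, \eps, D)$ whose expected welfare satisfies
\[
\Ex{}{S(D, X\ub(D))} \geq \OPT(D)/\rho - \frac{2(\ell + \log(1/\beta))}{\eps}
\]
with probability at least $1-\beta$. I would then set $\beta = 1/n$ and use the fact that $S(D, \cdot) \le n$, so the failure event contributes at most $O(1)$ to the expectation; hence $X\ub$ satisfies the same welfare guarantee in expectation with $\log(1/\beta)$ replaced by $\log n$ (up to a harmless additive $O(1)$ that can be absorbed).

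Next I would feed this algorithm directly into~\Cref{dPoA} with $\delta = 0$. The welfare hypothesis of~\Cref{dPoA} enters the proof only as a per-round lower bound on $\Expectation[S(D, X\ub(D))]$, so an additive slack $\alpha := 2(\ell + \log n)/\eps$ in that lower bound propagates, through the smoothness conversion with multiplier $\lambda/(\rho \max\{1,\mu\})$, into an aggregate additive loss of $\tfrac{\lambda T}{\rho \max\{1,\mu\}} \cdot \alpha$ in the welfare bound. Simultaneously, the privacy term in~\Cref{dPoA} becomes $\tfrac{nT}{\max\{1,\mu\}}\sqrt{2p(1 + n\eps) \ln(NT)}$, which for any non-trivial $\eps$ is dominated by $\tfrac{nT}{\max\{1,\mu\}}\sqrt{4p n\eps \ln(NT)}$. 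Finally, since $\eps$ is ours to choose, the cleanest statement is obtained by taking the infimum over $\eps > 0$ of the sum of these two losses, which is exactly the expression in the lemma.

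The one point requiring care is the step where the additive slack $\alpha$ from~\Cref{coorprivate} is substituted into~\Cref{dPoA}, which was stated for a purely multiplicative approximation. This is the only real obstacle, and it is resolved by inspecting the proof of~\Cref{dPoA}: the multiplicative assumption is used only to lower-bound the per-round expected welfare of the JDP benchmark, and an additive deficit $\alpha$ in that lower bound passes through the smoothness argument linearly in $T$, exactly producing the extra term in our statement. Everything else --- the $\log n$ in place of $\log(1/\beta)$, the simplification of the privacy term, and the $\inf_\eps$ tradeoff --- is routine.
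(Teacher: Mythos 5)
Your proposal is correct and follows exactly the route the paper intends: privatize $(\sigma,\pi)$ via \Cref{coorprivate} with $\beta=1/n$ (whence the $\log n$), feed the resulting $(\eps,0)$-JDP allocation algorithm into \Cref{dPoA} with $\delta=0$, propagate the additive slack $2(\ell+\log n)/\eps$ through the smoothness conversion, and take the infimum over $\eps$. The paper gives no further detail beyond ``combine \Cref{dPoA} and \Cref{coorprivate},'' so your write-up — including the explicit handling of the failure event and of the additive-versus-multiplicative benchmark in \Cref{dPoA} — is if anything more careful than the original.
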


\section{Coordination through Dynamics Simulation}
In this section, we give another general technique for designing
coordination protocols. The key idea is to broadcast a message that is
sufficient for players to derive the sequence of actions that they
would have executed in some joint dynamic that is known to converge to
the solution to the coordination problem. Similar techniques have been
used in the privacy literature, for example~\cite{KMRW15},
\cite{HHRRW14} and \cite{RR14}. We will focus on the application of
coordinating an equilibrium flow in atomic routing games, which follows the general outline of \cite{RR14}.

A basic primitive that turns out to be broadly useful when writing down the transcript of some dynamic is being able to keep a running count of a stream of numbers. For many applications, in fact, it is sufficient to be able to maintain an \emph{approximate} count. Before we start, we
introduce two subroutines for keeping track of the approximate count of a binary stream using low communication. The first one compresses a numeric  stream
$\tau\colon [T] \rightarrow \{-1,0,1\}$ into a short
message, and the second one decompresses it. See~\Cref{compress} for the simple compression protocol.

\begin{algorithm}[h]
  \caption{$\approxC(\tau, r, T)$}
 \label{compress}
  \begin{algorithmic}
    \STATE{\textbf{Input:} a stream of numbers $\tau
      \colon [T] \rightarrow \{-1,0,1\}$ and refinement parameter
      $r$}
    \INDSTATE[1]{\textbf{Initialize:} a counter $C\colon [T] \rightarrow \NN$, a list of update steps $U = \emptyset$}
    \INDSTATE[1]{\textbf{for } $t = 1, \ldots , T$:}
    \INDSTATE[2]{\textbf{if} $t=1$: let $C(t) = 0$; {\bf else}: let $C(t) = C(t-1)$}
    \INDSTATE[2]{\textbf{if } $|C(t) - \sum_{i=1}^t \tau(i)| \geq  r$: }

    \INDSTATE[3]{\textbf{if } $C(t) <  \sum_{i=1}^t \tau(i)$ \textbf{then} $C(t) = C(t - 1) + r$ and $U \leftarrow U\cup \{(t, +)\}$}
    \INDSTATE[3]{\textbf{if } $C(t) >  \sum_{i=1}^t \tau(i)$ \textbf{then} $C(t) = C(t-1) -r$ and $U \leftarrow U\cup \{(t, -)\}$}
    \STATE{\textbf{Output:} the list of time steps $U$}
    \end{algorithmic}
\end{algorithm}
This algorithm releases a concise description $U$ that suffices to reconstruct an approximate running count $C(t)$ of the stream.
\begin{claim}~\label{counteracc}
For all $t \in [T]$, the approximate count $C(t)$
satisfies $|C(t) - \sum_{i=1}^t \tau(i)| \leq r$. The summary statistic $U$ can be written with $O\left(\frac{\|\tau\|_0 \log T}{r} \right)$
bits.
\end{claim}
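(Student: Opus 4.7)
The plan is to prove the two halves of the claim separately: the accuracy bound by induction on $t$, and the space bound by a charging argument that ties the number of update events to the $\ell_0$ norm of $\tau$.

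For the accuracy bound, let $S(t) = \sum_{i=1}^t \tau(i)$ and maintain the invariant $|C(t) - S(t)| \leq r$. The base case $t=1$ is immediate since $|C(1)| \leq r$ after the (at most one) adjustment. For the inductive step, after the tentative assignment $C(t) \gets C(t-1)$ we have
\[
C(t) - S(t) \;=\; \bigl(C(t-1) - S(t-1)\bigr) - \tau(t),
\]
which lies in $[-r-1,\, r+1]$ by the hypothesis $|C(t-1)-S(t-1)|\leq r$ and $|\tau(t)|\leq 1$. If $|C(t)-S(t)| < r$, no update is performed and the invariant holds trivially. Otherwise the algorithm shifts $C(t)$ by $\pm r$ in the direction of $S(t)$, and a two-case check (treating $C(t-1)-S(t) \in [-r-1,-r]$ and $C(t-1)-S(t)\in[r,r+1]$ separately) shows that the post-correction gap lies in $[-1,1]$, which is even stronger than what we need.

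For the space bound, each element of $U$ is a pair $(t,\pm)$ and can be written with $\lceil \log T\rceil + 1$ bits, so it suffices to bound $|U|$. The key observation is that immediately after an update at some time $t_0$, the calculation above shows that $|C(t_0) - S(t_0)| \leq 1$. Between $t_0$ and the next update time $t_1$, the counter $C$ stays constant, so the gap $C - S$ changes only because $S$ changes. Since the next update requires the gap to reach magnitude $r$, the partial sum $S$ must move by at least $r-1$ in that window. Each unit change in $S$ is caused by a distinct nonzero entry of $\tau$, so at least $r-1$ nonzero entries of $\tau$ lie strictly between any two consecutive updates. Hence $|U| \leq \|\tau\|_0/(r-1) = O(\|\tau\|_0/r)$ whenever $r\geq 2$ (and the bound is trivial for $r=1$), giving the claimed $O(\|\tau\|_0 \log T / r)$-bit encoding.

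The only mildly delicate point is the sign analysis in the inductive step: one must verify that adding $+r$ when $C(t-1)<S(t)$ (and symmetrically $-r$ when $C(t-1)>S(t)$) never overshoots by more than $1$. The rest is bookkeeping, and no further structural insight is required.
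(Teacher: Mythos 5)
Your proof is correct, and it is the intended argument: the paper states \Cref{counteracc} without proof, treating it as immediate from the algorithm's design. The two key points you identify --- that the gap $|C(t)-S(t)|$ drops to at most $1$ immediately after each correction (so the inductive invariant $\leq r$ is maintained with room to spare), and that consecutive corrections must therefore be separated by at least $r-1$ nonzero stream entries, which charges $|U|$ against $\|\tau\|_0/r$ --- are exactly what is needed, and your handling of the small-$r$ edge cases is fine.
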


The second one takes the compressed message from $\approxC$ as input,
and extracts the approximate counts. See~\Cref{extract}.

\begin{algorithm}[h]
  \caption{$\exstream(U, r, T)$}
 \label{extract}
  \begin{algorithmic}
    \STATE{\textbf{Input:} a list of update steps $U$, refinement
      parameter $r$, and time horizon $T$}
    \INDSTATE[1]{\textbf{Initialize:} a counter $C \colon [T]
      \rightarrow \NN$ such that $\tau(t) = 0$ for all $t\in [T]$}
    \INDSTATE[1]{\textbf{for:} each $(t, \bullet)\in U$}
    \INDSTATE[2]{Let $c' = C(t - 1)$}
    \INDSTATE[2]{{\bf if} $\bullet = +$: $c = c' + r$; {\bf else}: $c = c' - r$}
    \INDSTATE[2]{\textbf{for:} each $t' \in \{t, \ldots, T\}$:}
    \INDSTATE[3]{Let $C(t') = c$}
    \STATE{\textbf{Output:} the approximate counts $C$}
    \end{algorithmic}
\end{algorithm}

Looking ahead, we will use $\approxC$ in the coordinator's encoding
function to compress the information in the simulated dynamics, and
use $\exstream$ in each player's decoding function to extract the
information about the dynamics.

\subsection{Atomic Routing Games and Best-Response Dynamics}
An \emph{atomic routing game} instance is defined by a directed graph
$G= (V, E)$, $n$ players with their source-sink pairs $(s_1, d_1),
\ldots, (s_n, d_n)$, and a continuous, nondecreasing and
$\lambda$-Lipschitz cost function $c_e \colon [0, n] \rightarrow
      [0,1]$ for each edge $e\in E$.  Each player $i$ needs to route 1
      unit of flow from $s_i$ to $d_i$, so her strategy set $\cA_i$ is
      the set of $s_i$-$d_i$ paths. We think of a flow of a single player alternately as a vector indexed by paths $P$, and as a vector indexed by edges $e$. The aggregate flow is the sum over all player flows. A flow $f$ (viewed as a vector indexed by paths) is \emph{feasible} if for each player $i$, $f^{(i)}_P$ equals 1
      for exactly one $s_i$-$d_i$ path and equals 0 for all other
      paths. We can translate such a flow into a flow indexed by edges by defining $f_e^{(i)} = \sum_{e \in P}f_P^{(i)}$. The cost $c_P(f)$ of path $P$ in a flow $f$ is
\[
c_P(f) = \sum_{e\in P} c_e(f_e)
\]
where $f_e = \sum_{i=1}^n f_e^{(i)}$. We will denote the number of
edges by $m$, the set of feasible flow by $\cF$, and sometimes abuse
notation to use $f\ui$ to denote the path of player $i$. Now we
formalize the notion of approximate equilibrium flow in a routing
game.

\begin{definition}[Approximate Equilibrium Flow]
Let $f$ be a feasible flow for the atomic instance $(G, c)$. The flow
$f$ is an $\eps$-\emph{equilibrium flow} if each player $i\in [n]$ is
playing $\eps$\emph{-best-response}, that is for every pair every pair of $s_i-d_i$  
paths $P, P'$ with $f_P^{(i)} > 0$,
\[
c_P(f) \leq c_{P'}(f') + \eps
\]
where $f'$ is the flow identical to $f$ except for its $i$-th
component: $f'^{(i)}_P = 0$ and $f'^{(i)}_{P'} = 1$. When $f$ is a
$0$-equilibrium flow, we simply say that $f$ is a equilibrium flow.
\end{definition}

The classical work of~\cite{MS96} establishes the existence of
equilibrium flows and shows that an equilibrium flow minimizes the
\emph{potential function} $\Psi$ of the routing game, which is defined
as
\[
\Psi(f) \equiv \sum_{e\in E}\sum_{i=1}^{f_e} c_e(i).
\]

Note that in atomic routing games, equilibrium flows are not unique, and so equilibrium selection is important. This motivates the coordination problem we study. 

We will rely on the following fact to show that a flow is an
approximate equilibrium.

\begin{fact}
Consider a flow $f\in \cF$. Suppose a player $i$ could decrease her
cost by deviating from path $P$ to path $\tilde P$, which gives rise
to a new flow $\tilde f$, then
\[
c_P(f) - c_{\tilde P}(\tilde f) = \Psi(f) - \Psi(\tilde f).
\]
%% Let $\hat f$ be a feasible flow in the routing game. Suppose that
%% $\Psi(\hat f) - \min_{f\in \cF} \Psi(f) \leq \eps$, then
%% $\hat f$ forms an $\eps$-approximate equilibrium flow.
\end{fact}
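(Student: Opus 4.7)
The plan is a direct calculation: expand both sides of the claimed identity and show that they reduce to the same expression, a sum over the symmetric difference of the two paths.

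First I would partition the edges into three classes: $E_0 = P \cap \tilde P$ (edges used by both the old and new path of player $i$), $E_- = P \setminus \tilde P$ (edges abandoned), and $E_+ = \tilde P \setminus P$ (edges newly used). Since only player $i$ changes her route, the aggregate edge flow transforms as $\tilde f_e = f_e$ for $e \in E_0$, $\tilde f_e = f_e - 1$ for $e \in E_-$, and $\tilde f_e = f_e + 1$ for $e \in E_+$; all other edges are untouched.

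Next I would compute the change in the Rosenthal potential. Since the sums $\sum_{k=1}^{f_e} c_e(k)$ telescope when $f_e$ changes by $\pm 1$, the difference is
\[
\Psi(f) - \Psi(\tilde f) = \sum_{e \in E_-} c_e(f_e) \;-\; \sum_{e \in E_+} c_e(f_e + 1) = \sum_{e \in E_-} c_e(f_e) \;-\; \sum_{e \in E_+} c_e(\tilde f_e),
\]
where in the second equality I substitute $\tilde f_e = f_e + 1$ on $E_+$. No contribution arises from $E_0$ or from edges outside $P \cup \tilde P$.

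Then I would compute the cost difference. By definition,
\[
c_P(f) - c_{\tilde P}(\tilde f) = \sum_{e \in P} c_e(f_e) - \sum_{e \in \tilde P} c_e(\tilde f_e).
\]
On $E_0$ we have $f_e = \tilde f_e$, so those terms cancel pairwise. What remains is $\sum_{e \in E_-} c_e(f_e) - \sum_{e \in E_+} c_e(\tilde f_e)$, which is precisely the expression derived for $\Psi(f) - \Psi(\tilde f)$. This proves the fact. There is no genuine obstacle here; the only subtlety worth flagging is the bookkeeping that $\tilde f_e = f_e + 1$ on $E_+$ and $\tilde f_e = f_e - 1$ on $E_-$, which reconciles the ``$c_e(f_e + 1)$'' arising from the potential telescoping with the ``$c_e(\tilde f_e)$'' appearing in the cost of the new path.
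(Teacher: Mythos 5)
Your proof is correct and is exactly the classical Rosenthal/Monderer--Shapley potential argument; the paper states this fact without proof (citing the potential-function literature), and your telescoping computation over $P\setminus\tilde P$, $\tilde P\setminus P$, and $P\cap\tilde P$ is the standard way to establish it, including the correct bookkeeping that $\tilde f_e=f_e+1$ on newly used edges so that $c_e(f_e+1)=c_e(\tilde f_e)$ there.
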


Our goal is to give a coordination protocol that coordinates the
players to play an approximate equilibrium flow and has low
coordination complexity (scaling with the number of edges $|E|$
instead of number of players $n$).\footnote{The social objective here
  is not social welfare --- we want instead to minimize approximation factor of
  the equilibrium $\eps$.} A very straightforward procedure to compute
an equilibrium flow is the \emph{best-response dynamics}: while the
flow $f$ is not a $\eta$-equilibrium flow, pick a player $i$ and an
arbitrary path deviation that could decrease her cost. In our
coordination protocol, the coordinator will first simulate an
approximate version of the best-response dynamics, and compress the
dynamics into a concise string using the subroutine $\approxC$; then
the coordinator will broadcast the string to the players so that each
player can simulate the sequence of actions she would have played in the dyamics, and thus 
determine the action she plays at the end of the dynamics using $\exstream$.

In our approximate best-response dynamics, we will let the players
best-respond to the approximate count of players on the edges. We
first need to define the a player's best response with respect to a
count vector in $\RR^m$.

\begin{definition}[Best-Response with respect to counts]
Given a count vector $f\ub\in \RR^{|E|}$, a path $P\ui$ for player $i$
is an $\eps$-best-response with respect to the vector $f\ub$ if
\[
c_{P\ui}(f\ub) - \eps \leq \min_{\tilde P\ui \in \cA_i} c_{\tilde P\ui}(f\ub).
\]
\end{definition}
Keep in mind that any feasible flow $f$ is also a count vector. We
give the formal description of the coordinator's encoding function
$\br$ in~\Cref{br-counter}.

\begin{algorithm}[h]
\caption{$\br((G, \{(s_i, d_i)\}_{i\in [n]}, c), \alpha, r)$}
 \label{br-counter}
  \begin{algorithmic}
    \STATE{\textbf{Input:} a routing game instance $(G, \{(s_i,
      d_i)\}_{i\in [n]}, c)$, best-response parameter $\alpha$ and
      refinement parameter $r$ such that $\alpha > 2\lambda m(r + 1)$ }
    \STATE{\textbf{Initialize:}
      \[
      l = \alpha -  2\lambda m r  - \lambda m, \qquad T = \frac{mn}{l}
      \]}
    \INDSTATE{\textbf{for } each edge $e\in E$}
    \INDSTATE[2]{Let $\counter(e)$ be an instantiation of $\approxC(\cdot, r, (T + 1) n )$ (waiting for incoming stream)}
    \STATE{\bf Form the initial flow:}
    \INDSTATE[1]{\textbf{for } each player $i$:}
    \INDSTATE[2]{Let $f\ui$ be the $s_i$-$d_i$ path $P\ui$ with the fewest number of edges, break ties lexicographically}
    \INDSTATE[3]{\textbf{for } each edge $e$:}
    \INDSTATE[4]{\textbf{if } $e\in P\ui$ send ``1'' to $\counter(e)$ \textbf{else } send ``0'' to $\counter(e)$}

    \STATE{\bf Best-responses dynamics:}
    \INDSTATE{\textbf{for} $t = 1, \ldots , T$}
    \INDSTATE[2]{{\bf if} each player is playing an $\alpha$-best-response w.r.t. the  counts $\{\counter(e)\}_{e\in E}$: Halt}
    \INDSTATE[2]{\textbf{for} each player $i$}
    \INDSTATE[3]{\textbf{if} $i$ is not playing an $\alpha$-best-response w.r.t. the flow $\{\counter(e)\}_{e\in E}$}
    \INDSTATE[4]{Let $\hat f\ui$ be the best-response of $i$ w.r.t. $(\counter(e))_{e\in E}$ (breaking ties lexicographically)}
    \INDSTATE[4]{{\bf for} each $e$:}
    \INDSTATE[5]{{\bf if } $e\in f\ui \setminus \hat f\ui$: Send ``-1'' to $\counter(e)$}
    \INDSTATE[5]{{\bf if } $e\in \hat f\ui \setminus f\ui$: Send ``1'' to $\counter(e)$}
    \INDSTATE[5]{{\bf else }: Send ``0'' to $\counter(e)$}
    \INDSTATE[4]{Let $f\ui = \hat f\ui$}
    \INDSTATE[3]{{\bf else}:}
    \INDSTATE[4]{{\bf for} each $e$: Send ``0'' to $\counter(e)$}

    \INDSTATE[1]{{\bf for} each $e$: Let $U_e$ be the output of $\counter(e)$}
    \STATE{{\bf Output:} $\{U_e\}_{e\in E}$}
    \end{algorithmic}
\end{algorithm}

We will now focus on analyzing the best-response dynamics within
$\br$. Note that in the analysis we might say ``player plays
best-response'' or ``player deviates''; while these sound natural, all
these procedures are simulated by the coordinator and the protocol is
non-interactive.

\begin{lemma}\label{translate}
At any moment of the dynamics, let $f\in \RR^{|E|}$ be the flow given
by all players' paths and let $g\in \RR^{|E|}$ be the count vector
given by the counters $\{\counter(e)\}_{e\in E}$. Suppose that player
$i$'s path $f\ui$ is an $\eta$-best-response with respect to $g$, then
the path $f\ui$ is an $(\eta + \lambda m r + \lambda m)$-best-response.
\end{lemma}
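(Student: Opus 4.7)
The plan is to chain together two approximation bounds: one that controls the gap between costs evaluated on the approximate counter vector $g$ and on the true flow $f$, and one that controls the gap between $g$ and the post-deviation flow $\tilde f$. The $\eta$-best-response guarantee on $g$ sits in the middle.

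First I would invoke the accuracy guarantee of $\approxC$ from~\Cref{counteracc}: the counters $\{\counter(e)\}_{e\in E}$ satisfy $|g_e - f_e|\leq r$ for every edge $e\in E$, since at every step the coordinator feeds in the exact increment or decrement of player $i$'s contribution to edge $e$ and the subroutine's running count deviates from the true partial sum by at most $r$. Combining this with $\lambda$-Lipschitzness of each $c_e$ and the fact that $f\ui$ uses at most $m$ edges, I get
\[
|c_{f\ui}(g) - c_{f\ui}(f)| \;\leq\; \sum_{e\in f\ui} \lambda|g_e - f_e| \;\leq\; \lambda m r.
\]

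Next, for any alternative path $\tilde P$ for player $i$, I compare $c_{\tilde P}(g)$ with $c_{\tilde P}(\tilde f)$, where $\tilde f$ is the flow obtained from $f$ by moving player $i$ from $f\ui$ onto $\tilde P$. For $e\in \tilde P\cap f\ui$, player $i$'s contribution is unchanged so $\tilde f_e = f_e$ and $|g_e-\tilde f_e|\leq r$; for $e\in \tilde P\setminus f\ui$, we have $\tilde f_e = f_e + 1$ and so $|g_e-\tilde f_e|\leq r+1$. Summing the Lipschitz bound over the at most $m$ edges of $\tilde P$ yields
\[
|c_{\tilde P}(g) - c_{\tilde P}(\tilde f)| \;\leq\; \lambda m(r+1) \;=\; \lambda m r + \lambda m.
\]

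Finally I would chain the three inequalities. Starting from the cost of the current path in the true flow, apply the first bound, then the hypothesis that $f\ui$ is an $\eta$-best-response with respect to $g$ (so $c_{f\ui}(g)\leq c_{\tilde P}(g)+\eta$), then the second bound:
\[
c_{f\ui}(f) \;\leq\; c_{f\ui}(g) + \lambda m r \;\leq\; c_{\tilde P}(g) + \eta + \lambda m r \;\leq\; c_{\tilde P}(\tilde f) + \eta + \lambda m r + \lambda m,
\]
which gives the claimed best-response guarantee on $f$. The only subtle point is the second step above: because the deviation of player $i$ shifts the true load on edges in $\tilde P\setminus f\ui$ by $1$, the counter error there effectively becomes $r+1$ rather than $r$, and this is what produces the additional $\lambda m$ slack beyond the naive $\lambda m r$ counter-error term.
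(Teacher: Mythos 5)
Your proof follows the same route as the paper's: charge the counter error on the current path, apply the $\eta$-best-response hypothesis with respect to the count vector $g$, and then charge the discrepancy between $g$ and the post-deviation flow on the alternative path. The only structural difference is cosmetic---you bound $|c_{\tilde P}(g)-c_{\tilde P}(\tilde f)|$ in one step by $\lambda m(r+1)$, whereas the paper factors it through $c_{\tilde P}(f)$ as $\lambda m r+\lambda m$; these are the same estimate.

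There is, however, an arithmetic inconsistency in your final chain. You pay $\lambda m r$ for the current path in the first inequality and $\lambda m r+\lambda m$ for the alternative path in the last one, so what your three inequalities actually give is
\[
c_{f\ui}(f)\;\le\; c_{\tilde P}(\tilde f)+\eta+2\lambda m r+\lambda m,
\]
not $\eta+\lambda m r+\lambda m$: the last step of your display adds only $\lambda m$ where the bound you derived two paragraphs earlier requires $\lambda m r+\lambda m$. To be fair, the paper's own proof commits essentially the identical slip---it establishes $|c_P(f)-c_P(g)|\le\lambda m r$ for every path, applies it once to $c_{f\ui}$ and once to the minimizing path, yet reports the combined error as a single $\lambda m r$---so the lemma arguably ought to be stated with $2\lambda m r$. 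This is only a constant-factor matter: the potential-decrease lemma already budgets $2\lambda m r$, and in \Cref{flowmain} one would simply reset $\alpha=\eps-2\lambda m r-\lambda m$ to absorb the extra term, leaving the asymptotic coordination complexity unchanged. Modulo this bookkeeping error, which you share with (and presumably inherited the target constant from) the paper, your argument is complete and correct.
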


\begin{proof}
By Claim~\ref{counteracc}, we guarantee that throughout the dynamics,
for each $e\in E$
\[
 |\counter(e) - f| \leq r.
\]
This allows us to bound the cost difference of the same path with
respect to two flows $f$ and $g$ --- for any path $P\subseteq E$
\[
|c_P(f) - c_P(g)| \leq \lambda m r.
\]
This implies
\[
|\min_{P'\in \cA_i} c_{P'}(f) - \min_{P'\in \cA_i} c_{P'}(g)| \leq \lambda m r.
\]
Note that $c_{f\ui}(g) - \min_{P'\in \cA_i} c_{P'}(g) \leq \eta$ by
our assumption, then it follows from the last two inequalities that
\[
c_{f\ui}(f) \leq \min_{P'\in \cA_i} c_{P'}(f) + \eta + \lambda m r,
\]
so $f\ui$ is a $(\eta + \lambda m r)$-best-response w.r.t. the flow
$f$. Let $f'_i$ be a deviation of player $i$ and let $f' = (f'_i,
f^{(-i)})$ be the resulting flow. We know that for each edge $e$,
$|f'_e - f_e|\leq 1$. It follows that
\[
\min_{P'\in \cA_i} c_{P'}(f) \leq \min_{P'\in \cA_i} c_{P'}(f') + \lambda m.
\]
Therefore, $c_{f\ui}(f) \leq \min_{P'\in \cA_i} c_{P'}(f') + \eta + \lambda m
r + \lambda m$, which guarantees that $f\ui$ is a $(\eta + \lambda m
r + \lambda m)$-best-response.
\end{proof}

\begin{lemma}
Every time a player makes a deviation in the dynamics, the potential
function $\Psi$ decreases by at least $(\alpha - 2\lambda m r -
\lambda m)$.
\end{lemma}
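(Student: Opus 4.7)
The plan is to trace the deviation through three approximation steps: (i) from the count vector $g$ maintained by $\counter$ to the true current flow $f$, (ii) from the true current flow $f$ to the post-deviation flow $\tilde f$, and then (iii) invoke the stated fact relating potential differences to cost differences of the swapped path.

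First, I would use the triggering condition for a deviation: player $i$ deviates only when $f^{(i)}$ is not an $\alpha$-best-response with respect to the count vector $g = (\counter(e))_{e\in E}$, and $\hat f^{(i)}$ is taken to be the exact best response w.r.t.\ $g$. So
\[
c_{f^{(i)}}(g) - c_{\hat f^{(i)}}(g) \;>\; \alpha.
\]
By \Cref{counteracc} we have $|g_e - f_e|\le r$ on every edge, and each $c_e$ is $\lambda$-Lipschitz, so for any path $P$ with at most $m$ edges, $|c_P(f) - c_P(g)|\le \lambda m r$. Applied to both $f^{(i)}$ and $\hat f^{(i)}$ this yields
\[
c_{f^{(i)}}(f) - c_{\hat f^{(i)}}(f) \;>\; \alpha - 2\lambda m r.
\]

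Next, let $\tilde f$ denote the flow obtained by replacing player $i$'s path $f^{(i)}$ with $\hat f^{(i)}$. The stated fact gives $\Psi(f) - \Psi(\tilde f) = c_{f^{(i)}}(f) - c_{\hat f^{(i)}}(\tilde f)$. Since $\tilde f$ and $f$ differ only in player $i$'s contribution, each edge load changes by at most $1$, and Lipschitzness of $c_e$ bounds the cost of the new path $\hat f^{(i)}$ under the two flows:
\[
\bigl| c_{\hat f^{(i)}}(\tilde f) - c_{\hat f^{(i)}}(f) \bigr| \;\le\; \lambda m.
\]
Combining the last three inequalities gives
\[
\Psi(f) - \Psi(\tilde f) \;=\; c_{f^{(i)}}(f) - c_{\hat f^{(i)}}(\tilde f) \;\ge\; c_{f^{(i)}}(f) - c_{\hat f^{(i)}}(f) - \lambda m \;>\; \alpha - 2\lambda m r - \lambda m,
\]
which is exactly the claimed drop.

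The only delicate point is making sure we are allowed to use the exact best-response fact on the true flows $f,\tilde f$, since the dynamics is driven by the approximate counts $g$, not by $f$ directly. That is handled by the two Lipschitz translations above (the $2\lambda m r$ term absorbs the counter slack, and the $\lambda m$ term absorbs the one-unit shift between $f$ and $\tilde f$). The hypothesis $\alpha > 2\lambda m(r+1)$ from the algorithm's input specification ensures this bound is strictly positive, so every deviation makes strict progress in $\Psi$; this is precisely what will be used next (outside this lemma) to bound the number of deviations by $\Psi(f_0)/(\alpha - 2\lambda m r - \lambda m) \le mn/l = T$.
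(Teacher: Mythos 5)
Your proof is correct and follows essentially the same route as the paper's: use the triggering condition to get an $\alpha$ gap with respect to the counter vector $g$, absorb the counter slack via two applications of the $\lambda m r$ Lipschitz bound, absorb the one-unit load shift between $f$ and $\tilde f$ via a further $\lambda m$ term, and conclude through the potential-cost identity. The paper merely arranges the same four error terms as an explicit telescoping sum; there is no substantive difference.
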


\begin{proof}

Let $f$ denote the true flow among the $n$ players. Since the amount
the potential function decreases equals the amount that player
decreases her cost, we can bound $c_{f\ui}(f) - c_{\hat
  f\ui}(f')$, where $f' = (\hat f\ui, f^{(-i)})$.  Let $g$ denote the count
vector given by the counters $(\counter(e))_{e\in E}$.  Suppose a
player $i$ has her path switched from $f\ui$ to $\hat f\ui$ during the
dynamics. Then this means
\[
\min_{P\in \cA_i} c_P(g) = c_{\hat f\ui}(g) \leq c_{f\ui}(g) - \alpha.
\]
By the accuracy guarantee of Claim~\ref{counteracc},
\[
|c_{P}(f) - c_{P}(g)| \leq \lambda m r \qquad \mbox{ and } \qquad
\left|\min_{P'\in \cA_i} c_{P'}(f) - \min_{P'\in \cA_i} c_{P'}(g) \right| \leq \lambda m r.
\]
This means
\[
|c_{f\ui}(g) - c_{f\ui}(f)| \leq \lambda m r \qquad \mbox{and} \qquad
|c_{\hat f\ui}(g) - c_{\hat f\ui}(f)| \leq \lambda m r.
\]
Furthermore, note that $|f_e - f'_e| \leq 1$ for each edge $e$
since they differ by only player $i$'s path, so
\[
|c_{\hat f\ui}(f') - c_{\hat f\ui}(f)| \leq \lambda m.
\]
Combining all the inequalities above we get
\begin{align*}
c_{f\ui}(f) - c_{\hat f\ui}(f') &= (c_{f\ui}(f) - c_{f\ui}(g)) + (c_{f\ui}(g) - c_{\hat f\ui}(g))
\\ &+ (c_{\hat f\ui}(g) - c_{\hat f\ui}(f)) + (c_{\hat f\ui}(f) - c_{\hat f\ui}(f'))\\
&\geq - \lambda m r + \alpha - \lambda m r -\lambda m = \alpha - 2\lambda m r - \lambda m,
\end{align*}
which also lower bounds the amount that the potential function
decreases.
\end{proof}

\begin{lemma}\label{equilcondition}
At the end of the best-response dynamics, the players are playing
$(\alpha + \lambda m r + \lambda m)$-approximate equilibrium flow in the routing
game instance.
\end{lemma}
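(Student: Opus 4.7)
The plan is to combine the termination argument (from the potential‐function bound in the previous lemma) with the translation lemma (\Cref{translate}) that converts best‐response guarantees with respect to the approximate counter vector into best‐response guarantees with respect to the true flow.

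First, I would observe that the halting condition in $\br$ is precisely that every player is playing an $\alpha$-best‐response with respect to the count vector $g = (\counter(e))_{e \in E}$. So it suffices to show that (a) the dynamics in fact reaches this halting condition within $T$ rounds, and (b) once the halting condition is met, the corresponding true flow $f$ is an $(\alpha + \lambda m r + \lambda m)$-equilibrium.

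For (a), I would use the preceding lemma: every player deviation strictly decreases the potential $\Psi$ by at least $l = \alpha - 2\lambda m r - \lambda m > 0$ (the positivity uses the hypothesis $\alpha > 2\lambda m(r+1)$ on the inputs to $\br$). Since $c_e \in [0,1]$ and the flow on any edge is at most $n$, the potential satisfies $0 \leq \Psi(f) \leq mn$ at all times. Hence the total number of deviations across the entire execution is at most $mn/l = T$. In any round where the halting check fails, at least one player deviates; hence after $T$ rounds at least one round must have seen no deviation, i.e., the halting check was satisfied and the dynamics terminated.

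For (b), at the terminating round every player $i$ is an $\alpha$-best-response with respect to the counter vector $g$. Applying \Cref{translate} with $\eta = \alpha$ yields that, with respect to the true flow $f$ induced by the players' current paths, each $f^{(i)}$ is an $(\alpha + \lambda m r + \lambda m)$-best-response. This is exactly the definition of an $(\alpha + \lambda m r + \lambda m)$-equilibrium flow, giving the claim. The only nontrivial ingredient is the termination count, and that has already been reduced to bookkeeping on $\Psi$ and $l$ via the previous lemma, so I do not expect any serious obstacle.
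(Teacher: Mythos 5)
Your proposal is correct and follows essentially the same route as the paper's proof: bound the number of deviations by $\Psi(f) \leq mn$ divided by the per-deviation decrease $l = \alpha - 2\lambda m r - \lambda m$ to conclude that the halting condition (every player an $\alpha$-best-response w.r.t.\ the counters) is reached within $T$ rounds, then invoke \Cref{translate} with $\eta = \alpha$ to transfer this to an $(\alpha + \lambda m r + \lambda m)$-best-response w.r.t.\ the true flow. Your write-up is in fact slightly more explicit than the paper's about why termination within $T$ rounds is forced and why $l > 0$.
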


\begin{proof}
In each iteration, there is at least one player performing a
deviation, which will decrease the potential function by at least
$(\alpha - 2\lambda m r - \lambda m) = l$.

Note that the initial flow has potential
\[
\Psi(f) \equiv \sum_{e\in E}\sum_{i=1}^{f_e} c_e(i) \leq nm.
\]
This means after at most $T = mn/l$ iterations of the best-response
dynamics, every player is playing $\alpha$-best-response with respect
to the flow given by $(\counter(e))_{e\in E}$. By~\Cref{translate},
each player is playing a $(\alpha + \lambda m r + \lambda
m)$-best-response w.r.t. to the final flow $f$. Hence, the final flow
is an $(\alpha + \lambda m r + \lambda m)$-equilibrium flow.
\end{proof}

Given that we have shown that at the end of the best-response dynamics
in $\br$, the players are playing an approximate equilibrium, it only remains to construct a decoding function for the players to recover
their own sequence of actions in the dynamics. Observe that $\br$ outputs
the list of update steps across all counters, which allows each player
to simulate the history of the approximate counts. Therefore the
decoding function is straightforward: first call $\exstream$ to
extract the history of counts in the best-response dynamics; then each
player $i$ first forms her own initial flow by picking the shortest path in her action set,
and then at every time step $t$ such that $(t \mod n) \equiv i$, she
decides whether to switch to a best-response with respect to the counts. Since her best response, after breaking ties lexicographically, is uniquely determined, in this way she is able to determine which path she is playing along at the end of the dynamics, which is her part of the approximate equilibrium.  The
full description of the algorithm $\exflow$ is presented
in~\Cref{exflow}.

\begin{algorithm}[h]
\caption{$\exflow((s_i, d_i), \{U_e\}_{e\in E}, \alpha, r)$}
\label{exflow}
\begin{algorithmic}

  \STATE{\textbf{Input:} a player $i$'s source-destination pair $(s_i,
    d_i)$, message containing update steps $\{U_e\}_{e\in E}$ for the
    $\{\counter(e)\}_{e\in E}$, best-response parameter $\alpha$ and
    refinement parameter}
  \INDSTATE{{\bf Initialize:} $l = \alpha - 2\lambda m r - \lambda m$}
  \INDSTATE{{\bf for } each edge $e\in E$:}
  \INDSTATE[2]{Let $C_e = \exstream(U_e, r, T)$ be the history of approximate counts}
  \INDSTATE{\bf Form the initial flow:}
  \INDSTATE[2]{Let $f\ui$ be the $s_i$-$d_i$ path with the fewest number of edges, break ties lexicographically}
  \INDSTATE{{\bf for} $t = 1, \ldots , \frac{mn}{l}$:}
  \INDSTATE[2]{Let flow $g = (C_e(t n + i))_{e\in E}$}
  \INDSTATE[2]{{\bf if} $f\ui$ is not an $\alpha$-best-response w.r.t. $g$}
  \INDSTATE[3]{Switch $f\ui$ to a best-response w.r.t. $g$}
  \INDSTATE{{\bf Output:} the final $s_i$-$d_i$ path $f\ui$}
\end{algorithmic}
\end{algorithm}

\begin{claim}
Suppose that $\br((G, \{(s_i, d_i)\}_i, c), \alpha, r)$ outputs a set
of update step lists $\{U_e\}_{e\in E}$, then for each player $i$, her
output flow from the instantiation $\exflow((s_i, d_i), \{U_e\}_{e\in
  E}, \alpha , r)$ is the same as the final flow in the best-response
dynamics in $\br$.
\end{claim}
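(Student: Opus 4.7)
The plan is to argue by induction on time that each player's local simulation in $\exflow$ replays exactly the sequence of moves that the coordinator simulated in $\br$. Two ingredients make this possible: (i) the counters $\{\counter(e)\}_{e \in E}$ are the only ``state'' that the best-response dynamics in $\br$ consults when deciding a deviation, and (ii) every player's tentative action in $\br$ is a deterministic function of that state, since the initial flow is defined as the shortest $s_i$-$d_i$ path with lexicographic tie-breaking and each update step is the best response with respect to the current counters, again with lexicographic tie-breaking.

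First I would invoke the correctness of $\approxC$/$\exstream$ (Claim~\ref{counteracc} together with the construction of $\exstream$) to conclude that, given the published lists $\{U_e\}_{e \in E}$, the sequences $C_e = \exstream(U_e,r,T)$ reconstructed by each player are bit-for-bit identical to the histories of the counters maintained by the coordinator. Hence, at each time step $t$ and for each edge $e$, the count vector $g$ examined by player $i$ in $\exflow$ equals exactly the vector $(\counter(e))_{e \in E}$ seen by the coordinator when it processed player $i$ at the corresponding step of its simulated dynamics.

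Next I would induct on the iteration index $t = 0,1,\dots,T$ of the outer loop, maintaining the invariant that for every player $i$ the path $f\ui$ currently stored in her local copy of $\exflow$ agrees with the path that player $i$ holds in the coordinator's simulation of $\br$ just before the coordinator processes player $i$ at iteration $t$. The base case $t=0$ is immediate from the identical initialization rules (shortest-path, lex tie-break). For the inductive step: by the previous paragraph, the count vector $g$ inspected in $\exflow$ matches the one inspected in $\br$; by induction $f\ui$ matches; hence the predicate ``$f\ui$ is an $\alpha$-best response w.r.t.\ $g$'' evaluates to the same Boolean in both computations, and when it is false, the (lex-minimum) best-response path that each algorithm switches to is the same. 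So after each iteration the invariant is preserved.

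The only remaining subtlety, which will deserve a sentence, is the index arithmetic used to line up ``player $i$'s turn'' in the two algorithms: $\exflow$ reads $g = (C_e(tn + i))_{e\in E}$ precisely because in $\br$ the counter sends happen one player at a time inside each of the $T$ iterations, so the $(tn+i)$-th entry of the counter stream is exactly the snapshot the coordinator used for player $i$ at iteration $t$. Once this indexing is verified, the induction runs through the final iteration, and the path $f\ui$ that $\exflow$ returns equals the path that player $i$ holds at the termination of the dynamics in $\br$, which is the stated conclusion. I do not expect any serious obstacle; the main care needed is bookkeeping of the counter-update schedule and confirming that all tie-breaking rules are identical on both sides.
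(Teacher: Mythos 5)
Your proposal is correct and matches the paper's own (informal) justification, which rests on exactly the same two observations: $\exstream$ lets each player reconstruct the coordinator's counter histories exactly, and both the initial flow and every best response are deterministic under lexicographic tie-breaking, so a straightforward induction over the rounds shows the local simulation replays the coordinator's dynamics. The only care needed is the counter-index bookkeeping you already flag, which is a matter of aligning the stream positions with the coordinator's per-player update schedule rather than a substantive gap.
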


\begin{theorem}\label{flowmain}
Fix any $\eps > 2\lambda m$. Let $r = \frac{\eps - 2\lambda m
}{6\lambda m}$, and $\alpha = \eps - \lambda m r -\lambda m$. Then
given any routing game instance $\Gamma = (G, \{(s_i, d_i)\}_i, c)$,
the coordination protocol $(\br(\Gamma, \alpha, r), \exflow((s_i,
d_i), \{U_e\}_{e\in E}, \alpha , r))$ coordinates the players to play
an $\eps$-approximate equilibrium flow and has coordination complexity
of \[
\tilde O\left(\frac{ \lambda m^2 n}{(\eps - 2\lambda m)^2}\right).
\]
\end{theorem}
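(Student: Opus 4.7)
The plan is to assemble the theorem directly from the lemmas and the claim already proved for $\br$ and $\exflow$, once the parameter choices are verified to be consistent. The argument naturally decomposes into three pieces: checking feasibility of the parameters, transferring the equilibrium guarantee from the preceding lemmas, and bounding the length of the broadcast.

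First, I will verify the parameter bookkeeping. Substituting the choices $\alpha = \eps - \lambda m r - \lambda m$ and $r = (\eps - 2\lambda m)/(6\lambda m)$ into the quantity $l = \alpha - 2\lambda m r - \lambda m = \eps - 3\lambda m r - 2\lambda m$ yields $l = (\eps - 2\lambda m)/2$, which is strictly positive exactly under the hypothesis $\eps > 2\lambda m$. Hence $T = mn/l$ is finite and the preconditions needed in the analysis of $\br$ are satisfied.

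Second, I will invoke \Cref{equilcondition}: at the end of the best-response dynamics simulated inside $\br$, the flow $f$ is an $(\alpha + \lambda m r + \lambda m)$-approximate equilibrium. By the definition of $\alpha$ this bound equals exactly $\eps$, so $f$ is an $\eps$-approximate equilibrium. The claim relating $\br$ to $\exflow$ then guarantees that the path each player reconstructs via $\exflow((s_i,d_i),\{U_e\},\alpha,r)$ is exactly her path in $f$, so the coordinated action profile is an $\eps$-approximate equilibrium flow.

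Third, I will bound the coordination complexity. The broadcast consists of $\{U_e\}_{e\in E}$. By \Cref{counteracc} each $U_e$ has bit-length $O((\|\tau_e\|_0/r)\log(Tn))$. To bound $\sum_e \|\tau_e\|_0$, I will split into initialization (each player's initial path contributes at most $m$ non-zero updates, for a total of at most $mn$) and switches (each switch sends non-zero updates only to counters of edges in $f\ui \triangle \hat f\ui$, of which there are at most $2m$; and by the potential-function argument from the preceding lemmas the total number of switches is at most $T = mn/l$). Summing gives $\sum_e \|\tau_e\|_0 = \tilde O(m^2 n/l)$, so the total broadcast is $\tilde O(m^2 n/(lr))$. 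Substituting $l = (\eps-2\lambda m)/2$ and $r = (\eps-2\lambda m)/(6\lambda m)$ produces the claimed bound.

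The substantive work has already been done in the lemmas about $\br$; the main remaining obstacle is purely algebraic, namely threading the parameters $\alpha,r,l,T$ through the counter-based compression bound and performing the substitution cleanly enough to land at $\tilde O(\lambda m^2 n/(\eps-2\lambda m)^2)$. The equilibrium guarantee itself falls out of \Cref{equilcondition} once $\alpha + \lambda m r + \lambda m$ is observed to equal $\eps$ by construction.
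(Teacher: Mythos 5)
Your proof follows essentially the same route as the paper's: the equilibrium guarantee is read off from \Cref{equilcondition} after checking that $\alpha + \lambda m r + \lambda m = \eps$ by construction, correctness of each player's decoding comes from the claim relating $\br$ to $\exflow$, and the communication bound comes from the potential-function count of deviations combined with \Cref{counteracc}, exactly as in the paper. One caveat: your final substitution of $l = (\eps-2\lambda m)/2$ and $r = (\eps-2\lambda m)/(6\lambda m)$ into $\tilde O\bigl(m^2 n/(lr)\bigr)$ actually yields $\tilde O\bigl(\lambda m^3 n/(\eps-2\lambda m)^2\bigr)$ rather than the $m^2$ appearing in the theorem statement; this matches the bound displayed in the paper's own proof (and the $\tilde O(m^3/\eps^2)$ used in the discussion immediately after the theorem), so the extra factor of $m$ is a typo in the theorem statement rather than an error introduced by your argument.
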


\begin{proof}
Given the parameters we choose, the players will end up playing
an $\eps$-approximate equilibrium by~\Cref{equilcondition}.

Next, we bound the encoding length of the output from $\br$. Recall
that $l = \alpha - 2\lambda m r - \lambda m$, and each time a player
makes a deviation in the best-response dynamics, the potential
function decreases by $l$. By the same analysis of~\Cref{equilcondition}, we know
that the total number of deviations that occur in the dynamics (across all
edges) is bounded $T = mn/l$. Since each counter updates its count
only if the count is changed by $r$, the total number of updates
across all counters is bounded by $m T/r$. Also, the total length of
each counter is bounded $(T+1)n$. By Claim~\ref{counteracc}, the output
list of update steps among all counters has encoding length at most
\[
O\left( \frac{m^2n}{(\eps - 3 \lambda m r - 2\lambda m)r} \log\left(\frac{m^2n^2}{l} \right) \right) = \tilde O \left( \frac{\lambda m^3 n}{(\eps - 2\lambda m)^2} \right)
\]
which recovers our stated bound.
\end{proof}

To interpret the bound in~\Cref{flowmain}, consider the case in which the routing game is a \emph{large game} -- in which $n$ is substantially larger than the number of edges $m$, and in which no player has a large influence on the latency of any single edge. In our context, this means the Lipschitz parameter
$\lambda$ of the cost function $c_e$ is small. Since the range of the
cost is normalized to lie between 0 and 1, it is reasonable to assume that $\lambda =
O(1/n)$. Given such a largeness assumption, the result
of~\Cref{flowmain} gives a coordination protocol that coordinates a
$(m/n)$-approximate equilibrium flow with a coordination complexity of
$\tilde O(m^3/\eps^2)$ for any constant $\eps$.

\iffalse
\begin{corollary}
Assume that $\lambda = O(1/n)$ and fix any $\eps > 2\lambda m$. There
exists a coordination protocol that coordinates players to play an
$\eps$-approximate equilibrium with a coordination complexity of
\[
\tilde O \left(  \frac{m^2}{(\eps - )^2} \right)
\]
\end{corollary}
\fi
\begin{remark}
Another application for this coordination technique is the general
allocation problem when the players have \emph{gross substitutes}
preferences. In this problem, there are $n$ players and $m$ types of
goods, and each type of good has a supply of $s$. A coordinator who
knows the preferences of the players can first simulate $m$
simultaneous ascending price auctions, one for each type of good. The
analysis of~\cite{KC82} shows that the prices and the allocation in
the auctions converge to \emph{Walrasian equilibrium}: each buyer is
simultaneously able to buy his most preferred bundle of goods under
the set of prices. We could then have a similar coordination protocol:
compress an approximate version of the ascending auctions dynamics
using $\approxC$, and broadcast a message that allows each player to
reconstruct the price trajectory in the auctions, which can then
coordinate a high-welfare allocation.
\end{remark}

\pagebreak
\bibliographystyle{alpha}

%\bibliography{./refs}
\bibliography{billboard.bbl}

\appendix

\section{A Coordination Protocol For Many-to-One Stable Matchings}
Finally, we present a coordination protocol for the many-to-one stable
matchings problem, which is relatively straightforward. A many-to-one stable
matching problem consists of $k$ schools $U=\{u_1, \ldots , u_k\}$,
each with capacity $C_j$ and $n$ students $S=\{s_1, \ldots ,
s_n\}$. Every student $i$ has a strict preference ordering $\succ_i$
over all the schools, and each school $j$ has a strict preference
ordering $\succ_j$ over the students. It
will be useful for us to think of a school $u$'s ordering over
students $A$ as assigning a unique \emph{score} $\score_u(s) \in \{1,
\ldots , n\}$ to every student, in descending order (for example,
these could be student scores on an entrance exam). Therefore, each
student's private data is $D_i = (\succ_i, \{\score_u(s)\}_u)$.  We
recall the standard notion of stability in a many-to-one matching.

\begin{definition} A matching $\mu:S \rightarrow U\cup \emptyset$ is
  \emph{feasible} and \emph{stable} if:
\begin{enumerate}
\item (Feasibility) For each $u_j \in U$, $|\{i : \mu(a_i) = u_j\}|
\leq C_j$ \label{cond:feasibility}
\item (No Blocking Pairs with Filled Seats) For each $a_i \in A$, and
each $u_j \in U$ such that $\mu(a_i) \neq u_j$, either $\mu(a_i)
\succ_{a_i} u_j$ or for every student $a'_i \in \mu^{-1}(u_j)$, $a'_i
\succ_{u_j} a_i$. \label{cond:filledblocking}
\item (No Blocking Pairs with Empty Seats) For every $u_j \in U$ such
that $|\mu^{-1}(u_j)| < C_j$, and for every student $a_i \in A$ such
that $a_i \succ_{u_j} \emptyset$, $\mu(a_i) \succ_{a_i}
u_j$. \label{cond:emptyblocking}
\end{enumerate}
\end{definition}

A simple way to specify a matching is to specify an admissions threshold for every school -- i.e. a score $\adm(j)$ that represents the minimum score student the school is willing to accept. A set of admissions thresholds $\adm$ defines a matching $\mu_{\adm}$ in the natural way, in which every student enrolls in their most preferred school to which they have been admitted:

\[\mu_{\adm}(i) :=  \arg\max_{\succ_{a_i}} \{u_j
\mid \score_{u_j}(a_i) \geq \adm(j)\}.
\]

A set of admission scores is stable and feasible if the matching it induces is stable and feasible. 

Now we give a coordination protocol to coordinate students to select schools that form 
a stable matching. The protocol crucially relies on the Gale-Shapley
deferred acceptance algorithm --- the coordinator will first simulate
the dynamics of the deferred acceptance algorithm based on the student
profiles, and obtains the stable matching along with the associated
admission scores. Then it suffices for the planner to publish the list
of admission scores so that all the students can coordinate on the
stable matching. We present an score-based deferred acceptance
algorithm in~\Cref{stab}.

\begin{algorithm}[h]
\caption{Deferred Acceptance (with Admission Scores) $\stab(D)$}
 \label{stab}
  \begin{algorithmic}
    \STATE{\textbf{Input:} $n$ students' data $D$ including their
      preferences over the schools and score profiles}
    \INDSTATE{{\bf for} each school $u_j\in U$}
    \INDSTATE[2]{admission score $\adm(j) = n$; temporary enrolled students $\temp(j) = \emptyset$}
    \INDSTATE{{\bf for} each student $s_i \in S$: let $\mu(s_i) = \perp$}

    \INDSTATE{{\bf while} there is some under-enrolled school $u_{j'}$ such that $|\temp(j')| < C_{j'}$ and $\adm(j') > 1$}
    \INDSTATE[2]{$\adm(j') \leftarrow \adm(j') - 1$}
    \INDSTATE[2]{{\bf for } each student $s_i$:}
    \INDSTATE[3]{{\bf if}} $\mu(s_i) \neq \arg\max_{\succ_{a_i}} \{u_j \mid \score_{u_j}(a_i) \geq \adm(j)\}$
    \INDSTATE[4]{{\bf then} $\temp(j) \leftarrow \temp(j) \setminus \{a_i\}$}
    \INDSTATE[4]{$\mu(a_i) =  \arg\max_{\succ_{a_i}} \{u_j \mid \score_{u_j}(a_i) \geq \adm(j)\}$}
    \INDSTATE[4]{$\temp(\mu(a_i)) \leftarrow \temp(\mu(a_i)) \cup \{a_i\}$}
  \STATE{{\bf Output:} the final admission scores $\{\adm(j)\}$}
    \end{algorithmic}
\end{algorithm}

The following is a standard fact about the deferred acceptance
algorithm.

\begin{claim}
The final matching $\mu$ computed from $\stab(D)$ is a stable.
\end{claim}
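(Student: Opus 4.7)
The plan is to verify each of the three conditions for stability in turn, exploiting three structural invariants of $\stab$: admission scores $\adm(j)$ are monotonically non-increasing (starting from $n$ and decreasing by exactly one per update); each student $s_i$ is always matched to her most preferred school whose current threshold she meets; and a school's threshold is lowered only when that school is strictly under-enrolled. Termination is immediate, since each $\adm(j)$ strictly decreases per update and is bounded below by $1$.

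First I would establish feasibility, namely the invariant $|\temp(j)| \leq C_j$ throughout, by induction on iterations of the outer while loop. The key observation is that since each school assigns distinct scores in $\{1,\ldots,n\}$, when $u_{j'}$ lowers $\adm(j')$ by one, exactly one student has $\score_{u_{j'}}(s) = \adm(j')$ after the update, so at most one student becomes newly eligible at $u_{j'}$. Because no other school's threshold changed in this iteration, the arg-max school $\arg\max_{\succ_{a_i}}\{u_j \mid \score_{u_j}(a_i) \geq \adm(j)\}$ can change only for this single newly eligible student. Thus $|\temp(j')|$ grows by at most one; since $u_{j'}$ was under-enrolled at the start of the iteration, it remains within capacity. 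Every other school only (possibly) loses a student, so its capacity constraint is preserved.

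Second, I would rule out blocking pairs with filled seats. Fix a student $a_i$ and school $u_j \neq \mu(a_i)$ with $u_j \succ_{a_i} \mu(a_i)$. Since at termination $\mu(a_i) = \arg\max_{\succ_{a_i}}\{u \mid \score_u(a_i) \geq \adm(u)\}$ and $u_j$ is strictly more preferred, $a_i$ must fail $u_j$'s final threshold, i.e., $\score_{u_j}(a_i) < \adm(j)$. But every enrolled student $a' \in \mu^{-1}(u_j)$ satisfies $\score_{u_j}(a') \geq \adm(j) > \score_{u_j}(a_i)$, which (since higher score means more preferred) yields $a' \succ_{u_j} a_i$, precluding a blocking pair. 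For blocking pairs with empty seats, suppose $|\mu^{-1}(u_j)| < C_j$ at termination. Then the while-loop condition failed on $u_j$, forcing $\adm(j) = 1$, so every student meets $u_j$'s threshold. Since each $a_i$ is matched to her most preferred admitting school and $u_j$ is admitting her, we have $\mu(a_i) \succeq_{a_i} u_j$; under the convention that students prefer any school to being unmatched, this rules out a blocking pair.

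The main subtlety is the single-student-per-iteration structure underwriting feasibility, which relies crucially on the distinctness of scores within each school's ranking; once this is in hand, the remainder is the classical school-proposing deferred-acceptance argument, specialized to the score/threshold representation.
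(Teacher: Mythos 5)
The paper does not prove this claim at all---it is asserted as ``a standard fact about the deferred acceptance algorithm''---so there is nothing to compare against; you are supplying the missing argument. Your proof is the standard one and is essentially correct: the three invariants you isolate (thresholds only decrease, each student sits at her $\arg\max$ over currently eligible schools, and only the strictly under-enrolled school's threshold is lowered) are exactly what make the threshold representation work, and your treatment of the two blocking-pair conditions is right. In particular, the observation that a terminal under-enrolled school must have $\adm(j)=1$, hence admits everyone, is the correct way to dispatch the empty-seats condition.

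One caveat on the feasibility step. Your induction assumes that at the start of each while-iteration every student is already placed at her $\arg\max$, so that lowering $\adm(j')$ by one perturbs the eligible set of exactly one student (by distinctness of scores). That is true from the second iteration onward, but as the pseudocode is literally written the base case fails: students are initialized to $\perp$ while up to one student per school is already eligible at the initial thresholds $\adm(j)=n$, so the first pass of the inner loop can move many students at once, and the school $u_{j'}$ whose threshold was just lowered can receive two students in that pass (its score-$n$ and score-$(n-1)$ students), overflowing if $C_{j'}=1$. This is an artifact of the paper's pseudocode rather than of your argument---the intended algorithm clearly performs the student-placement pass once before entering the loop (or starts thresholds at $n+1$)---but your write-up should either state that normalization explicitly or handle the first iteration as a separate base case.
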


Consider the following coordination protocol: the coordinator will
first run $\stab$ and broadcast the set of admission scores, and then
based on the score, each student will just enroll in the favorite
school that she qualifies, that is
 \[ \arg\max_{\succ_{a_i}} \{u_j
\mid \score_{u_j}(a_i) \geq \adm(j)\}.
\]
Note that the set of admission scores can be encoded with at most $O(k
\log(n))$ bits.

\begin{theorem}
  There exists a coordination protocol with coordination complexity of
  $O(k \log{n})$ that coordinates the students to coordinate on a
  stable matching.
\end{theorem}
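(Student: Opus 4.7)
The plan is to verify the two-part claim for the described protocol: that the encoding length is $O(k \log n)$ and that the induced joint action is a stable matching. The protocol is already constructed in the text, so the proof is essentially a sanity check that decoding is well-defined and reproduces the output of $\stab$.

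First I would handle the coordination complexity bound. The coordinator broadcasts the vector $(\adm(1), \ldots, \adm(k))$. Each admission score lies in $\{1, \ldots, n\}$ and so can be encoded using $\lceil \log_2(n+1) \rceil$ bits; the concatenation over the $k$ schools gives a message of length $O(k \log n)$, which meets the stated bound.

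Next I would check that the decoding function $\pi$ is well-defined using only each student's private data $D_i = (\succ_i, \{\score_{u_j}(s_i)\}_{j=1}^k)$ and the broadcast message. Given the admission scores, student $s_i$ can locally compute $\{u_j : \score_{u_j}(s_i) \geq \adm(j)\}$ from her own score profile, and then pick the $\succ_i$-maximal element of this set. This is exactly the quantity $\arg\max_{\succ_{a_i}}\{u_j \mid \score_{u_j}(a_i) \geq \adm(j)\}$ appearing in the algorithm, so the joint decoded action is precisely $\mu_{\adm}$.

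Finally I would argue stability. By the claim preceding the theorem, the matching $\mu$ maintained by $\stab(D)$ at termination is stable and feasible. Inspecting the inner loop of $\stab$, each student $s_i$'s assignment $\mu(s_i)$ is updated to $\arg\max_{\succ_{a_i}}\{u_j \mid \score_{u_j}(a_i) \geq \adm(j)\}$ every time any $\adm(j)$ changes, and no further updates occur after $\stab$ terminates. Consequently the terminal matching maintained by $\stab$ coincides with $\mu_{\adm}$ for the final admission vector, which in turn equals the outcome of the coordination protocol. Stability of the induced matching therefore follows immediately from the stated correctness of deferred acceptance.

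There is no real obstacle: the only thing to watch is that the decoding step uses only local data, and this is transparent from the definition of $D_i$. The proof proposal is thus essentially a one-paragraph verification, and the theorem follows.
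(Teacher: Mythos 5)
Your proposal is correct and follows exactly the paper's approach: the paper's ``proof'' is just the protocol description itself (broadcast the final admission scores from the score-based deferred acceptance algorithm, each student decodes by enrolling in her $\succ_i$-maximal qualifying school), with stability delegated to the standard claim about deferred acceptance and the $O(k\log n)$ bound following from encoding $k$ scores in $\{1,\ldots,n\}$. Your additional verification that the decoded matching $\mu_{\adm}$ coincides with the terminal matching maintained inside $\stab$ is a reasonable sanity check that the paper leaves implicit.
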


\section{Missing Proofs from~\Cref{sec:lower}}
First, let us first fix some notations. Given any two random variables
$X$ and $Y$, we will write $I(X : Y)$ to denote the mutual
information, $\KL(X:Y)$ to denote the Kullback Leibler divergence, and
$\delta(X, Y)$ to denote the total variation distance between the
random variables. We denote the Shannon entropy of a random variable
$Z$ by $H(Z)$.

\ranindex*

\begin{proof}
%% Fix the sequence of sets $S_1, S_2, \ldots , S_t$, each with $k$
%% elements. Consider the sequence of random variables
%% \[
%% W = (W_1, W_2, \ldots , W_t),
%% \]
%% which correspond to the set of $t$ special elements in the sets, and
%% each $W_i$ is chosen uniformly at random from $S_i$.
Fix a protocol where the coordinator broadcasts $\ell$ bits in the worst 
case. We will show that $\ell$ must be large.
Let $M(I)$ be the (random) message that Alice sends to Bob based on
her input $I$. Let $S(I) = \langle S_1, S_2, \ldots, S_t \rangle$ and
$u(I)=\langle u_1,u_2,\ldots,u_t \rangle$.  Then, $S(I)$, $u(I)$ and
$M(I)$ are random variables. Fix a value $S$ for $S(I)$ and so that
conditioned on ``$S(I)=S$'' the the probability of success is at least
$p$ (such an $S$ must exist). Then,
\[
I[u(I) :  M(I)] = H[M(I)  - H(M(I) \mid u(I)) \leq H(M(I)) \leq \Expectation[|M(I)|]
\leq \ell.
\]
On the other hand, since $u_1, u_2, \ldots, u_t$ are independent (conditioned
on ``$S(I)=S$''), we have
\[
\ell \geq I[u(I) : M(I)] \geq I[u_1 : M(I)] + \cdots + I[u_t : M(I)],
\]
implying that
\[
\Expectation_{i} \left[  I[u_i : M(I)]\right] \leq \frac{\ell}{t}.
\]
Let $u_{iz}$ denote the random variable whose distribution is the same
as that of $u_i$ when conditioned on the event $M(I)=z$, and let the
distance between the distributions of $u_{iz}$ and $u_i$ be
\[ \delta(u_{iz}, u_i) := \sum_{w \in S_i} |\Pr[u_{iz}=w] -\Pr[u_i=w]|.\]
Then, we have
\begin{align*}
I[u_i : M(I)] =& \Expectation_z \left[\KL(u_{iz} : u_i) \right] & \mbox{(from definition)}\\
\geq & (2 \log e) \Expectation_z \left[\left(\delta(u_{iz}, u_i)\right)^2 \right] & \mbox{(by Pinsker's inequality)} \\
\geq & (2\log e) \left(\Expectation_z\left[\delta(u_{iz} , u_i) \right]\right)^2.
& \mbox{(by Jensen's inequality)}
\end{align*}
With one more
application of Jensen's inequality, we obtain the following
\[
\Expectation_{i} \left[I[u_i : M(I)]\right] \geq
(2\log e) \Expectation_{i} \left[ \left(\Expectation_z\left[\delta(u_{iz} : u_i)
    \right] \right)^2 \right] \geq
(2\log e) \left(\Expectation_{i} \left[ \Expectation_z\left[\delta(u_{iz} : u_i)
    \right] \right]\right)^2.
\]
Suppose Bob succeeds in guessing the special element of $S_i$ with
probability $p_i$; then $\Expectation_i[p_i] = p$. 
Furthermore, we have $\delta(u_{iz}, u_i) \geq
2|p_i- 1/k|$. Thus, by Jensen's inequality, we have
\[
\left(\Expectation_{i} \left[ \Expectation_z\left[\delta(u_{iz} : u_i)
    \right] \right]\right)^2 \geq 4(p - 1/k)^2
\]
It follows that $\ell /t \geq (8\log e) (p-1/k)^2$, that is, $\ell
\geq (8 \log e) t(p-1/k)^2$.
\end{proof}

\sampledmatching*
\begin{proof}
For $w \in W'$, let $d_w$ be the number of copies of good $w$ that
have been assigned in the matching $M^*$. Then $\sum_w d_w \geq
b\OPT'/\rho$.  If $d_w \geq 1$, then we have the following
estimate for the probability that $w$ is chosen.
\begin{align}
\Pr[\mbox{$w$ is chosen}]
&\geq d_w \frac{1}{b} \left(1 - \frac{1}{b}\right)^{d_w -1}\\
& = \frac{d_w}{b} \left(1 + \frac{1}{b-1}\right)^{-(d_w -1)}\\
&\geq \frac{d_w}{b} \exp(-(d_w-1)/(b-1)) & \mbox{(because $1+x \leq e^x$)}\\
&\geq \frac{d_w}{eb}. & \mbox{(because $d_w \leq b$)}
\end{align}
Let $M'$ be the sampled matching in $G'$. Then by linearity of
expectation:
\[
\Expectation[|M^*|]  \geq \sum_{w \in W'} \frac{d_w}{eb}\\
\geq \frac{b \OPT'}{e b \rho} \geq \frac{\OPT'}{3\rho}.
\]
This completes our proof.
\end{proof}

\jr{I have tried to avoid some applications of Markov's
  inequality. Please check}

\section{Missing Proofs from~\Cref{sec:convex}}

\dualprimal*

\begin{proof}
\iffalse
By the definition of $(x\ub, \lambda\ub)$, we have
\[
\cL(x\ub, \lambda\ub) = v'(x\ub) - \sum_{j=1}^k\lambda_j\left( \sum_{i=1}^n c_j\ui\left( x\ui \right) - b_j \right) = \OPT'.
\]\fi

Fixing any $x$, the function $\cL(x, \lambda)$ is $\sqrt nk$-Lipschitz
with respect to $\ell_2$ norm because for any $\lambda, \lambda'$
\[
\|\cL(x, \lambda) - \cL(x, \lambda')\| = \left\|\sum_{j=1}^k  (\lambda_j - \lambda_j') \sum_{i=1}^n c_j\ui\left( x\ui \right)  \right\| \leq \|\lambda - \lambda'\| \left\|\sum_{i=1}^n c_j\ui\left( x\ui \right)\right\| \leq
 \|\lambda -\lambda'\| \sqrt{nk}
\]
By the property of Lipschitz functions we know that $g(\lambda)
= \max_{x\in \cF} \cL(x, \lambda)$ is also
$\sqrt{nk}$-Lipschitz.\sw{cite!} Since $\|\lambda\ub
- \hat \lambda\| \leq \alpha$, we can then bound
\[
\|\cL(x\ub, \hat \lambda) - \cL(x\ub, \lambda\ub)\| \leq \alpha \sqrt{nk},
\]
and also
\[
\|\cL(x\ub ,  \lambda\ub) - \cL(\hat x, \hat \lambda)\| = \|g(\lambda\ub) - g(\hat\lambda)\| \leq \alpha \sqrt{nk}.
\]
It follows that \[
\|\cL(x\ub, \hat\lambda) - \cL(\hat
x, \hat\lambda)\|\leq 2\alpha\sqrt nk.\] Note that fixing
$\hat\lambda$, the function $\cL(x, \hat\lambda)$ is $\eta$-strongly
concave in $x$. Since $\hat x$ is a maximizer of $\cL(\cdot, \hat\lambda)$, we have
\[
\|\hat x - x\ub \|^2 \leq \frac{2}{\eta}\left( \cL(\hat x,\hat \lambda) - \cL(x\ub, \hat\lambda) \right) \leq 4\alpha\sqrt nk / \eta
\]
Therefore, we must have $\|\hat x - x\ub\| \leq \frac{2 \sqrt{\alpha}
(nk)^{1/4}}{\sqrt \eta}$.
\end{proof}

\matchingviolation*
\begin{proof}
Based on the relation between $\ell_1$ and $\ell_2$ norm, we have
\[
\min_{x\in \cF} \|x - \hat x\|_1 \leq \sqrt{nk}\min_{x\in \cF} \|x - \hat x\| \leq
\sqrt{nk} \eps
\]
This means
\[
\sum_{j=1}^k \left( \sum_{i=1}^n \hat x_{i,j} - b_j \right)_+ \leq \sqrt{nk}\eps
\]
Note that for each good $j$ and any $\delta \in (0,1)$, we have from
Chernoff-Hoeffding bound that
\[
\Pr\left[\sum_{i=1}^n x'_{i,j}  > (1+\delta) \sum_{i=1}^n \hat x_{i,j}\right] <
\exp\left( -\delta^2 n / 3\right).
\]
Let $X_j = \sum_{i=1}^n \hat x_{i,j}$. If we set $\beta / k =
\exp\left( -\delta^2 n / 3\right)$, then by union bound we have the
following except with probability $\beta$
\[
\mbox{for all }j \in [k], \qquad \sum_{i=1}^n x'_{i,j} \leq  \left(1 + \sqrt{3\log(k/\beta)/X_j}\right) X_j
\]
Also, $v(\hat x) = \|\hat x\|_1$ because for any $(i,j)$ such that
$v_{i,j} = 0$, we must have $\hat x_{i,j} = 0$, otherwise the player
$i$ could increase the regularized objective value by having $\hat
x_{i,j} = 0$ in~\Cref{br}. It follows that
\[
\sum_{j=1}^k\left(\sum_{i=1}^n x'_{i,j} - X_j\right) \leq \sum_{j=1}^k \sqrt{3\log(k/\beta) X_j} \leq \sqrt{3k\log(k/\beta) \|\hat x\|_1} = \sqrt{3k\log(k/\beta)\hat V}
\]
Therefore,
\[
\sum_{j=1}^k \left( \sum_{i=1}^n x'_{i,j} - b_j \right)_+ \leq \sqrt{nk}\eps + \sqrt{3k \log(k/\beta) \hat V},
\]
which recovers the stated bound.
\end{proof}

\mainconvex*

\begin{proof}
We instantiate our coordination mechanism for linearly separable convex programs with $\eta = \eps =
1/(100n^3k^3)$, and round the solution to $x'$.  Applying our previous lemmas, we get that with probability at least $1- \beta$:
\begin{align*}
\sum_{j=1}^k \min\left\{\sum_{i=1}^n v_{i,j} x'_{i,j}, b_j\right\}
&\geq \hat V - \log(4/\beta)\sqrt{\hat V} - \sqrt{nk}\eps - \sqrt{3k\log(2k/\beta)\hat V}\\
&\geq \OPT - n(\eps + \eta)- \log(4/\beta)\sqrt{\hat V} - \sqrt{nk}\eps - \sqrt{3k\log(2k/\beta)\hat V}\\
&\geq \OPT - n(\eps + \eta)- \left(\log(4/\beta) + \sqrt{3k\log(2k/\beta)}\right)\sqrt{\OPT + n(\eps + \eta)} - \sqrt{nk}\eps \\
&\geq \OPT - n(\eps + \eta) - 4\sqrt{k}\log(2k/\beta) \sqrt{\OPT + n(\eps + \eta)} - \sqrt{nk}\eps\\
&\geq \OPT - 8\sqrt{k}\log(2k/\beta) \sqrt{\OPT}
\end{align*}
which recovers the stated bound. Note that the coordination complexity
of this mechanism is $O(k\log(nk))$ by~\Cref{noconstants}.
\end{proof}

\end{document}